\documentclass[11pt]{article}
\usepackage{graphicx} 

\usepackage{style}
\usepackage{authblk}

\newcommand{\blind}{0}

\title{Testing hypotheses via orthogonalization}
\if\blind1 {\author{\vspace{-6ex}}\date{\vspace{-6ex}}
} \else{
\author[1]{Ameer Dharamshi}
\author[1]{Runjia Zou}
\author[1,2]{ and Daniela Witten}
\affil[1]{Department of Biostatistics, University of Washington}
\affil[2]{Department of Statistics, University of Washington}
}\fi

\begin{document}

\maketitle

\begin{abstract} 
Classical hypothesis testing frameworks break down in contemporary settings in which null hypotheses are increasingly abstract, the same data are used to both generate and test hypotheses, and minimal assumptions about the underlying data are made. In this work, we propose a new framework for conducting valid hypothesis tests in broad contexts. We propose to add and subtract external noise generated from a symmetric shift-family to our data, $X$, to partition it into two pieces, $\Xt{1}$ and $\Xt{2}$. We provide a generic strategy for orthogonalizing $\Xt{2}$ against $\Xt{1}$ under the null hypothesis $H_0$, then show that testing whether the orthogonalization was successful provides a valid test of $H_0$ under mild assumptions. 
Remarkably, this framework extends naturally to the post-selection inference setting: we simply select a hypothesis on $\Xt{1}$, then perform orthogonalization under the selected null. As our approach neither requires pre-specification of the selection mechanism, nor is  restricted to a small class of data-generating distributions, it dramatically expands the settings for which valid post-selection inference can be conducted. We showcase the flexibility of our proposal in several case studies involving challenging pre-specified null hypotheses and post-selection inference scenarios. 

\textbf{Keywords}: hypothesis testing, orthogonalization, post-selection inference, randomization, unsupervised learning
\end{abstract}

\section{Introduction}
\label{sec:intro}

Hypothesis testing has a rich history in the literature of statistics, dating back to foundational work by \cite{fisher1970statistical, fisher1966design, fisher1956statistical}, \cite{wald1950statistical}, \cite{neyman1952lectures}, \cite{neyman2023joint}, and many others; see \cite{lehmann2005testing} for a comprehensive overview of classical hypothesis testing. 
In this paper, we present an entirely different approach to hypothesis testing: we make use of recent ideas from post-selection inference to provide a very general hypothesis testing framework that can be applied to either pre-specified or adaptively-selected hypotheses.

Formally, suppose that we observe $n$ independent random variables $X_i\sim F(\eta_i(\theta^*))$  for $i=1,\dots,n$, where $F: H \rightarrow \mathcal{X}$ is some distributional family parameterized by $\eta_i$, which is itself parameterized by a fixed unknown parameter $\theta^*\in\Theta$. The dimension of $\theta^*$ does not grow with $n$, but  is unrestricted: it could even be the entire distribution function of $X_i$.

Suppose that our goal is to test the hypothesis $H_0:\theta^*\in\Theta_0$ where $\Theta_0\subset\Theta$; for now, $H_0$ is pre-specified. We begin by decomposing each $X_i$ into $\Xt{1}_i$ and $\Xt{2}_i$ by adding and subtracting a particular kind of independent external randomness. Specifically, we let $W_i$ denote noise generated from a \emph{symmetric shift-family centred at zero}, and let $\Xt{1}_i=X_i+W_i$ and $\Xt{2}_i=X_i-W_i$. In this paper, we will provide a generic strategy for \emph{orthogonalizing} $\Xt{2}_i$ against $\Xt{1}_i$ under $H_0$. We will then show that to test $H_0$ it is sufficient to test whether the orthogonalization was successful. This can be accomplished by testing whether a particular moment is zero.

The key to our proposal involves the aforementioned orthogonalization step: our construction of $W_i$ leads to a general expression for the conditional mean of $\Xt{2}_i$ given $\Xt{1}_i$ that involves a ratio of two expectations, each of which can be computed with either a simple Monte Carlo simulation or, in some cases, with  sample means. Consequently, this  orthogonalization strategy --- and hence, our new approach to testing --- is broadly applicable in a wide range of settings, and specifically leads to a very straightforward test in cases where alternative approaches require careful derivation or analysis. For example, we show in Section \ref{subsec:twosample} that orthogonalization allows us to immediately derive a nonparametric test for a difference in distribution between two samples, a topic of recent interest for which far more involved procedures have also been proposed \citep{hore2026distribution}. 

It turns out that our orthogonalization approach is deeply related to recent work in the selective inference literature, and in particular, to the ideas of \emph{data thinning} \citep{neufeld2023data, dharamshi2023generalized} and \emph{data fission} \citep{leiner2022data}. Data thinning seeks to decompose a random variable $X_i$ into two independent components $\Xt{1}_i$ and $\Xt{2}_i$ so that a hypothesis can be selected on the basis of $\Xt{1}_i$ and tested using $\Xt{2}_i$. However, independence between $\Xt{1}_i$ and $\Xt{2}_i$ requires stringent distributional assumptions. 
Data fission instead seeks to decompose $X_i$ into dependent components $\Xt{1}_i$ and $\Xt{2}_i$. As before, $\Xt{1}_i$ is used to select a hypothesis, but now inference must be conducted using the conditional distribution of $\Xt{2}_i$ given $\Xt{1}_i$. However, data fission is also restricted to parametric settings, and moreover in practice, the inference step is not tractable outside of a small number of special cases \citep{dharamshi2024decomposing, neufeld2024discussiondatafissionsplitting}.  
In this paper, we also construct $\Xt{1}_i$ and $\Xt{2}_i$ that are dependent, but in such a way that $\Xt{2}_i$ can be orthogonalized with respect to $\Xt{1}_i$, \emph{regardless of the distribution of $X_i$}.

Remarkably, it turns out that orthogonalization is actually sufficient for hypothesis testing in settings with either pre-specifed \emph{or} data-driven hypotheses. Specifically, suppose that we decompose $X_i$ into $\Xt{1}_i$ and $\Xt{2}_i$ using symmetric shift-family noise $W_i$, and  select a null hypothesis $H_0(\xt{1})$ that is a function of the realized data $\xt{1}$. Then, to test $H_0(\xt{1})$, it is again sufficient to orthogonalize $\Xt{2}_i$ with respect to $\Xt{1}_i$ \emph{under the selected null $H_0(\xt{1})$}, and then to test if a simple moment condition is equal to zero. Furthermore, \emph{it turns out that this strategy enables valid tests of data-driven hypotheses in settings where no solution was previously available}. For instance, in Section \ref{subsec:clustering}, we will show that we can test for a difference in distribution between subgroups of data identified using clustering algorithms. As pointed out by \cite{gao2020selective}, classical tests fail in this setting as they do not account for the fact that the subgroups are selected from the data; they also show that sample splitting \citep{cox1975note} is  not a viable solution. Existing strategies for conducting inference after clustering are only available if the data are Gaussian \citep{gao2020selective, chen2022selective, yun2023selective}, or belong to a distributional family that admits a tractable data thinning or data fission procedure. As we will see, our proposal provides a pathway towards inference after clustering in far more complex data scenarios.

The remainder of this paper is organized as follows. In Section \ref{sec:framing}, we  describe the process by which an arbitrary null hypothesis can be converted into a relatively simple orthogonality moment condition. Section \ref{sec:cmean} offers a test of this moment. Section \ref{sec:sel} extends our proposal to the post-selection inference setting. Sections \ref{sec:cases} and \ref{sec:data} illustrate our proposal in several simulated case studies and in an application to single-cell RNA sequencing data, respectively. We then conclude with a discussion in Section \ref{sec:discussion}. Proofs of technical results are deferred to the supplementary materials.

\section{Recasting $H_0$ as a test of a moment condition}
\label{sec:framing}

Suppose that we observe $n$ independent random variables $X_i\sim F(\eta_i(\theta^*))$  for $i=1,\dots,n$. We assume that $F: H \rightarrow \mathcal{X}$ is some $p$-dimensional continuous or count-valued distribution with finite second moment parameterized by $\eta_i(\theta^*)$, which in turn is parameterized by a fixed unknown $\theta^*\in\Theta$. Here $\eta_i: \Theta\rightarrow H$ is a known mapping that subsets or transforms $\theta^*$ to allow for heterogeneity among $X_1,\ldots,X_n$; for instance, if the observations belong to two distinct populations, then $\theta^*$ may represent the parameters for both populations, and $\eta_i$ subsets $\theta^*$ to the parameters of the population to which the  $i$th observation belongs. Aside from the second moment condition, we place no restrictions on $F$ or on the dimensionality of $\theta^*$; that is, $F$ could be a parametric family with parameter $\theta^*$, or a nonparametric distribution whose distribution function is $\theta^*$. We denote the probability density/mass function of $F$ as $f_X$, and assume that it is known up to the unknown parameter $\theta^*$. This is not a restrictive assumption, as in the nonparametric case, $\theta^*$ fully characterizes the distribution. Lastly, we write expectations taken with respect to $F(\eta_i(\theta^*))$ as $\E_{\eta_i(\theta^*)}[\cdot]$.
To make this notation concrete, we illustrate it in a parametric and nonparametric setting in Examples \ref{ex:zip} and \ref{ex:np},  respectively; we will re-visit these examples in Section \ref{sec:cases}.

\begin{exmp}
\label{ex:zip}
Suppose that $X_i\overset{iid}\sim \text{ZIP}(\lambda^*,\pi^*)$ where $\text{ZIP}(\lambda,\pi)$ indicates the zero-inflated Poisson distribution with unknown rate $\lambda$ and unknown zero-inflation parameter $\pi$. Using the notation $X_i\sim F(\eta_i(\theta^*))$, $F$ refers to the family of zero-inflated Poisson distributions with both parameters unknown, $\theta^*=(\lambda^*,\pi^*)$, and $\eta_i(\theta^*)=\theta^*$ (i.e., $\eta_i(\cdot)$ is the identity function since the data are independent and identically distributed).
\end{exmp}

\begin{exmp}
\label{ex:np}
Consider a nonparametric two-sample problem in which $X_i\overset{iid}\sim P$ for $i\le n/2$ and $X_i\overset{iid}\sim Q$ for $i>n/2$. Here $P$ and $Q$ are $p$-dimensional continuous distributions with finite second moment. Using the notation $X_i\sim F(\eta_i(\theta^*))$, $F$ refers to the family of all $p$-dimensional continuous distributions with finite second moment, $\theta^*=(\theta_P,\theta_Q)$ where $\theta_P$ and $\theta_Q$ are the distribution functions of $P$ and $Q$, respectively, and $\eta_i(\theta^*)$ returns $\theta_P$ when $i\le n/2$ and $\theta_Q$ otherwise.
\end{exmp}

Our goal is to test some hypothesis about $\theta^*$, which we write as $H_0:\theta^*\in\Theta_0$ where $\Theta_0 \subset \Theta$. For now, we assume that $H_0$ is pre-specified; we return to the task of testing data-driven hypotheses in Section \ref{sec:sel}. As discussed in the introduction, for an arbitrary $H_0$, identifying a suitable test statistic with a convenient null distribution may be a laborious task requiring careful theoretical derivation. In this paper, we take a different approach: rather than hoping that there exists some convenient function of the data that can be used to test $H_0$, we \emph{create} an orthogonality structure in the data that can be used to test an arbitrary $H_0$. Specifically, our strategy is to decompose $X$ into two folds, $\Xt{1}$ and $\Xt{2}$, in such a way that when $H_0$ is true, $\Xt{2}$ can be orthogonalized with respect to $\Xt{1}$. Then, to test $H_0$, it is sufficient to test whether we have successfully  orthogonalized $\Xt{2}$ with respect to $\Xt{1}$.

In the remainder of this section, we discuss in detail the process of reformulating $H_0$ into a test of orthogonality. We begin with Algorithm \ref{alg:split}, which injects a particular kind of external noise into a realized dataset $x$ to produce two folds, $\xt{1}$ and $\xt{2}$. We emphasize that the user specifies the $p$-dimensional symmetric shift-family distribution $R(\phi, \Sigma)$ along with its variance $\Sigma$. 

\begin{algorithm}
\label{alg:split}
\textcolor{white}{.}

Input: Observed data $x=(x_1,\dots,x_n)$ drawn from $X_i\overset{ind}{\sim} F(\eta_i(\theta^*))$, and a user-specified $p$-dimensional symmetric shift-family distribution $R(\phi,\Sigma)$ with mean $\phi$, user-specified variance $\Sigma$, and density $f_R(\cdot;\phi,\Sigma)$. 
\begin{enumerate}
    \item For $i=1,\dots,n$, generate one realization $w_i$ from $W_i\overset{iid}{\sim} R(0,\Sigma)$ independently of $x$.
    \item Define $\Xt{1}_i=X_i+W_i$ and $\Xt{2}_i=X_i-W_i$, then compute $\xt{1}_i=x_i+w_i$ and $\xt{2}_i=x_i-w_i$.
    \item Return $\xt{1}=(\xt{1}_1,\dots,\xt{1}_n)$ and $\xt{2}=(\xt{2}_1,\dots,\xt{2}_n)$.
\end{enumerate}
\end{algorithm}

\begin{remark}
\label{rem:Rdist}
The ideas in this section in principle hold for any choice of $R$ with finite second moment. However, as we will see in Section \ref{sec:cmean}, the use of symmetric shift-family noise in Algorithm \ref{alg:split} will lead to useful structure in the joint distribution of $(\Xt{1}_i,\Xt{2}_i)$. If $X$ is continuous, we suggest choosing $R$ to be Gaussian and $\Sigma$ to be diagonal. If $X$ is count-valued, we suggest that each coordinate of $R$ should follow either a Skellam distribution with equal parameters (i.e., the difference between two independent and identically distributed Poisson random variables) or a discrete uniform distribution. 
The value of $\Sigma$ is user-specified, and all of our results hold for any choice of $\Sigma$, though its value has implications for power; see Section \ref{subsec:power}.
\end{remark}

\begin{remark}
Algorithm \ref{alg:split} resembles randomization strategies common in the post-selection inference literature; e.g., \cite{rasines2021splitting}. We discuss this connection in Section \ref{sec:sel}.
\end{remark}

Next, recall that the conditional mean of a random variable $A$ given a random variable $B$ is defined as the projection of $A$ onto the space of measurable functions of $B$ \citep{van2000asymptotic}. That is, $\E[A|B]=\arg\min_{g}\E[(A-g(B))^2]$. Proposition \ref{prop:orthogonality} states the well-known orthogonality property of conditional means.

\begin{proposition}
\label{prop:orthogonality}
Let $\E[A|B]$ denote the conditional mean of $A$ given $B$. For every function $g$,
\begin{equation}
\label{eq:orth}
\E\left[(A-\E[A|B])g(B)\right]=0.
\end{equation}
\end{proposition}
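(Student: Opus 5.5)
The plan is to derive the orthogonality property directly from the defining properties of conditional expectation: the tower property and "taking out what is known." The statement implicitly requires $g(B)$ to be such that $(A-\E[A|B])g(B)$ is integrable. We will take $g$ to range over measurable functions with $\E[g(B)^2]<\infty$, which is the setting in which the projection characterization $\E[A|B]=\arg\min_g \E[(A-g(B))^2]$ is meaningful. We also assume $\E[A^2]<\infty$, which holds in our application since $F$ and $R$ have finite second moments. By Cauchy--Schwarz, $(A-\E[A|B])g(B)$ is then integrable, so all expectations below are well defined.

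The first step is to write
\[
\E\left[(A-\E[A|B])g(B)\right] = \E\left[\E\left[(A-\E[A|B])g(B)\mid B\right]\right]
\]
by the law of iterated expectations. Since $g(B)$ and $\E[A|B]$ are both $\sigma(B)$-measurable, they can be pulled out of the inner conditional expectation. This gives $g(B)\left(\E[A|B]-\E[A|B]\right)=0$ almost surely, and taking the outer expectation yields \eqref{eq:orth}.

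Alternatively, to stay faithful to the projection definition quoted just before the proposition, I would give the variational argument. Let $h=\E[A|B]$ be the minimizer. For any square-integrable $g$ and any $t\in\mathbb{R}$, $h+tg$ is again a measurable function of $B$. So the function
\[
\psi(t)=\E[(A-h(B)-tg(B))^2] = \E[(A-h(B))^2]-2t\,\E[(A-h(B))g(B)]+t^2\E[g(B)^2]
\]
is minimized at $t=0$. Since $\psi$ is a quadratic in $t$, its derivative at zero must vanish, which forces $\E[(A-h(B))g(B)]=0$.

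The only real obstacle is the integrability bookkeeping: making sure that the expansion of $\psi$ and the pull-out step are legitimate. This is exactly where the square-integrability of $A$ and $g(B)$ enters. I would state the proposition's implicit restriction to such $g$ explicitly; vector-valued $A$ is handled coordinatewise.
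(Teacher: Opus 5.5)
Your proof is correct; both arguments work under the integrability conditions you impose. The paper itself gives no proof of this proposition. It defines $\E[A|B]$ as the $L^2$ projection $\arg\min_g \E[(A-g(B))^2]$ and then invokes orthogonality as a well-known fact.

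Your variational argument is exactly the justification the paper leaves implicit: the first-order condition $\psi'(0)=0$ for that projection. Your tower-property argument is the standard measure-theoretic alternative, and it is slightly more general. It needs only $\E|A|<\infty$ and integrability of the product $(A-\E[A|B])g(B)$, not the full $L^2$ structure. The variational route, in turn, connects directly to the definition the paper actually states.

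Your explicit restriction to square-integrable $g(B)$ is a legitimate correction to the paper's literal ``for every function $g$'', which is too strong as written. The restriction costs nothing in the paper's applications: $F$ and $R$ are assumed to have finite second moments, and the test functions used there (coordinates, norms, indicators) are square-integrable.
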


Why are Algorithm \ref{alg:split} and Proposition \ref{prop:orthogonality} useful?
Consider any function $h(\xt{1}_i,\Theta_0)$ of the following form:
\begin{equation}
\label{eq:hcases}
h(\xt{1}_i,\Theta_0)=\begin{cases}
    \E_{\eta_i(\theta^*)}[\Xt{2}_i|\Xt{1}_i=\xt{1}_i] &\text{if }\,\theta^*\in\Theta_0\text{, i.e., if }H_0\text{ holds}, \\
    k(\xt{1}_i)&\text{otherwise}
\end{cases}
\end{equation}
where $\E_{\eta_i(\theta^*)}[\Xt{2}_i|\Xt{1}_i=\xt{1}_i]=\E_{\eta_i(\theta^*)}[X_i-W_i|X_i+W_i=\xt{1}_i]$ is taken with respect to the conditional distribution implied by $X_i\sim F(\eta_i(\theta^*))$ and $W_i\sim R(0,\Sigma)$, and $k(\xt{1}_i)$ is some function designed to \emph{not} equal $\E_{\eta_i(\theta^*)}[\Xt{2}_i|\Xt{1}_i=\xt{1}_i]$ when $\theta^*\not\in\Theta_0$.

If $H_0$ is true, then $h(\xt{1}_i,\Theta_0)$ \emph{must satisfy the orthogonality relationship outlined in Proposition \ref{prop:orthogonality}}. This fact motivates the following orthogonality hypothesis: 
\begin{equation}
\label{eq:H0prime}
H_0'(g): \E_{\eta_i(\theta^*)}\left[\left(\Xt{2}_i-h\left(\Xt{1}_i,\Theta_0\right)\right)g(\Xt{1}_i)^\top\right]=0 \quad \forall i=1,\dots,n,
\end{equation}
where $g$ is a real $q$-dimensional test function to be specified by the user; that is, $g:\mathcal{X}^{(1)}\rightarrow\mathbb{R}^q$ where $\mathcal{X}^{(1)}$ is the sample space of $\Xt{1}$ (and $\Xt{2}$). The null hypothesis $H_0'(g)$ states that $h$ successfully orthogonalizes $\Xt{2}$ against $\Xt{1}$. Then, since $H_0$ implies $H_0'(g)$ by construction, it follows from a contrapositive argument that we should reject $H_0$ when we reject $H_0'(g)$. That is, \emph{we can test $H_0$ by testing $H_0'(g)$}. Moreover, as we will see in Section \ref{sec:cmean}, for $\Xt{1}$ and $\Xt{2}$ constructed using Algorithm \ref{alg:split}, $h(\xt{1}_i,\Theta_0)$ can often be reliably estimated, leading to a straightforward test for $H_0'(g)$. 

\begin{remark}
If $X_i$ are independent and identically distributed under $H_0$, \eqref{eq:H0prime} simplifies to 
$$
H_0'(g): \E\left[\left(\Xt{2}-h\left(\Xt{1},\Theta_0\right)\right)g(\Xt{1})^\top\right]=0.
$$
\end{remark}

Algorithm \ref{alg:test} consolidates these ideas into a testing procedure  (see also  Figure \ref{fig:schematic}), and  Theorem \ref{thm:t1e} establishes its Type I error control.

\begin{algorithm}[Testing hypotheses via orthogonalization]
\label{alg:test}
\textcolor{white}{.}

Input: Observed data $x_i$ drawn from $X_i\overset{ind}{\sim} F(\eta_i(\theta^*))$ for $i=1,\dots,n$;  a user-specified $p$-dimensional symmetric shift-family distribution $R(\phi,\Sigma)$ with mean $\phi$, user-specified variance $\Sigma$, and density $f_R(\cdot;\phi,\Sigma)$; a null hypothesis $H_0:\theta^*\in\Theta_0$ where $\Theta_0 \subset \Theta$;   and a test function $g:\mathcal{X}^{(1)}\rightarrow\mathbb{R}^q$. 
\begin{enumerate}
    \item Construct $\xt{1}_i$ and $\xt{2}_i$ using Algorithm \ref{alg:split} with inputs $x_1,\dots,x_n$ and $R(\phi,\Sigma)$. 
    \item Test the hypothesis $H_0'(
    g)$ defined in \eqref{eq:H0prime}. 
\end{enumerate}
\end{algorithm}

\begin{theorem}[Validity of Algorithm \ref{alg:test}]
\label{thm:t1e}
In the context of Algorithm \ref{alg:test}, suppose that $\vartheta_\alpha(\Xt{1},\Xt{2})$ is a valid test for $H_0'(g)$ in \eqref{eq:H0prime} at significance level $\alpha$, in the sense that if $H_0'(g)$ is true, then for any $\alpha\in(0,1)$,
$$
P(\vartheta_\alpha(\Xt{1},\Xt{2}) = 1) \le \alpha.
$$
Then $\vartheta_\alpha(\Xt{1},\Xt{2})$ is also a valid test of $H_0$.
\end{theorem}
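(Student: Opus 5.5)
The plan is to show that $H_0$ implies $H_0'(g)$, and then transfer the validity of $\vartheta_\alpha$ by containment of the events on which we reject. Fix any $\theta^*\in\Theta_0$. The probability $P(\vartheta_\alpha(\Xt{1},\Xt{2})=1)$ is computed under the joint law of $(X,W)$, where $X_i\overset{ind}{\sim}F(\eta_i(\theta^*))$ and $W_i\overset{iid}{\sim}R(0,\Sigma)$ independently of $X$. It therefore suffices to check that, under this same joint law, the moment condition \eqref{eq:H0prime} holds for every $i$. Once that is established, the hypothesis of the theorem gives $P(\vartheta_\alpha=1)\le\alpha$ directly. Since $\theta^*\in\Theta_0$ was arbitrary, $\sup_{\theta^*\in\Theta_0}P_{\theta^*}(\vartheta_\alpha=1)\le\alpha$, which is exactly validity for $H_0$.

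The key step is the implication $H_0\Rightarrow H_0'(g)$. When $\theta^*\in\Theta_0$, the definition \eqref{eq:hcases} gives $h(\Xt{1}_i,\Theta_0)=\E_{\eta_i(\theta^*)}[\Xt{2}_i\mid\Xt{1}_i]$. We apply Proposition \ref{prop:orthogonality} with $A=\Xt{2}_i$ and $B=\Xt{1}_i$, and with each coordinate $g_j$ of $g$ in turn (and each coordinate of $\Xt{2}_i$ if $p>1$). This yields $\E[(\Xt{2}_i-h(\Xt{1}_i,\Theta_0))g(\Xt{1}_i)^\top]=0$ entrywise, for each $i$. These expectations are well defined: $X_i$ and $W_i$ have finite second moments, so $\Xt{2}_i$ is square integrable and so is its conditional mean. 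We also need $g(\Xt{1}_i)$ to be square integrable (or at least such that the product is integrable). This is implicit in $H_0'(g)$ being meaningful, and I would state it as the standing convention.

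The only real subtlety is bookkeeping rather than mathematics. The statement ``$\vartheta_\alpha$ is valid for $H_0'(g)$'' must be read as: for every data-generating law of $(\Xt{1},\Xt{2})$ under which the moment condition holds, the rejection probability is at most $\alpha$. Under that reading, the laws induced by $\theta^*\in\Theta_0$ form a subset of the $H_0'(g)$ null, and the conclusion follows by monotonicity. I expect the main obstacle to be the integrability of $g(\Xt{1}_i)$, which should be stated explicitly as an assumption on $g$. Otherwise the argument is the short contrapositive already sketched in the text.
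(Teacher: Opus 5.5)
Your proposal is correct and matches the paper's proof, which is the one-line observation that $H_0$ implies $H_0'(g)$ by construction: Proposition \ref{prop:orthogonality} applied to the first case of \eqref{eq:hcases} gives this, so validity of $\vartheta_\alpha$ for $H_0'(g)$ transfers to $H_0$. Your extra remarks are sensible clarifications the paper leaves implicit: the integrability of $g(\Xt{1}_i)$, and reading validity as a bound over every law satisfying the moment condition.
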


Type I error control of $H_0$ in Theorem \ref{thm:t1e} stems from the fact that when $H_0$ holds, $H_0'(g)$ must also hold. This is, however, not an ``if and only if" statement. If $H_0$ fails, then $H_0'(g)$ may not necessarily fail, impacting power. Power arises from the choices of $g$ and $\Sigma$; we discuss this in Section \ref{subsec:power}.

\begin{figure}[h]
\centering
\includegraphics[width=.9\linewidth]{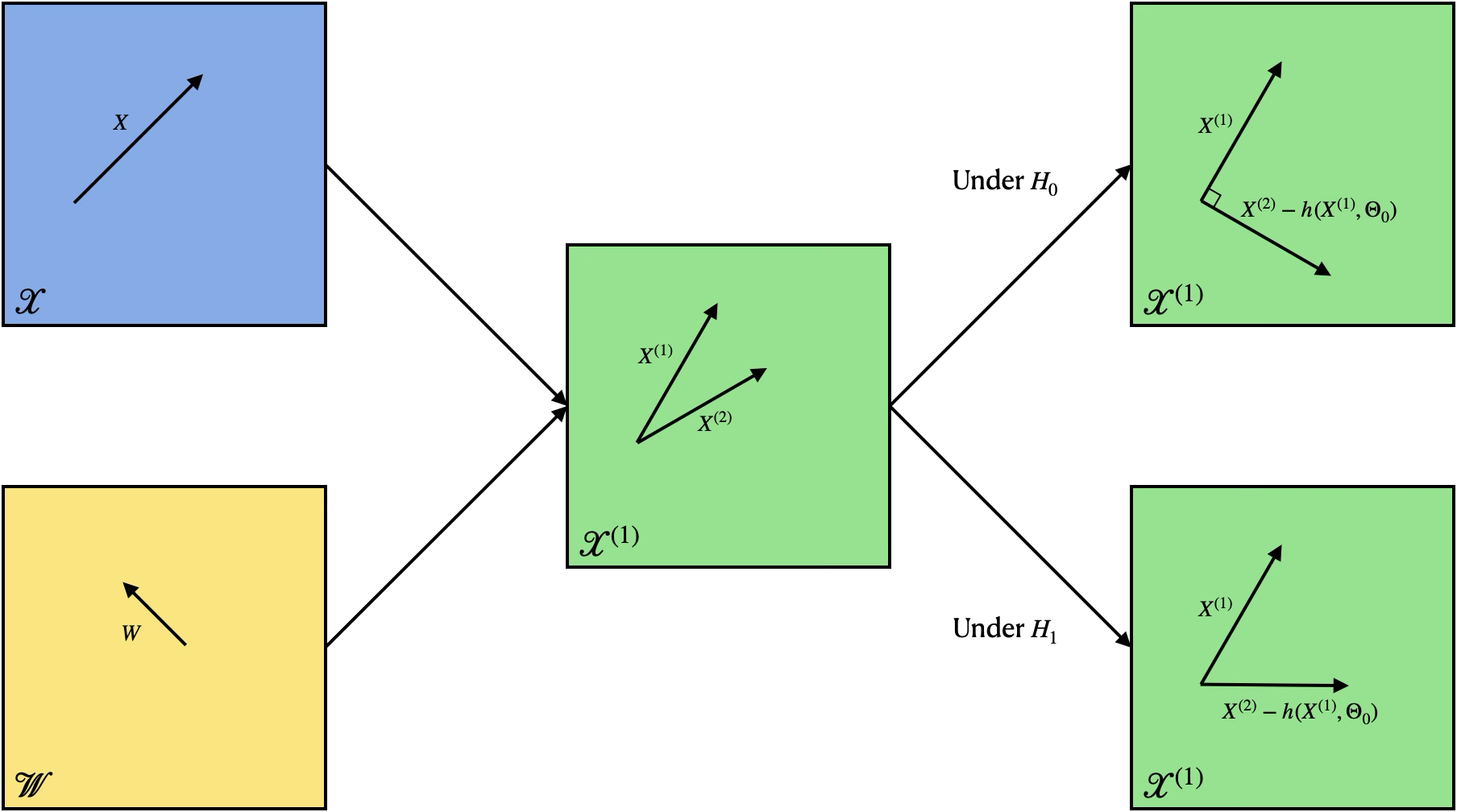}
\caption{A visual depiction of the procedure described in Section \ref{sec:framing}. We begin with data $X$ (upper-left) and generate independent noise $W$ (bottom-left). We then form $\Xt{1}$ and $\Xt{2}$ by adding and subtracting $W$ from $X$ (centre). Finally, if $H_0$ is true, then subtracting $h(\Xt{1},\Theta_0)$ in \eqref{eq:hcases} from $\Xt{2}$ will orthogonalize $\Xt{2}$ with respect to $\Xt{1}$ (upper-right), whereas if $H_0$ is false, subtracting $h(\Xt{1},\Theta_0)$ from $\Xt{2}$ will fail to orthogonalize $\Xt{2}$ with respect to $\Xt{1}$ (bottom-right).}
\label{fig:schematic}
\end{figure}

\section{Testing the orthogonality hypothesis $H_0'(g)$}
\label{sec:cmean}

The cornerstone of our proposal is Step 2 of Algorithm~\ref{alg:test}, in which we convert the potentially challenging problem of testing a null hypothesis $H_0$ into a test of the orthogonality moment condition $H_0'(g)$ in \eqref{eq:H0prime}.
In this section, we develop a two-stage procedure for testing $H_0'(g)$. Our key insight is that testing $H_0'(g)$ amounts to testing whether a population moment equals zero in the presence of an unknown nuisance function. We first construct an estimator $\hat h_n(\xt{1}_i,\Theta_0)$ of $h(\xt{1}_i,\Theta_0)$ by exploiting properties of the joint distribution of $(\Xt{1}_i,\Xt{2}_i)$ induced by the use of symmetric shift-family noise in Algorithm \ref{alg:split}. We then design a test statistic using the sample analog of \eqref{eq:H0prime} in which $\hat h_n(\xt{1}_i,\Theta_0)$ replaces $h(\xt{1}_i,\Theta_0)$. Finally, we derive the asymptotic distribution of our test statistic using ideas from semiparametric theory to account for estimation uncertainty stemming from $\hat h_n(\xt{1}_i,\Theta_0)$.

\begin{remark}
Testing moment conditions such as \eqref{eq:H0prime} in the presence of nuisance parameters/functions is a task found throughout statistics; two examples include the classical J-tests for over-identification in the context of instrumental variables \citep{hansen1982large} as well as more recent work on conditional independence testing \citep{shah2020hardness}. Our approach to testing $H_0'(g)$ takes inspiration from this rich literature, though it differs in important respects. Most fundamentally, our target, $H_0$, is not necessarily a moment condition natively; rather, our proposal manufactures the moment condition in $H_0'(g)$ as a means of easing the burden of testing $H_0$. As a consequence of the fact that we are building $H_0'(g)$ using user-selected $R$ and $g$, thoughtful choices will yield considerable simplifications in the theoretical analysis that follows.
\end{remark}

\subsection{Specifying an estimator for $h(\xt{1}_i,\Theta_0)$}

In general, an analyst will not have \emph{a priori} access to a function $h(\xt{1}_i,\Theta_0)$ satisfying \eqref{eq:hcases}; rather, it must be estimated. This is a nontrivial task, as  $h(\xt{1}_i,\Theta_0)$ serves two contrasting purposes: 
\begin{enumerate}
    \item[\emph{P1:}] \emph{When $\theta^*\in\Theta_0$, $h(\xt{1}_i,\Theta_0)$ orthogonalizes $\Xt{2}_i$ against $\Xt{1}_i$.} This is the key to  Type I error control (see Theorem \ref{thm:t1e}). 
    \item[\emph{P2:}] \emph{When $\theta^*\not\in\Theta_0$, $h(\xt{1}_i,\Theta_0)$ fails to orthogonalize $\Xt{2}_i$ against $\Xt{1}_i$}.  The residual covariance between $\Xt{2}_i-h(\xt{1}_i,\Theta_0)$ and $\Xt{1}_i$ provides power to reject $H_0'(g)$ (see Section~\ref{subsec:power}).
\end{enumerate}
In practice, we never know whether $\theta^*\in\Theta_0$, and therefore must devise a single unified construction that guarantees Type I error control (i.e., \emph{P1}), whilst  simultaneously offering as much power as possible (i.e., \emph{P2}).

\begin{remark}
\label{rem:orth}
    At  first glance, it may be tempting to construct $h(\xt{1}_i,\Theta_0)$ by  regressing $\Xt{2}_i$ on $\Xt{1}_i$ using some flexible nonparametric regression method. Unfortunately, this will orthogonalize $\Xt{2}_i$ against $\Xt{1}_i$ under both the null $\theta^*\in\Theta_0$ \emph{and under the alternative} $\theta^*\not\in\Theta_0$, thereby eliminating all power to reject $H_0'(g)$. 
\end{remark}

Our choice of symmetric shift-family noise in Algorithm \ref{alg:split} creates structure in the conditional mean function $\E_{\eta_i(\theta^*)}[\Xt{2}_i|\Xt{1}_i=\xt{1}_i]$ (i.e., the first case of $h(\xt{1}_i,\Theta_0)$ in \eqref{eq:hcases}) that offers a path forward. 

\begin{proposition}
\label{prop:ratio}
Consider $\Xt{1}_i$ and $\Xt{2}_i$ constructed from $X_i$ and $W_i$ using Algorithm \ref{alg:split}; thus, $W_i\overset{iid}{\sim} R(0,\Sigma)$ for a symmetric shift-family $R$. Then, 
$$
\E_{\eta_i(\theta^*)}[\Xt{2}_i|\Xt{1}_i=\xt{1}_i] = 2\frac{N(\xt{1}_i)}{D(\xt{1}_i)}-\xt{1}_i,
$$
where $N(\xt{1}_i)$ and $D(\xt{1}_i)$ can be written as expectations taken with respect to a random variable $U_i\sim R(\xt{1}_i,\Sigma)$ that follows the distribution of the user-added noise $W_i$ shifted by $\xt{1}_i$:
\begin{equation}
N(\xt{1}_i) = \E_{U_i}[U_if_{X}(U_i;\eta_i(\theta^*))] \quad\text{and}\quad D(\xt{1}_i) = \E_{U_i}[f_{X}(U_i;\eta_i(\theta^*))], \label{eq:Uratio}
\end{equation}
where $f_{X}(\cdot;\eta_i(\theta^*))$ is the probability density/mass function of $X_i\sim F(\eta_i(\theta^*))$. 

Alternatively, they can be written as expectations taken with respect to $X_i$:
\begin{equation}
N(\xt{1}_i) = \E_{\eta_i(\theta^*)}[X_if_R(X_i;\xt{1}_i,\Sigma)] \quad\text{and}\quad D(\xt{1}_i) = \E_{\eta_i(\theta^*)}[f_R(X_i;\xt{1}_i,\Sigma)], \label{eq:Xratio}
\end{equation}
where $f_R(\cdot;\xt{1}_i,\Sigma)$ is the density/mass function of $U_i$.
\end{proposition}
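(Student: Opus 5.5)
The plan is to reduce everything to the conditional mean of $X_i$ given $\Xt{1}_i$. Since $\Xt{2}_i = X_i - W_i = 2X_i - (X_i+W_i) = 2X_i - \Xt{1}_i$, linearity of conditional expectation gives
$$
\E_{\eta_i(\theta^*)}[\Xt{2}_i\mid\Xt{1}_i=\xt{1}_i] = 2\,\E_{\eta_i(\theta^*)}[X_i\mid\Xt{1}_i=\xt{1}_i]-\xt{1}_i .
$$
It therefore suffices to show that $\E[X_i\mid\Xt{1}_i=\xt{1}_i]=N(\xt{1}_i)/D(\xt{1}_i)$, with $N$ and $D$ given by either \eqref{eq:Uratio} or \eqref{eq:Xratio}.

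The second step is to write down the joint law of $(X_i,\Xt{1}_i)$. Because $W_i$ is independent of $X_i$ and has density (or mass function) $f_R(\cdot;0,\Sigma)$, the pair has joint density $f_X(x;\eta_i(\theta^*))\,f_R(\xt{1}-x;0,\Sigma)$ with respect to Lebesgue or counting measure, as appropriate. Two properties of $R$ now do the work:
\begin{itemize}
\item the shift-family property gives $f_R(\xt{1}-x;0,\Sigma)=f_R(x-\xt{1};0,\Sigma)$ after using symmetry about zero;
\item the shift-family property also gives $f_R(x-\xt{1};0,\Sigma)=f_R(x;\xt{1},\Sigma)$.
\end{itemize}
The joint density is thus $f_X(x;\eta_i(\theta^*))\,f_R(x;\xt{1},\Sigma)$. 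Bayes' rule then gives the conditional density of $X_i$ given $\Xt{1}_i=\xt{1}$ as this product divided by its integral (or sum) over $x$. Consequently,
$$
\E[X_i\mid\Xt{1}_i=\xt{1}]=\frac{\int x\, f_X(x)\,f_R(x;\xt{1},\Sigma)\,dx}{\int f_X(x)\,f_R(x;\xt{1},\Sigma)\,dx}.
$$

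The last step is to read this single ratio in the two ways the statement requires. Grouping $f_X(x)\,dx$ as the law of $X_i$ gives the numerator $\E_{\eta_i(\theta^*)}[X_if_R(X_i;\xt{1},\Sigma)]$ and the denominator $\E_{\eta_i(\theta^*)}[f_R(X_i;\xt{1},\Sigma)]$, which is \eqref{eq:Xratio}. Grouping $f_R(x;\xt{1},\Sigma)\,dx$ instead as the law of $U_i\sim R(\xt{1},\Sigma)$ gives $\E_{U_i}[U_if_X(U_i)]$ and $\E_{U_i}[f_X(U_i)]$, which is \eqref{eq:Uratio}. The numerator is finite because $X_i$ has a finite second moment and, for the standard choices of $R$, $f_R$ is bounded.

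The main obstacle is measure-theoretic bookkeeping rather than any single hard step. First, both $X_i$ and $W_i$ must live on a common dominating measure (both continuous, or both lattice-valued), so that the convolution density and the change of variables are legitimate; this holds under the pairings suggested in Remark \ref{rem:Rdist}. Second, $D(\xt{1}_i)>0$ must hold, which I will secure by restricting to $\xt{1}_i$ in the support of $\Xt{1}_i$, where the marginal density equals $D$. The identification of the ratio with the conditional expectation defined as an $L^2$ projection then holds almost surely, which is all that is needed.
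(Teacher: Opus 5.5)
Your proposal is correct and follows the paper's proof essentially step for step: reduce to $\E[X_i\mid\Xt{1}_i]$ via $\Xt{2}_i=2X_i-\Xt{1}_i$, write the joint density as $f_X(x)f_R(\xt{1}_i-x;0,\Sigma)$, use symmetry and the shift structure to rewrite it as $f_X(x)f_R(x;\xt{1}_i,\Sigma)$, and read the resulting ratio as expectations over either $U_i$ or $X_i$. The paper merely packages the middle step as a conditional-mean formula for a general $s(X_i)$ (Lemma \ref{lem:cexp_generic}), which it reuses later. Your added remarks on the common dominating measure and on $D(\xt{1}_i)>0$ are sound points that the paper leaves implicit.
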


Let $\hat N_n(\xt{1}_i)$ and $\hat D_n(\xt{1}_i)$ denote estimates of $N(\xt{1}_i)$ and $ D(\xt{1}_i)$ \emph{subject to the constraints imposed by $H_0: \theta^* \in \Theta_0$}; we discuss three possible options for constructing such estimates using \eqref{eq:Uratio} and \eqref{eq:Xratio} in Section \ref{subsec:estimation}. In light of Proposition \ref{prop:ratio}, we propose the following function $\hat h_n(\xt{1}_i,\Theta_0)$ as an estimate of $h(\xt{1}_i,\Theta_0)$: 
\begin{equation}
\label{eq:hstar}
\hat h_n(\xt{1}_i,\Theta_0) = 2\frac{\hat N_n(\xt{1}_i)}{\hat D_n(\xt{1}_i)}-\xt{1}_i.
\end{equation}

Recall that the function $h(\xt{1}_i,\Theta_0)$ must accomplish both \emph{P1} and \emph{P2} above: that is, it must orthogonalize $\Xt{2}$ against $\Xt{1}$ when $H_0$ holds, but not under the alternative. The function \eqref{eq:hstar} is attractive as it accomplishes both of these tasks \emph{without  knowledge of whether $H_0$ holds}.  The key is that $\hat N_n(\xt{1}_i)$ and $\hat D_n(\xt{1}_i)$ are estimated subject to the constraints imposed by $H_0$. Intuitively, if $H_0: \theta^* \in \Theta_0$ holds and if $\hat N_n(\xt{1}_i)$ and $\hat D_n(\xt{1}_i)$ are ``good enough" estimators of $N(\xt{1}_i)$ and $D(\xt{1}_i)$, respectively, then $\hat h_n(\xt{1}_i,\Theta_0)$ will successfully approximate $\E_{\eta_i(\theta^*)}[\Xt{2}_i|\Xt{1}_i=\xt{1}_i]$, and therefore will (approximately) orthogonalize $\Xt{2}_i$ against $\Xt{1}_i$, as required by \emph{P1}. If instead $\theta^*\not\in\Theta_0$, then since  $\hat N_n(\xt{1}_i)$ and $\hat D_n(\xt{1}_i)$ are estimated under $H_0: \theta^* \in \Theta_0$, they may be poor estimates of $N(\xt{1}_i)$ and $D(\xt{1}_i)$, respectively, in which case $\hat h_n(\xt{1}_i,\Theta_0)$ will fail to orthogonalize $\Xt{2}_i$ against $\Xt{1}_i$, as required by \emph{P2}.

\subsection{Testing $H_0'(g)$}
\label{subsec:test}

Let $M_1,\dots,M_K$ denote a random partition of the integers $\{1,\dots,n\}$ into $K$ approximately equally-sized subsets. Consider the following cross-fit sample average of outer products,
\begin{equation}
C_n(\xt{1},\xt{2};\Theta_0)=\frac{1}{n}\sum_{k=1}^K\sum_{i\in M_k}\left(\xt{1}_i+\xt{2}_i-2\frac{\hat N_n^{(-k)}(\xt{1}_i)}{\hat D_n^{(-k)}(\xt{1}_i)}\right)g(\xt{1}_i)^\top, 
\label{eq:crossfit}
\end{equation}
where $\hat N_n^{(-k)}(\xt{1}_i)$ and $\hat D_n^{(-k)}(\xt{1}_i)$ are estimates of $N(\xt{1}_i)$ and $D(\xt{1}_i)$, respectively, computed using only the observations $x_{i'}$ with $i'\not\in M_k$ \citep{chernozhukov2022locally}. 
$C_n(\xt{1},\xt{2};\Theta_0)$ can be understood as the sample analog of the expectation in \eqref{eq:H0prime}, with an out-of-sample variant of $\hat h_n(\xt{1}_i,\Theta_0)$ in \eqref{eq:hstar} used in place of $h(\xt{1}_i,\Theta_0)$. Taking inspiration from \cite{diciccio2017robust},  we will use the studentized squared $\ell_2$-norm of $C_n(\xt{1},\xt{2};\Theta_0)$ as a test statistic for  $H_0'(g)$. That is, we define 
\begin{equation}
\label{eq:Tn}
T_n(\xt{1},\xt{2};\Theta_0,\hat\Psi_n)=n\vect{C_n(\xt{1},\xt{2};\Theta_0)}^\top\hat\Psi_n^{-1}\vect{C_n(\xt{1},\xt{2};\Theta_0)},
\end{equation} 
where $\hat\Psi_n$ is an estimate of the asymptotic covariance of $\sqrt{n}\vect{C_n(\xt{1},\xt{2};\Theta_0)}$ that will be specified shortly. We will show that under mild conditions, $T_n(\xt{1},\xt{2};\Theta_0,\hat\Psi_n)$ asymptotically follows a $\chi_{pq}^2$ distribution, where $p$ is the dimension of $X$ and $q$ is the dimension of $g(\cdot)$. 

We begin our study of $T_n(\xt{1},\xt{2};\Theta_0,\hat\Psi_n)$ with Lemma \ref{lem:ASL}, which provides conditions under which $C_n(\Xt{1},\Xt{2};\Theta_0)$ is asymptotically linear. Our primary requirement is that $\hat N_n^{(-k)}(\xt{1}_i)$ and $\hat D_n^{(-k)}(\xt{1}_i)$ are asymptotically linear estimators of $N(\xt{1}_i)$ and $D(\xt{1}_i)$, respectively. 

\begin{lemma}
\label{lem:ASL}
Suppose that $H_0'(g)$ is true and that $\hat N_n^{(-k)}(s)$ and $\hat D_n^{(-k)}(s)$ are asymptotically linear estimators of $N(s)$ and $D(s)$ with influence functions $\varphi_N(\xt{1},\xt{2};s)$ and $\varphi_D(\xt{1},\xt{2};s)$, respectively. That is, for all $s\in\mathcal{X}^{(1)}$ and $k\in 1,\dots,K$, we can write:
\begin{align*}
\hat N_n^{(-k)}(s) - N(s) &= \frac{1}{n-|M_k|}\sum_{i\not\in M_k} \varphi_N(\xt{1}_i,\xt{2}_i;s) + o_P(n^{-1/2}), \\
\hat D_n^{(-k)}(s) - D(s) &= \frac{1}{n-|M_k|}\sum_{i\not\in M_k} \varphi_D(\xt{1}_i,\xt{2}_i;s) + o_P(n^{-1/2}).
\end{align*}

Then $C_n(\xt{1},\xt{2};\Theta_0)$ is also asymptotically linear with influence function
\small
\begin{equation} \label{eq:varphi-C}
\varphi_C(\xt{1}_i,\xt{2}_i)= \left(\xt{1}_i+\xt{2}_i-2\frac{N(\xt{1}_i)}{D(\xt{1}_i)}\right)g(\xt{1}_i)^\top -2\int\varphi_{ND}(\xt{1}_i,\xt{2}_i;\xt{1})g(\xt{1})^\top dP(\xt{1}),
\end{equation}
\normalsize
where $\varphi_{ND}(\xt{1}_i,\xt{2}_i;\cdot)=\left\{\varphi_N(\xt{1}_i,\xt{2}_i;\cdot)-\frac{N(\cdot)}{D(\cdot)}\varphi_D(\xt{1}_i,\xt{2}_i;\cdot)\right\}\frac{1}{D(\cdot)}$.
\end{lemma}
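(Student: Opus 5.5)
We write $C_n$ as the oracle average plus a perturbation from estimating $N/D$, linearize the ratio, and show that the resulting double sum reduces to a single average of the integrated influence function. Write $r(s)=N(s)/D(s)$ and $\hat r^{(-k)}(s)=\hat N_n^{(-k)}(s)/\hat D_n^{(-k)}(s)$. Then
\[
C_n=\frac1n\sum_{i=1}^n\bigl(\xt{1}_i+\xt{2}_i-2r(\xt{1}_i)\bigr)g(\xt{1}_i)^\top-\frac2n\sum_{k=1}^K\sum_{i\in M_k}\bigl(\hat r^{(-k)}(\xt{1}_i)-r(\xt{1}_i)\bigr)g(\xt{1}_i)^\top .
\]
The first term is already an i.i.d.-type average of the first piece of $\varphi_C$ in \eqref{eq:varphi-C}. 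Under $H_0'(g)$ this piece has mean zero, since $\xt{1}_i+\xt{2}_i-2r(\xt{1}_i)=\xt{2}_i-h(\xt{1}_i,\Theta_0)$ by Proposition~\ref{prop:ratio} and \eqref{eq:hstar}. All the work is in the second term.

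\textbf{Step 1: linearize the ratio.} A first-order expansion of $(a,b)\mapsto a/b$ at $(N(s),D(s))$ gives
\[
\hat r^{(-k)}(s)-r(s)=\frac{1}{D(s)}\Bigl\{(\hat N^{(-k)}_n-N)(s)-r(s)(\hat D^{(-k)}_n-D)(s)\Bigr\}+O\bigl((\hat N-N)^2+(\hat D-D)^2\bigr).
\]
The remainder is $O_P(n^{-1})=o_P(n^{-1/2})$ by the assumed $\sqrt n$-rates, provided $D$ is bounded away from zero. Inserting the asymptotic linear expansions gives
\[
\hat r^{(-k)}(s)-r(s)=\frac{1}{n-|M_k|}\sum_{j\notin M_k}\varphi_{ND}(\xt{1}_j,\xt{2}_j;s)+o_P(n^{-1/2}).
\]

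\textbf{Step 2: handle the cross-fit double sum.} The second term becomes
\[
-\frac{2}{n}\sum_k\frac{1}{n-|M_k|}\sum_{i\in M_k}\sum_{j\notin M_k}\varphi_{ND}(\xt{1}_j,\xt{2}_j;\xt{1}_i)g(\xt{1}_i)^\top .
\]
For fixed $k$, this is a two-sample U/V-statistic with kernel $\kappa(j,i)=\varphi_{ND}(\cdot_j;\xt{1}_i)g(\xt{1}_i)^\top$, where $i$ and $j$ lie in disjoint folds and so are independent. We apply its Hoeffding decomposition. Influence functions are mean zero, so $\E[\kappa\mid j\text{th obs}]\ne0$ in general while $\E[\kappa\mid i\text{th obs}]=\int \E[\varphi_{ND}(\cdot;s)]g(s)^\top dP(s)=0$. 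The projection onto $j$ is therefore $\int\varphi_{ND}(\xt{1}_j,\xt{2}_j;s)g(s)^\top dP(s)$, the projection onto $i$ vanishes, and the degenerate remainder has variance $O(1/(|M_k|(n-|M_k|)))$, which is $o_P(n^{-1/2})$ after scaling.

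\textbf{Step 3: count the weights.} The fold-$k$ contribution is
\[
-\frac{2|M_k|}{n(n-|M_k|)}\sum_{j\notin M_k}\int\varphi_{ND}(\cdot_j;s)g(s)^\top dP(s).
\]
With $|M_k|\approx n/K$, each $j$ appears in $K-1$ folds, each with weight $\frac{1/K}{n(1-1/K)}=\frac{1}{n(K-1)}$. The total weight on each $j$ is therefore $1/n$, which yields the second term of $\varphi_C$.

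\textbf{Main obstacle.} The delicate point is that the expansions in the hypothesis are stated pointwise in $s$, but $s=\xt{1}_i$ is random and we sum over $i$. This is harmless here because $\xt{1}_i$ with $i\in M_k$ is independent of the fold used to fit $\hat N^{(-k)}_n,\hat D^{(-k)}_n$. I would condition on fold $k$ and argue that the averaged remainder $\frac1n\sum_{i\in M_k}o_P(n^{-1/2})$ stays $o_P(n^{-1/2})$. This needs a uniformity or $L_2(P)$ version of the remainder bound, which I would add as a mild regularity condition. The same conditions, together with $\inf_s D(s)>0$ and $g$ square-integrable, also control the quadratic remainder of Step 1 and the degenerate U-statistic of Step 2.
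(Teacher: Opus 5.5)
Your proposal is correct and follows essentially the same route as the paper: you split off the oracle average, linearize $\hat N_n^{(-k)}/\hat D_n^{(-k)}$ by the delta method to get $\varphi_{ND}$, and count that each observation is out-of-fold in $K-1$ folds with weight $2/(n(K-1))$ each, for a total weight of $2/n$. The only difference is cosmetic: the paper writes the in-fold average as $P+(P_{n,k}-P)$ and dismisses the empirical-process piece by citing the standard cross-fitting argument, whereas your two-sample Hoeffding decomposition works that step out explicitly (your $j$-projection is the paper's $P[\cdot]$ term, and the vanishing $i$-projection plus the degenerate remainder is its $(P_{n,k}-P)$ term); the uniformity issue you flag is equally implicit in the paper, which integrates the pointwise $o_P(n^{-1/2})$ remainder under $P$ without comment.
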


Theorem \ref{thm:Cn} then states that under standard conditions on $\varphi_C(\xt{1}_i,\xt{2}_i)$, $\vect{C_n(\Xt{1},\Xt{2};\Theta_0)}$ is asymptotically Gaussian. We assume that our data are generated as a triangular array, but we suppress indexing by $n$ for notational simplicity. 

\begin{theorem}[Asymptotic Gaussianity of $\sqrt{n}\vect{C_n(\Xt{1},\Xt{2};\Theta_0)}$]
\label{thm:Cn}
In the setting of Lemma \ref{lem:ASL}, suppose that for all $i$,
$\varphi_C(\Xt{1}_i,\Xt{2}_i)$ has finite variance, and define
$$
\Psi_n=\frac{1}{n}\sum_{i=1}^n \Var[\varphi_C(\Xt{1}_i,\Xt{2}_i)].
$$    
Suppose further that for every $\epsilon>0$, 
$$
\sum_{i=1}^n \E\left[\lVert \Psi_n^{-1/2}\varphi_C(\Xt{1}_i,\Xt{2}_i)\rVert^2I\left\{\lVert \Psi_n^{-1/2}\varphi_C(\Xt{1}_i,\Xt{2}_i)\rVert > \epsilon\sqrt{n}\right\}\right]\to 0.
$$ 
It follows that
$$
\sqrt{n}\Psi_n^{-1/2}\vect{C_n(\Xt{1},\Xt{2};\Theta_0)} \overset{D}\to N_{pq}(0,I_{pq}),
$$
where $p$ is the dimension of $X$ and $q$ is the dimension of $g(\cdot)$.
\end{theorem}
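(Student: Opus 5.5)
The proof will combine Lemma \ref{lem:ASL}, a Lindeberg--Feller central limit theorem for triangular arrays of independent vectors, and Slutsky's theorem. By Lemma \ref{lem:ASL}, under $H_0'(g)$ we can write
\begin{equation*}
\sqrt{n}\,\vect{C_n(\Xt{1},\Xt{2};\Theta_0)} = \frac{1}{\sqrt{n}}\sum_{i=1}^n \vect{\varphi_C(\Xt{1}_i,\Xt{2}_i)} + o_P(1).
\end{equation*}
So it suffices to show that $\Psi_n^{-1/2} n^{-1/2}\sum_i \vect{\varphi_C(\Xt{1}_i,\Xt{2}_i)} \overset{D}\to N_{pq}(0,I_{pq})$, and that premultiplying the $o_P(1)$ remainder by $\Psi_n^{-1/2}$ keeps it $o_P(1)$. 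For the latter I will assume, as is implicit in the normalization, that the smallest eigenvalue of $\Psi_n$ is bounded away from zero. The remainder is then bounded in operator norm by $\lVert\Psi_n^{-1/2}\rVert\cdot o_P(1)=o_P(1)$.

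First, I will verify that each summand has mean zero. The first term of $\varphi_C$ is $(\Xt{1}_i+\Xt{2}_i-2N(\Xt{1}_i)/D(\Xt{1}_i))g(\Xt{1}_i)^\top$. Since $\Xt{1}_i+\Xt{2}_i-2N/D = \Xt{2}_i - h(\Xt{1}_i,\Theta_0)$ by Proposition \ref{prop:ratio} and \eqref{eq:hstar}, this term has expectation zero exactly by $H_0'(g)$ in \eqref{eq:H0prime}. The second term is $-2\int\varphi_{ND}(\Xt{1}_i,\Xt{2}_i;s)g(s)^\top dP(s)$. Influence functions of asymptotically linear estimators are mean zero, i.e.\ $\E[\varphi_N(\cdot;s)]=\E[\varphi_D(\cdot;s)]=0$ for each $s$. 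Applying Fubini, which is justified by the finite-variance assumption on $\varphi_C$, the expectation passes inside the integral and gives zero. The summands $\vect{\varphi_C(\Xt{1}_i,\Xt{2}_i)}$ are functions of the independent pairs $(X_i,W_i)$ only, because the integral over $dP(\xt{1})$ has already removed dependence on other observations. They are therefore independent across $i$, with covariances summing to $n\Psi_n$.

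Next, I will apply the Cramér--Wold device. Fix a unit vector $a\in\mathbb{R}^{pq}$ and set $Y_{n,i}=n^{-1/2}a^\top\Psi_n^{-1/2}\vect{\varphi_C(\Xt{1}_i,\Xt{2}_i)}$. These are independent and mean zero, and $\sum_i\Var(Y_{n,i})=a^\top\Psi_n^{-1/2}\Psi_n\Psi_n^{-1/2}a=1$. By Cauchy--Schwarz, $|Y_{n,i}|\le n^{-1/2}\lVert\Psi_n^{-1/2}\varphi_C\rVert$. Hence the Lindeberg sum $\sum_i\E[Y_{n,i}^2 I\{|Y_{n,i}|>\epsilon\}]$ is bounded by $n^{-1}\sum_i\E[\lVert\Psi_n^{-1/2}\varphi_C\rVert^2 I\{\lVert\Psi_n^{-1/2}\varphi_C\rVert>\epsilon\sqrt n\}]$. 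The stated condition has the sum without the $1/n$ factor, so it is stronger than needed and this bound tends to $0$. The Lindeberg--Feller theorem then gives $\sum_iY_{n,i}\overset{D}\to N(0,1)$. Cramér--Wold yields the multivariate limit, and Slutsky absorbs the remainder.

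I expect the main obstacle to be the bookkeeping around the cross-fitting remainder rather than the CLT itself. Lemma \ref{lem:ASL} delivers the $o_P(n^{-1/2})$ remainder, but the normalization by $\Psi_n^{-1/2}$ for a triangular array with $\Psi_n$ possibly varying in $n$ requires the eigenvalue condition. I will state that condition explicitly as the interpretation under which $\Psi_n^{-1/2}$ is well defined and bounded. I will also need to justify the Fubini interchange in the mean-zero step, which I will handle with the finite-variance hypothesis.
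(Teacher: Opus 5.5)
Your proposal is correct and follows the paper's route. The paper's proof is the one-line combination of Lemma~\ref{lem:ASL} with the Lindeberg--Feller CLT, and you supply the details it leaves implicit: mean-zero, independent summands; Cram\'er--Wold; the observation that the stated Lindeberg condition, lacking the $1/n$ factor, is stronger than the standard one; and the need for $\lVert\Psi_n^{-1/2}\rVert=O(1)$ to absorb the $o_P(n^{-1/2})$ remainder, which the paper tacitly assumes.

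One small correction: finite variance of $\varphi_C$ does not by itself license the Fubini interchange in your mean-zero step. That interchange needs $\E\int\lVert\varphi_{ND}(\Xt{1}_i,\Xt{2}_i;s)g(s)^\top\rVert\,dP(s)<\infty$, which is a routine integrability condition but a separate one.
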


\begin{remark}
If $X_i$ are independent and identically distributed under $H_0$, then the classical central limit theorem applies and only the finite variance assumption is necessary \citep{van2000asymptotic}.
\end{remark}

\begin{corollary}
\label{cor:Tn}
In the setting of Theorem \ref{thm:Cn}, let $\hat\Psi_n$ denote a consistent estimator of $\Psi_n$. Then, $T_n(\Xt{1},\Xt{2};\Theta_0,\hat\Psi_n)\overset{D}{\to}\chi^2_{pq}$. 
\end{corollary}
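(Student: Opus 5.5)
Corollary \ref{cor:Tn} follows from Theorem \ref{thm:Cn} by the continuous mapping theorem and Slutsky's lemma. Write $Z_n=\sqrt{n}\Psi_n^{-1/2}\vect{C_n(\Xt{1},\Xt{2};\Theta_0)}$. By Theorem \ref{thm:Cn}, $Z_n\overset{D}\to N_{pq}(0,I_{pq})$. We then rewrite the statistic in \eqref{eq:Tn} as
$$
T_n(\Xt{1},\Xt{2};\Theta_0,\hat\Psi_n)=Z_n^\top \Psi_n^{1/2}\hat\Psi_n^{-1}\Psi_n^{1/2} Z_n .
$$
The remaining task is to show that the middle matrix $A_n=\Psi_n^{1/2}\hat\Psi_n^{-1}\Psi_n^{1/2}$ converges in probability to $I_{pq}$.

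Because the data form a triangular array, $\Psi_n$ may itself vary with $n$. I will therefore interpret ``consistent'' in the natural standardized sense, $\Psi_n^{-1/2}\hat\Psi_n\Psi_n^{-1/2}\overset{P}\to I_{pq}$. This is equivalent to $\hat\Psi_n-\Psi_n\overset{P}\to 0$ whenever the eigenvalues of $\Psi_n$ stay bounded away from $0$ and $\infty$. That eigenvalue condition is implicit in the Lindeberg-type condition of Theorem \ref{thm:Cn}, which already requires $\Psi_n^{-1/2}$ to exist.

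Given this, $A_n=(\Psi_n^{-1/2}\hat\Psi_n\Psi_n^{-1/2})^{-1}$. Matrix inversion is continuous at $I_{pq}$, so the continuous mapping theorem gives $A_n\overset{P}\to I_{pq}$. This also shows that $\hat\Psi_n$ is invertible with probability tending to one, so $T_n$ is well defined on an event of probability tending to one.

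Next, Slutsky's lemma gives joint convergence of $(Z_n,A_n)$ to $(Z,I_{pq})$ with $Z\sim N_{pq}(0,I_{pq})$. The map $(z,A)\mapsto z^\top A z$ is continuous, so the continuous mapping theorem yields
$$
T_n\overset{D}\to Z^\top Z\sim\chi^2_{pq}.
$$

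The only step needing care is turning consistency of $\hat\Psi_n$ into $A_n\overset{P}\to I_{pq}$ when $\Psi_n$ is not fixed. I would handle this through the uniform eigenvalue bounds above: under them, $\|A_n-I\|\le \|\Psi_n\|\,\|\hat\Psi_n^{-1}-\Psi_n^{-1}\|\overset{P}\to0$. In the i.i.d.\ case $\Psi_n\equiv\Psi$ is fixed and positive definite, and the argument is immediate.
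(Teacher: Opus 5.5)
Your argument is correct and follows the paper's route: the paper's proof is a one-line appeal to Slutsky's theorem and the continuous mapping theorem, and you have spelled this out by writing $T_n = Z_n^\top A_n Z_n$ and showing $A_n \overset{P}\to I_{pq}$. One small caveat: the Lindeberg condition in Theorem \ref{thm:Cn} only requires each $\Psi_n$ to be invertible, not to have eigenvalues uniformly bounded away from $0$ and $\infty$, so in the triangular-array case it is your standardized reading of consistency, $\Psi_n^{-1/2}\hat\Psi_n\Psi_n^{-1/2}\overset{P}\to I_{pq}$, that carries the argument rather than any implied eigenvalue bound.
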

\begin{remark}
Suppose that $X_i$ are independent and identically distributed under $H_0$. Then the sample variance of $\hat\varphi_C(\xt{1}_i,\xt{2}_i)$, 
the plug-in estimator of $\varphi_C(\xt{1}_i,\xt{2}_i)$ in \eqref{eq:varphi-C} using $\hat N_n^{(-k)}(\cdot)$ and $\hat D_n^{(-k)}(\cdot)$, is a consistent estimator of $\Psi_n$ for use in Corollary~\ref{cor:Tn}. 
\end{remark}

It follows from Corollary \ref{cor:Tn} that to test $H_0'(g)$, we can compare $T_n(\xt{1},\xt{2};\Theta_0,\hat\Psi_n)$ to the quantiles of the $\chi_{pq}^2$ distribution. Formally, we define 
\begin{equation}
\label{eq:test}
\vartheta_\alpha(\xt{1},\xt{2})=I\left(T_n(\xt{1},\xt{2};\Theta_0,\hat\Psi_n) \ge q_{pq,1-\alpha}\right)
\end{equation}
where $q_{pq,1-\alpha}$ is the $(1-\alpha)$-quantile of the $\chi_{pq}^2$ distribution. Proposition \ref{prop:asym} states that this choice of $\vartheta_\alpha(\Xt{1},\Xt{2})$ asymptotically controls the Type I error of $H_0$. It follows from the fact that $\vartheta_\alpha(\Xt{1},\Xt{2})$ is an asymptotically valid test of $H_0'(g)$ and that $H_0$ implies $H_0'(g)$.

\begin{proposition}
\label{prop:asym}
In the setting of Theorem \ref{thm:Cn}, $\vartheta_\alpha(\Xt{1},\Xt{2})$ as defined in \eqref{eq:test} is an asymptotically valid test of $H_0$ at significance level $\alpha$. That is, if $H_0$ is true, then for any $\alpha\in(0,1)$, 
$$
\limsup_{n\to\infty} P(\vartheta_\alpha(\Xt{1},\Xt{2}) = 1) \le \alpha.
$$
\end{proposition}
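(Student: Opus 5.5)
We will derive Proposition~\ref{prop:asym} by combining Corollary~\ref{cor:Tn} with the reduction from $H_0$ to $H_0'(g)$ that underlies Theorem~\ref{thm:t1e}. The argument has two steps: first show that $H_0$ implies $H_0'(g)$, then show that the rejection probability converges to $\alpha$ under $H_0'(g)$ in the setting of Theorem~\ref{thm:Cn}.

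\textbf{Step 1: $H_0$ implies $H_0'(g)$.} Suppose $\theta^*\in\Theta_0$. By \eqref{eq:hcases}, $h(\xt{1}_i,\Theta_0)=\E_{\eta_i(\theta^*)}[\Xt{2}_i\mid\Xt{1}_i=\xt{1}_i]$. Applying Proposition~\ref{prop:orthogonality} coordinatewise, with $A$ a coordinate of $\Xt{2}_i$ and the test function a coordinate of $g$, gives the moment condition \eqref{eq:H0prime} for every $i$. The expectations are finite because $F$ and $R$ have finite second moments. We must also assume, as is implicit in Theorem~\ref{thm:Cn}, that $g(\Xt{1}_i)$ is square integrable.

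By Proposition~\ref{prop:ratio}, $h(s,\Theta_0)=2N(s)/D(s)-s$, so the centring term in \eqref{eq:varphi-C} satisfies
\[
\xt{1}_i+\xt{2}_i-2N(\xt{1}_i)/D(\xt{1}_i)=\xt{2}_i-h(\xt{1}_i,\Theta_0).
\]
Hence the first term of $\varphi_C$ has mean zero under $H_0$. The second term is a centred integral of influence functions, which also has mean zero. This confirms that we are in the setting of Lemma~\ref{lem:ASL} and Theorem~\ref{thm:Cn} with $H_0'(g)$ true.

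\textbf{Step 2: asymptotic size.} Under $H_0'(g)$ and the assumptions of Theorem~\ref{thm:Cn}, Corollary~\ref{cor:Tn} gives $T_n\overset{D}{\to}\chi^2_{pq}$. The $\chi^2_{pq}$ distribution is continuous, so the Portmanteau theorem applies at the continuity point $q_{pq,1-\alpha}$. Therefore
\[
P\bigl(T_n\ge q_{pq,1-\alpha}\bigr)\to 1-F_{\chi^2_{pq}}(q_{pq,1-\alpha})=\alpha,
\]
and the $\limsup$ equals $\alpha$. This is the asymptotic analogue of Theorem~\ref{thm:t1e}: the test is asymptotically valid for $H_0'(g)$, and $H_0\Rightarrow H_0'(g)$ transfers the guarantee to $H_0$.

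\textbf{Main obstacle.} The delicate point is Step 1's use of $H_0'(g)$ for the \emph{estimated} $\hat h_n$. In $C_n$, the nuisance estimators $\hat N^{(-k)}_n$ and $\hat D^{(-k)}_n$ are computed under the constraint $\Theta_0$. Under $H_0$, they estimate the true $N$ and $D$, so the centring in Lemma~\ref{lem:ASL} is at the true conditional mean. We should state explicitly that the asymptotic linearity hypotheses of Lemma~\ref{lem:ASL} are required at $\theta^*\in\Theta_0$. Everything else follows directly from the earlier results.
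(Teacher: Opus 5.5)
Your proposal is correct and follows the paper's argument: under $H_0$, $H_0'(g)$ holds by the orthogonality property of conditional means, and Corollary~\ref{cor:Tn} together with the Portmanteau lemma at the continuity point $q_{pq,1-\alpha}$ then gives a rejection probability tending to $\alpha$. Your explicit note is a worthwhile addition to what the paper leaves implicit in ``the setting of Theorem~\ref{thm:Cn}'': the null-constrained estimators $\hat N_n^{(-k)}$ and $\hat D_n^{(-k)}$ must be asymptotically linear for the true $N$ and $D$, which is only plausible under $H_0$.
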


\subsection{Estimating $N(\xt{1}_i)$ and $D(\xt{1}_i)$ under $H_0$}
\label{subsec:estimation}

We now discuss how to obtain  asymptotically linear estimators of $N(\xt{1}_i)$ and $D(\xt{1}_i)$ under $H_0$, as required in Section~\ref{subsec:test}. We offer three strategies.  
The first two use the representation in \eqref{eq:Uratio}, and apply in the setting of a point null hypothesis (Section~\ref{subsubsec:point}) or in the setting where we have access to an asymptotically linear estimator  that  is consistent for  $\theta^*$  if  $H_0$ holds (Section~\ref{subsubsec:ale}). 
The third uses the representation in \eqref{eq:Xratio} and is applicable in cases where $H_0$ can be recast as testing whether $X_1,\ldots,X_n$ are independent and identically distributed (Section~\ref{subsubsec:iid}).

\subsubsection{Point null hypotheses}
\label{subsubsec:point}

In the setting of a point null hypothesis $H_0: \theta^*=\theta_0$, we can simply plug in $\theta_0$ in place of $\theta^*$ in \eqref{eq:Uratio} and compute $N(\xt{1}_i)$ and $D(\xt{1}_i)$ with a simple Monte Carlo simulation. Letting $u_1,\ldots,u_{B}$  denote independent realizations of $U_{i}\sim R(\xt{1}_{i},\Sigma)$, we have that as $B\to\infty$,
\begin{equation}
\label{eq:montecarlo}
\frac{1}{B}\sum_{b=1}^Bu_bf_{X}(u_b;\eta_{i}(\theta_0)) \overset{P}{\to} N(\xt{1}_{i}) \quad\text{and}\quad \frac{1}{B}\sum_{b=1}^{B}f_{X}(u_b;\eta_{i}(\theta_0)) \overset{P}{\to}  D(\xt{1}_{i}).
\end{equation}
Because the value of $B$ is in our control and the distribution of $U_{i}$ is known exactly --- i.e. we can generate as many realizations of $U_{i}$ as we want (subject to computational constraints) --- in effect, under $H_0$, $ N(\xt{1}_{i})$ and $D(\xt{1}_{i})$ are known exactly. Of course,  more efficient sampling schemes to approximate $N(\xt{1}_{i})$ and $D(\xt{1}_{i})$ are available.

\begin{remark}
If $R$ is a discrete distribution, then Monte Carlo estimation is not necessary. Rather, the expectations in \eqref{eq:Uratio} can be computed from first principles by summing over the support of $R(\cdot,\Sigma)$.
\end{remark}

\begin{remark}
In the case of a point null hypothesis, the testing procedure described in Section \ref{subsec:test} simplifies considerably because $N(\Xt{1}_i)$ and $D(\Xt{1}_i)$ are known exactly. It follows that the variance of $C_n(\Xt{1},\Xt{2};\Theta_0)$  in \eqref{eq:crossfit} takes a very simple form, namely
$$
\Var\left(C_n(\Xt{1},\Xt{2};\Theta_0)\right)=\frac{1}{n}\sum_{i=1}^n \Var\left[\left(\Xt{1}_i+\Xt{2}_i-2\frac{N(\Xt{1}_i)}{D(\Xt{1}_i)}\right)g(\Xt{1}_i)^\top\right].
$$
\end{remark}

The next two subsections consider the setting of a composite null hypothesis.

\subsubsection{Parametric and other settings admitting asymptotically linear estimators of $\theta^*$}
\label{subsubsec:ale}

Consider an asymptotically linear estimator, $\hat\theta_n$, computed from $x_1,\dots,x_n$, with influence function $\varphi_\theta(x)$, that is consistent for $\theta^*$ if $H_0$ holds. In parametric problems, under standard regularity conditions, the maximum likelihood estimator is one such example \citep{van2000asymptotic, bickel2001inference}. 

For each $k=1,\dots,K$, let $\hat\theta_n^{(-k)}$ denote the estimator $\hat\theta_n$ computed using all of the observations not in the $k$th fold. 
Equation \ref{eq:Uratio} in Proposition \ref{prop:ratio} suggests defining $\hat N_n^{(-k)}(\xt{1}_{i'})=\E_{U_{i'}}[U_{i'}f_{X}(U_{i'};\eta_{i'}(\hat\theta_n^{(-k)}))]$ and  $\hat D_n^{(-k)}(\xt{1}_{i'}) = \E_{U_{i'}}[f_{X}(U_{i'};\eta_{i'}(\hat\theta_n^{(-k)}))]$ for ${i'}\in M_k$. These quantities can be computed with a simple Monte Carlo simulation resembling \eqref{eq:montecarlo} with $\hat\theta_n^{(-k)}$ in place of $\theta_0$. 
Once again, since the value of $B$ can be arbitrarily large, 
given $\hat\theta_n^{(-k)}$, $\hat N_n^{(-k)}(\xt{1}_{i'})$ and $\hat D_n^{(-k)}(\xt{1}_{i'})$ are effectively known exactly.

It remains to show that $\hat N_n^{(-k)}(\xt{1}_{i'})$ and $\hat D_n^{(-k)}(\xt{1}_{i'})$ are asymptotically linear, so that they can be used in the context of Lemma \ref{lem:ASL}. This is established by the following result.
\begin{proposition}\label{prop:delta}
Assume that for $i'=1,\ldots,n$, $f_X(x;\eta_{i'}(\theta))$ is differentiable with respect to $\theta$ at $\theta^*$ and $\nabla_\theta f_X(x;\eta_{i'}(\theta)) \mid_{\theta=\theta^*} \ne 0$ 
for all $x\in\mathcal{X}$. Then for $i'\in M_k$, 
\begin{align*}
\hat N_n^{(-k)}(\xt{1}_{i'}) - N(\xt{1}_{i'}) &= \frac{1}{n-|M_k|}\sum_{i\not\in M_k}\E_{U_{i'}}[U_{i'}\nabla_\theta f_X(x;\eta_{i'}(\theta))^\top \mid_{\theta=\theta^*}] \varphi_\theta(x_i) + o_P(n^{-1/2}), \\
\hat D_n^{(-k)}(\xt{1}_{i'}) - D(\xt{1}_{i'}) &= \frac{1}{n-|M_k|}\sum_{i\not\in M_k}\E_{U_{i'}}[\nabla_\theta f_X(x;\eta_{i'}(\theta))^\top \mid_{\theta=\theta^*}] \varphi_\theta(x_i) + o_P(n^{-1/2}).\end{align*} 
\end{proposition}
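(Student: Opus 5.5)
This is a delta-method argument. Fix $k$ and $i'\in M_k$, and condition on $\xt{1}_{i'}$. Because $i'\notin\{i:i\notin M_k\}$, the estimator $\hat\theta_n^{(-k)}$ depends only on observations independent of $\Xt{1}_{i'}$. The law of $U_{i'}\sim R(\xt{1}_{i'},\Sigma)$ is therefore fixed. I introduce the maps
$$
\Phi_N(\theta)=\E_{U_{i'}}\bigl[U_{i'}f_X(U_{i'};\eta_{i'}(\theta))\bigr],\qquad \Phi_D(\theta)=\E_{U_{i'}}\bigl[f_X(U_{i'};\eta_{i'}(\theta))\bigr].
$$
With these, $\hat N_n^{(-k)}(\xt{1}_{i'})=\Phi_N(\hat\theta_n^{(-k)})$ and $N(\xt{1}_{i'})=\Phi_N(\theta^*)$, and likewise for $D$. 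Ignoring Monte Carlo error, which is negligible since $B$ can be taken arbitrarily large, the claim is just the delta method applied to $\Phi_N$ and $\Phi_D$ at $\theta^*$.

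\textbf{Step 1 (linear expansion of $\hat\theta_n^{(-k)}$).} Apply asymptotic linearity of $\hat\theta_n$ to the subsample $\{i\notin M_k\}$, which has size $n-|M_k|\asymp n(K-1)/K$:
$$
\hat\theta_n^{(-k)}-\theta^*=\frac{1}{n-|M_k|}\sum_{i\notin M_k}\varphi_\theta(x_i)+o_P(n^{-1/2}).
$$
In particular, $\hat\theta_n^{(-k)}-\theta^*=O_P(n^{-1/2})$.

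\textbf{Step 2 (differentiability of $\Phi_N,\Phi_D$ at $\theta^*$).} From pointwise differentiability of $\theta\mapsto f_X(u;\eta_{i'}(\theta))$, I need
$$
\nabla\Phi_N(\theta^*)=\E_{U_{i'}}\bigl[U_{i'}\nabla_\theta f_X(U_{i'};\eta_{i'}(\theta))\mid_{\theta=\theta^*}\bigr],
$$
and the analogous formula for $\Phi_D$. This requires exchanging derivative and expectation. I would justify it by dominated convergence: assume a local envelope $\sup_{\theta\text{ near }\theta^*}\lVert\nabla_\theta f_X(u;\eta_{i'}(\theta))\rVert\le G(u)$ with $\E\lVert U_{i'}\rVert G(U_{i'})<\infty$. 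In the discrete-$R$ case this is immediate, since the expectation is a sum of differentiable terms. The stated condition $\nabla_\theta f_X\neq0$ is then used only to ensure that the leading term is the genuine first-order term, rather than a degenerate one needing a higher-order expansion.

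\textbf{Step 3 (Taylor expansion).} Expand to first order:
$$
\Phi_N(\hat\theta_n^{(-k)})-\Phi_N(\theta^*)=\nabla\Phi_N(\theta^*)^\top(\hat\theta_n^{(-k)}-\theta^*)+o(\lVert\hat\theta_n^{(-k)}-\theta^*\rVert).
$$
By Step 1 the remainder is $o_P(n^{-1/2})$. Substituting the linear expansion from Step 1 and pulling the fixed vector $\nabla\Phi_N(\theta^*)$ inside the sum gives the stated representation, with the $o_P(n^{-1/2})$ error from Step 1 multiplied by a bounded constant. The same argument handles $D$.

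\textbf{Main obstacle.} The delicate point is Step 2, the interchange of differentiation and integration together with uniform control of the Taylor remainder. I would impose the envelope condition implicitly as a regularity assumption. A further technicality is that the remainder should be uniform in $\xt{1}_{i'}$, so that it can be fed into Lemma \ref{lem:ASL}. I would handle this by conditioning on $\Xt{1}_{i'}$, which is independent of the fold used to estimate $\theta$, and noting that pointwise $o_P$ suffices for each fixed $s$, as Lemma \ref{lem:ASL} requires.
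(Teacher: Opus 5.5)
Your proposal is correct and follows essentially the same route as the paper. Both arguments take a first-order Taylor expansion of $\theta\mapsto \E_{U_{i'}}[U_{i'}f_X(U_{i'};\eta_{i'}(\theta))]$ (and of the analogous map for $D$) at $\theta^*$, substitute the asymptotically linear expansion of $\hat\theta_n^{(-k)}$, and control the remainder using $\hat\theta_n^{(-k)}-\theta^*=O_P(n^{-1/2})$. The paper interchanges the gradient with the expectation over $U_{i'}$ without comment, so your envelope condition in Step 2 just states explicitly the regularity that the paper assumes implicitly.
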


In the special case where $g$ is chosen to be a scalar function, the following corollary provides a useful simplification of $\varphi_C(\cdot)$, defined in \eqref{eq:varphi-C}, in the setting of this subsection.

\begin{corollary}
\label{cor:delta}
In the setting of Proposition \ref{prop:delta}, suppose that $g$ is scalar-valued. Then,
$$
\varphi_C(\xt{1}_i,\xt{2}_i) = \left(\xt{1}_i+\xt{2}_i-2\frac{N(\xt{1}_i)}{D(\xt{1}_i)}\right)g(\xt{1}_i) -2A(g)\varphi_\theta(x_i)
$$
where $A(g)=\int\left\{\E_{U_s}[U_s\nabla_\theta f_X(x;\eta_{i'}(\theta))^\top \mid_{\theta=\theta^*}] 
-\frac{N(s)}{D(s)}\E_{U_{s}}[\nabla_\theta f_X(x;\eta_{i'}(\theta))^\top \mid_{\theta=\theta^*}] \right\}\frac{g(s)}{D(s)} dP_{\xt{1}}(s)$ and $U_s\sim R(s;\Sigma)$.
\end{corollary}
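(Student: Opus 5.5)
This is a direct substitution: we plug the influence functions from Proposition \ref{prop:delta} into the general formula \eqref{eq:varphi-C} from Lemma \ref{lem:ASL}, then exploit the scalar nature of $g$ to pull the $\varphi_\theta$ term out of the integral.

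\textbf{Step 1: identify the influence functions.} From Proposition \ref{prop:delta}, for a point $s\in\mathcal{X}^{(1)}$ with $U_s\sim R(s,\Sigma)$, we can read off
$$
\varphi_N(\xt{1}_i,\xt{2}_i;s)=\E_{U_s}\bigl[U_s\nabla_\theta f_X(U_s;\eta(\theta))^\top\bigr]\big|_{\theta=\theta^*}\varphi_\theta(x_i)
$$
and
$$
\varphi_D(\xt{1}_i,\xt{2}_i;s)=\E_{U_s}\bigl[\nabla_\theta f_X(U_s;\eta(\theta))^\top\bigr]\big|_{\theta=\theta^*}\varphi_\theta(x_i).
$$
Here $x_i=(\xt{1}_i+\xt{2}_i)/2$, so each is a legitimate function of $(\xt{1}_i,\xt{2}_i)$. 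These are exactly the forms assumed in Lemma \ref{lem:ASL}, with the averaging over $i\notin M_k$ matching the cross-fit normalization. Since that lemma is stated for all $s$, we take the expansion of Proposition \ref{prop:delta} to hold pointwise in $s$ and not only at the sample points. This is where the differentiability assumption enters.

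\textbf{Step 2: factor out $\varphi_\theta(x_i)$.} Substituting gives
$$
\varphi_{ND}(\xt{1}_i,\xt{2}_i;s)=\Bigl\{\E_{U_s}[U_s\nabla_\theta f_X^\top]-\tfrac{N(s)}{D(s)}\E_{U_s}[\nabla_\theta f_X^\top]\Bigr\}\tfrac{1}{D(s)}\,\varphi_\theta(x_i).
$$
Both terms share the common right factor $\varphi_\theta(x_i)$, which does not depend on $s$. This expression is a $p$-vector (a $p\times d$ matrix times a $d$-vector, where $d$ is the dimension of $\theta$).

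\textbf{Step 3: pull the factor out of the integral.} Because $g$ is scalar, $g(s)^\top=g(s)$ is a scalar that commutes with the matrix factors, so the integrand becomes
$$
\{\cdots\}\frac{g(s)}{D(s)}\,\varphi_\theta(x_i).
$$
Since $\varphi_\theta(x_i)$ is constant in $s$, linearity of the integral moves it outside, leaving $A(g)\varphi_\theta(x_i)$. The first term of \eqref{eq:varphi-C} is unchanged, with $g(\xt{1}_i)^\top=g(\xt{1}_i)$, which yields the stated formula.

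\textbf{Main obstacle.} The only real subtlety is bookkeeping of dimensions, and this is exactly why the scalar case is needed. For vector-valued $g$, the object $\varphi_{ND}\,g^\top$ is $p\times q$, and $\varphi_\theta$ would sit between the two matrix factors rather than on the right, so it could not be factored out as a single matrix $A(g)$ acting on it. A secondary point is justifying the interchange of integration with the linear map. This requires $A(g)$ to be finite, which I would simply assume, consistent with the finite-variance condition in Theorem \ref{thm:Cn}.
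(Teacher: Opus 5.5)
Your proposal is correct and follows the paper's proof: substitute the influence functions from Proposition \ref{prop:delta} into $\varphi_{ND}$ in \eqref{eq:varphi-C}, then use that $g$ is scalar to pull the $s$-free factor $\varphi_\theta(x_i)$ out of the integral, which leaves exactly $A(g)$. One side remark is slightly overstated: vector-valued $g$ does not rule out factoring, since a $\mathrm{vec}$/Kronecker representation still gives a linear map acting on $\varphi_\theta(x_i)$, just not of the stated form $A(g)\varphi_\theta(x_i)$.
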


\subsubsection{Settings in which $X_i$ are independent and identically distributed under $H_0$}
\label{subsubsec:iid}

In the event that $X_1,\ldots,X_n$ are independent and identically distributed only under $H_0$ (e.g., under $H_0$ all observations are drawn from the same distribution, whereas under the alternative they are not), we can estimate $N(\xt{1}_i)$ and $D(\xt{1}_i)$ in \eqref{eq:Xratio} with sample means. Specifically, for $i'\in M_k$,  
\begin{equation}
\hat N_n^{(-k)}(\xt{1}_{i'}) = \frac{1}{n-|M_k|}\sum_{i\not\in M_k} x_if_R(x_i;\xt{1}_{i'},\Sigma) \quad\text{and}\quad \hat D_n^{(-k)}(\xt{1}_{i'}) = \frac{1}{n-|M_k|}\sum_{i\not\in M_k} f_R(x_i;\xt{1}_{i'},\Sigma). \label{eq:Xratio_est}
\end{equation}
These estimators require no distributional assumptions on $X_i$ and are linear by construction. Thus, Lemma~\ref{lem:ASL} applies directly. 

\subsection{Power of $\vartheta_\alpha(\xt{1},\xt{2})$}
\label{subsec:power}

We conclude this section by studying the power of $\vartheta_\alpha(\xt{1},\xt{2})$ in \eqref{eq:test}. To streamline our discussion, we consider testing $H_0:\theta^*=\theta_0$ against $H_1:\theta^*=\theta_1$. Because $H_0$ and $H_1$ are simple, their corresponding implied conditional means, $\E_{\eta_i(\theta_0)}[\Xt{2}_i|\Xt{1}_i=\xt{1}_i]$ and $\E_{\eta_i(\theta_1)}[\Xt{2}_i|\Xt{1}_i=\xt{1}_i]$, can be estimated to an arbitrary degree of precision by combining Proposition \ref{prop:ratio} with the strategy outlined in \eqref{eq:montecarlo}; for the remainder of this section we thus treat them as known.

Recall that testing $H_0$ via orthogonalization relies on the insight that $$
\sqrt{n}\Psi_n^{-1/2}\vect{C_n(\xt{1},\xt{2};\theta_0)}\overset{D}{\to}N_{pq}(0,I_{pq})
$$
under $H_0$, where $C_n(\xt{1},\xt{2};\theta_0)=\frac{1}{n}\sum_{i=1}^n\left(\xt{2}_i-\E_{\eta_i(\theta_0)}[\Xt{2}_i|\Xt{1}_i=\xt{1}_i]\right)g(\xt{1}_i)^\top$.

Suppose, however, that $H_1$ is instead true. In this case, observe that $C_n(\xt{1},\xt{2};\theta_0)$ can be decomposed into two terms:
\begin{align}
\begin{split}
\label{eq:Cn_alt}
C_n(\xt{1},\xt{2};\theta_0) &=\frac{1}{n}\sum_{i=1}^n\left(\xt{2}_i-\E_{\eta_i(\theta_1)}[\Xt{2}_i|\Xt{1}_i=\xt{1}_i]\right)g(\xt{1}_i)^\top \\
&\quad+\frac{1}{n}\sum_{i=1}^n\left(\E_{\eta_i(\theta_1)}[\Xt{2}_i|\Xt{1}_i=\xt{1}_i]-\E_{\eta_i(\theta_0)}[\Xt{2}_i|\Xt{1}_i=\xt{1}_i]\right)g(\xt{1}_i)^\top.
\end{split}
\end{align}

The first term is analogous to $C_n(\xt{1},\xt{2};\theta_0)$ under $H_0$, and converges to a mean-zero multivariate Gaussian (after appropriate rescaling). The second term is the source of power. It shifts the mean of $C_n(\xt{1},\xt{2};\theta_0)$ away from zero, which propagates forward into larger values of $T_n(\xt{1},\xt{2};\theta_0,\hat\Psi_n)$ in \eqref{eq:Tn} for an appropriate choice of $\hat\Psi_n$. Proposition \ref{prop:power} formalizes this intuition.

\begin{proposition}
\label{prop:power}
Consider  testing $H_0:\theta^*=\theta_0$ against $H_1:\theta^*=\theta_1$ and suppose that $H_1$ is true. Suppose that for all $i$,
$\left(\Xt{2}_i-\E_{\eta_i(\theta_0)}[\Xt{2}_i|\Xt{1}_i=\xt{1}_i]\right)g(\Xt{1}_i)^\top$ has finite variance and let
$$
\Psi_n=\frac{1}{n}\sum_{i=1}^n \Var\left[\left(\Xt{2}_i-\E_{\eta_i(\theta_0)}[\Xt{2}_i|\Xt{1}_i=\xt{1}_i]\right)g(\Xt{1}_i)^\top\right].
$$    
Suppose further that for every $\epsilon>0$,
\small{
$$
\frac{1}{n}\sum_{i=1}^n \E\left[\left\lVert \left(\Xt{2}_i-\E_{\eta_i(\theta_0)}[\Xt{2}_i|\Xt{1}_i=\xt{1}_i]\right)g(\Xt{1}_i)^\top\right\rVert^2I\left\{\left\lVert \left(\Xt{2}_i-\E_{\eta_i(\theta_0)}[\Xt{2}_i|\Xt{1}_i=\xt{1}_i]\right)g(\Xt{1}_i)^\top\right\rVert > \epsilon\sqrt{n}\right\}\right]\to 0.
$$}
It follows that
$$
\sqrt{n}\Psi_n^{-1/2}\left(\vect{C_n(\Xt{1},\Xt{2};\theta_0)}-\vect{\mu_n}\right) \overset{D}\to N_{pq}(0,I_{pq}),
$$
where $p$ is the dimension of $X$ and $q$ is the dimension of $g(\cdot)$, and where 
$$
\mu_n
=\frac{2}{n}\sum_{i=1}^n\E_{\eta_i(\theta_1)}\left[\left(\frac{\E_{U_i}[U_if_{X}(U_i;\eta_i(\theta_1))]}{\E_{U_i}[f_{X}(U_i;\eta_i(\theta_1))]}-\frac{\E_{U_i}[U_if_{X}(U_i;\eta_i(\theta_0))]}{\E_{U_i}[f_{X}(U_i;\eta_i(\theta_0))]}\right)g(\Xt{1}_i)^\top\right],
$$
where $U_i\sim R(\xt{1}_i,\Sigma)$ (see Proposition \ref{prop:ratio}).
\end{proposition}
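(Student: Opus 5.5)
Under $H_1$, the summands $Z_i := \bigl(\Xt{2}_i-\E_{\eta_i(\theta_0)}[\Xt{2}_i|\Xt{1}_i=\Xt{1}_i]\bigr)g(\Xt{1}_i)^\top$ are independent across $i$, because each is a fixed, known function of $(X_i,W_i)$ and the pairs $(X_i,W_i)$ are independent. No nuisance estimation is involved here: $\theta_0$ is a point null, so the conditional mean under $\theta_0$ is known. Hence $C_n(\Xt{1},\Xt{2};\theta_0)=\frac1n\sum_i Z_i$ is an average of independent, non-identically distributed random matrices, and the claim is a triangular-array central limit theorem once the mean has been identified. The plan is to (i) compute $\E[\vect{Z_i}]$ under $H_1$, (ii) identify $\frac1n\sum_i\E[Z_i]$ with $\mu_n$, and (iii) apply the multivariate Lindeberg--Feller CLT to the centred summands.

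For (i), use the decomposition \eqref{eq:Cn_alt}. The first summand, $\bigl(\Xt{2}_i-\E_{\eta_i(\theta_1)}[\Xt{2}_i|\Xt{1}_i]\bigr)g(\Xt{1}_i)^\top$, has mean zero under $H_1$ by Proposition \ref{prop:orthogonality}, applied with $A=\Xt{2}_i$, $B=\Xt{1}_i$ and the true law $F(\eta_i(\theta_1))$. The second summand is a function of $\Xt{1}_i$ alone. By Proposition \ref{prop:ratio}, in representation \eqref{eq:Uratio}, each conditional mean equals $2N/D-\Xt{1}_i$ with $N,D$ evaluated at the corresponding parameter. The $-\Xt{1}_i$ terms cancel in the difference, leaving
\[
2\Bigl(\tfrac{\E_{U_i}[U_if_X(U_i;\eta_i(\theta_1))]}{\E_{U_i}[f_X(U_i;\eta_i(\theta_1))]}-\tfrac{\E_{U_i}[U_if_X(U_i;\eta_i(\theta_0))]}{\E_{U_i}[f_X(U_i;\eta_i(\theta_0))]}\Bigr)g(\Xt{1}_i)^\top .
\]
Taking $\E_{\eta_i(\theta_1)}$ and averaging over $i$ gives exactly $\mu_n$. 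Finite variance of $Z_i$ ensures that these expectations exist; Cauchy--Schwarz controls the second term, since it is the difference of $Z_i$ and a mean-zero square-integrable term.

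For (iii), set $Y_i=\vect{Z_i}-\E[\vect{Z_i}]$. Then $\sqrt n(\vect{C_n}-\vect{\mu_n})=n^{-1/2}\sum_i Y_i$, with $\frac1n\sum_i\Var(Y_i)=\Psi_n$. Apply Lindeberg--Feller to $n^{-1/2}\Psi_n^{-1/2}Y_i$. This array has identity covariance by construction. Its Lindeberg sum is $\frac1n\sum_i\E[\lVert\Psi_n^{-1/2}Y_i\rVert^2 I\{\lVert\Psi_n^{-1/2}Y_i\rVert>\epsilon\sqrt n\}]$. The Cramér--Wold device then reduces the multivariate statement to scalar projections, and we conclude.

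The main obstacle is matching the stated Lindeberg condition to what the CLT needs. The hypothesis is phrased for the uncentred, unstandardized $Z_i$, whereas the CLT requires it for $\Psi_n^{-1/2}Y_i$. I would handle centring with the standard inequality $\lVert a-b\rVert^2 I\{\lVert a-b\rVert>2\epsilon\}\le 4\lVert a\rVert^2I\{\lVert a\rVert>\epsilon\}+4\lVert b\rVert^2 I\{\lVert b\rVert>\epsilon\}$. The mean term is harmless: $\lVert\E Z_i\rVert$ is bounded uniformly by the bounded variances, so it eventually falls below $\epsilon\sqrt n$. For the standardization, I would assume, implicitly as in Theorem \ref{thm:Cn}, that the eigenvalues of $\Psi_n$ are bounded away from zero. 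Then $\lVert\Psi_n^{-1/2}y\rVert\le c\lVert y\rVert$, so the stated condition implies the standardized one with $\epsilon$ rescaled.
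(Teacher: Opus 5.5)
Your proposal is correct and follows the paper's proof. The paper likewise invokes the Lindeberg--Feller central limit theorem and identifies the mean by adding and subtracting $\E_{\eta_i(\theta_1)}[\Xt{2}_i|\Xt{1}_i]$, killing the first piece by orthogonality and rewriting the second via Proposition~\ref{prop:ratio}. Your reconciliation of the stated uncentred, unstandardized Lindeberg condition with the standardized one is more careful than the paper, including the lower eigenvalue bound on $\Psi_n$ that the paper leaves implicit. Two justifications need repair: the mean term should be controlled by combining $\lVert \E Z_i\rVert^2 \le \E\lVert Z_i\rVert^2$ with the stated Lindeberg condition, not by ``bounded variances,'' which are not assumed; and square-integrability of the second term follows from conditional Jensen applied to $Z_i$, whereas your Cauchy--Schwarz argument presupposes it.
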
 

It follows from Proposition \ref{prop:power} that the power of $\vartheta_\alpha(\xt{1},\xt{2})$ is an increasing  function of $\lVert\Psi_n^{-1/2}\mu_n\rVert$. This quantity is in turn influenced by the choices of $g(\cdot)$ and $\Sigma$ in Algorithm \ref{alg:split}. The optimal choices of these hyperparameters will be context-specific. However, the following practical heuristics apply broadly:

\begin{itemize}
    \item To achieve power, the test functions $g(\cdot): \mathcal{X}^{(1)} \to \mathbb{R}^q$ should be collinear with the differences between $\E_{\eta_i(\theta_1)}[\Xt{2}_i|\Xt{1}_i=\xt{1}_i]$ and $\E_{\eta_i(\theta_0)}[\Xt{2}_i|\Xt{1}_i=\xt{1}_i]$. Richer test functions may therefore be preferable. However, care must be taken as increasing the dimension of $g(\cdot)$ also increases the difficulty of estimating $\Psi_n$. 
    \item If the noise variance is chosen to be too small, then  there will be very little power.
    To see this, note that in the extreme case where $\Sigma$ tends to zero, 
    $U_i \sim R(\xt{1}_i, \Sigma)$ is a point mass at $\xt{1}_i$. In this case, $\mu_n$ in Proposition~\ref{prop:power} equals zero, and thus there is no power to reject $H_0$. 
\end{itemize}

\section{Extension to post-selection inference}
\label{sec:sel}

\subsection{Background on post-selection  inference}
In recent years, the field of \emph{post-selection} or \emph{selective} inference has focused on hypothesis testing in the setting where the hypothesis is not fixed in advance, but rather, is itself a function of the data $X$. 
That is, we consider 
\begin{equation}
    \label{eq:h0_sel}
H_0(x): \theta^* \in \Theta_0(x).
\end{equation} 
In this setting, we would like to achieve \emph{selective} Type 1 error control, in the sense of \cite{fithian2014optimal}: that is, if $H_0(x)$ is true, then for any $\alpha\in(0,1)$, we want a test $\vartheta^{\Theta_0(x)}_\alpha(X)$ such that 
\begin{equation}\label{eq:sel-t1e}
P\left(\vartheta_\alpha^{\Theta_0(x)}(X) = 1 \mid \Theta_0(X)=\Theta_0(x)\right) \le \alpha.
\end{equation}
Here, the notation $\vartheta^{\Theta_0(x)}_\alpha(X)$ is intended to emphasize the fact that this is a test of the selected  hypothesis $H_0(x): \theta^* \in \Theta_0(x)$.

Several strategies towards achieving \eqref{eq:sel-t1e} have emerged in the literature.  One line of work, referred to as \emph{data carving} or \emph{conditional selective inference},  involves characterizing the distribution of a test statistic conditional on the event that this particular hypothesis was selected from the data \citep{fithian2014optimal, tian2018selective,lee2016exact}.  Despite its promise in a variety of contexts, including inference after variable selection  \citep{lee2016exact,panigrahi2024exact}, changepoint detection \citep{hyun2021post, jewell2022testing}, and clustering \citep{chen2022selective,yun2023selective,gao2020selective}, data carving approaches typically require the development of a new --- and typically quite complicated --- inferential procedure for every selection rule, and are largely restricted to multivariate Gaussian data. 

Another approach for inference on a selected hypothesis involves decomposing the data $X$ into two independent components $\Xt{1}$ and $\Xt{2}$, so that a hypothesis, $H_0(\xt{1}): \theta^* \in \Theta_0(\xt{1})$, can be selected using $\Xt{1}$ and then tested using $\Xt{2}$. For any $\alpha\in(0,1)$, when $H_0(\xt{1})$ holds, we then wish for a  guarantee along the lines of 
\begin{equation}\label{eq:sel-t1e-2}
P\left(\vartheta_\alpha^{\Theta_0 (\xt{1})} (\Xt{2}) = 1 \mid \Xt{1}=\xt{1} \right) \le \alpha,
\end{equation}
where the notation $\vartheta_\alpha^{\Theta_0 (\xt{1})} (\Xt{2})$ is intended to convey the fact that  $\Xt{2}$ is used to test a hypothesis that is a function of $\xt{1}$. 
Because $\Xt{1}$ and $\Xt{2}$ are independent, \emph{a test $\vartheta_\alpha^{\Theta_0 (\xt{1})} (\Xt{2})$ constructed as though the null hypothesis $H_0(\xt{1})$ were specified in advance will achieve \eqref{eq:sel-t1e-2}}. How can we decompose $X$ into independent components $\Xt{1}$ and $\Xt{2}$?
\emph{Sample splitting}   decomposes  $n$ independent and identically distributed random variables $X \sim Q^n$ into independent components $\Xt{1} \sim Q^{n_1}$ and $\Xt{2} \sim Q^{n_2}$ where $n_1+n_2=n$ \citep{cox1975note}. 
\emph{Data thinning} generalizes sample splitting  to settings where the latter cannot be applied, e.g. to settings where $n=1$, or where $n>1$ but the observations are not independent and identically distributed \citep{neufeld2023data, dharamshi2023generalized}. However, data thinning brings with it substantial distributional assumptions: i.e., only certain distributional families are amenable to thinning, and misspecification of the distributional family results in a loss of independence between $\Xt{1}$ and $\Xt{2}$.

\emph{Data fission} provides yet another approach: if $X$ is decomposed into \emph{dependent} components $\Xt{1}$ and $\Xt{2}$ in such a way that the conditional distribution of $\Xt{2} \mid \Xt{1}$ can be analytically characterized, then a guarantee along the lines of \eqref{eq:sel-t1e-2} can in some cases still be obtained \citep{leiner2022data}; however, due to dependence between $\Xt{1}$ and $\Xt{2}$, construction of a suitable test $\vartheta_\alpha^{\Theta_0 (\xt{1})} (\Xt{2})$ requires careful derivations involving  the conditional distribution of the test statistic given $\Xt{1}$. It is shown in \cite{dharamshi2024decomposing} that even in the very simple setting of a multivariate Gaussian distribution with unknown covariance, the required derivations are quite technical. Furthermore,  misspecification of the distributional family of $X$ will lead to a loss of downstream inferential guarantees. 

Thus, the aforementioned approaches offer practitioners a patchwork of solutions to the selective inference problem. It turns out, however, that the ideas developed in Sections~\ref{sec:intro}--\ref{sec:cmean} extend directly to the selective inference setting, thereby substantially expanding the set of selective inference problems for which solutions are available.  We provide details in the remainder of this section.

\subsection{Recasting a selected null hypothesis as a test of a moment condition}

To begin, we consider the following algorithm, which modifies Algorithm~\ref{alg:test} to allow for a test of a null hypothesis that is a function of $\xt{1}$. 

\begin{algorithm}[Post-selection inference via orthogonalization]
\label{alg:sel}
\textcolor{white}{.}

Input: Observed data $x_i$ drawn from $X_i\overset{ind}{\sim} F(\eta_i(\theta^*))$ for $i=1,\dots,n$;  a user-specified $p$-dimensional symmetric shift-family distribution $R(\phi,\Sigma)$ with mean $\phi$, user-specified variance $\Sigma$, and density $f_R(\cdot;\phi,\Sigma)$; and a test function $g:\mathcal{X}^{(1)}\rightarrow\mathbb{R}^q$.  
\begin{enumerate}
    \item Construct $\xt{1}_i$ and $\xt{2}_i$ using Algorithm \ref{alg:split} with inputs $x_1,\dots,x_n$ and $R(\phi,\Sigma)$. 
    \item Generate a hypothesis about $\theta^*$ using $\xt{1}$. Denote the selected null hypothesis as $H_0(\xt{1}):\theta^*\in\Theta_0(\xt{1})$ where $\Theta_0(\xt{1})\subset\Theta$.
    \item Test the hypothesis  
    \begin{equation}
\label{eq:H0prime2}
H_0'\left(\xt{1}, g\right): \E_{\eta_i(\theta^*)}\left[\left(\Xt{2}_i-h\left(\Xt{1}_i,\Theta_0(\xt{1}) \right)\right)g(\Xt{1}_i)^\top  \right]=0 \quad \forall i=1,\dots,n, 
\end{equation}
where $h\left(\Xt{1}_i,\Theta_0(\xt{1})\right)$ is defined as in \eqref{eq:hcases}. 
\end{enumerate}
\end{algorithm}

We now turn to the problem of testing $H_0'(\xt{1},g)$ in Step 3 of Algorithm~\ref{alg:sel}. To simplify the exposition, we will focus on cases in which under $H_0(\xt{1})$, $\theta^*$ admits a consistent and asymptotically linear estimator (i.e., the setting of Section \ref{subsubsec:ale}) and we will restrict our attention to scalar-valued $g$. The former condition is not restrictive, as established by \cite{tian2018selective}. We also note that Proposition~\ref{prop:ratio} holds in the selective setting: none of the statements in that proposition involve a null hypothesis, selective or otherwise.

Our approach is to construct a variant of $C_n(\xt{1},\xt{2};\Theta_0(\xt{1}))$ in \eqref{eq:crossfit} that is asymptotically linear and mean-zero conditional on $\Xt{1}$ when $H_0'(\xt{1},g)$ holds; an application of an appropriate conditional central limit theorem \citep{dedecker2002necessary, bulinski2017conditional, niu2024reconciling, zhao2025imputation} will then enable valid post-selection inference. 
In recent work, \cite{jin2024tailored} perform inference on asymptotically linear quantities conditional on components of the underlying data by debiasing the marginal influence function by its conditional mean. Their results suggest considering 
\begin{equation}
\label{eq:Cn'}
C_n'(\xt{1},\xt{2};\Theta_0(\xt{1})) = C_n(\xt{1},\xt{2};\Theta_0(\xt{1})) - \frac{1}{n}\sum_{i=1}^n \E_{\eta_i(\theta^*)}\left[\varphi_C(\Xt{1}_i,\Xt{2}_i)|\Xt{1}\right],
\end{equation}
where as in \eqref{eq:varphi-C}, $\varphi_C(\cdot)$ denotes the influence function of $C_n(\xt{1},\xt{2};\Theta_0(\xt{1}))$. While $C_n'(\xt{1},\xt{2};\Theta_0(\xt{1}))$ is conditionally mean-zero and asymptotically Gaussian, the  debiasing terms in \eqref{eq:Cn'} are not readily available as they are population quantities; they must instead be estimated. Proposition \ref{prop:cIF} provides a simplification of $\E_{\eta_i(\theta^*)}\left[\varphi_C(\Xt{1}_i,\Xt{2}_i)|\Xt{1}_i\right]$ starting from the representation in Corollary \ref{cor:delta}.

\begin{proposition}
\label{prop:cIF}
For any selected null hypothesis $H_0(\xt{1}):\theta^*\in\Theta_0(\xt{1})$, suppose that $H_0'(\xt{1},g)$ in \eqref{eq:H0prime2} is true, $\hat \theta_n^{(-k)}$ is an asymptotically linear estimator of $\theta^*$ with influence function $\varphi_\theta(x;\theta^*)$ so that Lemma \ref{lem:ASL} applies, and $g$ is a scalar-valued function. The notation $\varphi_\theta(x;\theta^*)$ is used to indicate that, in general, the influence function depends on the true value of the parameter $\theta^*$. Then,
$$
\E_{\eta_i(\theta^*)}\left[\varphi_C(\Xt{1}_i,\Xt{2}_i)|\Xt{1}\right] = -2A(g)\E_{\eta_i(\theta^*)}\left[\varphi_\theta(X_i;\theta^*)|\Xt{1}_i\right]=-2A(g)\frac{\E_{U_i}[\varphi_\theta(U_i;\theta^*)f_{X}(U_i;\eta_i(\theta^*))]}{\E_{U_i}[f_{X}(U_i;\eta_i(\theta^*))]},
$$
where $A(g)$ is defined in Corollary \ref{cor:delta}.
\end{proposition}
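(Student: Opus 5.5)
Starting from the representation in Corollary~\ref{cor:delta}, I will write
$$
\varphi_C(\xt{1}_i,\xt{2}_i) = \left(\xt{1}_i+\xt{2}_i-2\frac{N(\xt{1}_i)}{D(\xt{1}_i)}\right)g(\xt{1}_i) -2A(g)\varphi_\theta(x_i;\theta^*),
$$
take the conditional expectation of each term given $\Xt{1}$, and treat the two terms separately.

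\textbf{First term.} Because the pairs $(X_i,W_i)$ are independent across $i$, conditioning on the whole vector $\Xt{1}$ is the same as conditioning on $\Xt{1}_i$ alone. The factor $g(\Xt{1}_i)$ is $\Xt{1}_i$-measurable and can be pulled out. We are left with
$$
\E[\Xt{1}_i+\Xt{2}_i\mid\Xt{1}_i]-2N(\Xt{1}_i)/D(\Xt{1}_i).
$$
By Proposition~\ref{prop:ratio}, this equals
$$
\Xt{1}_i+\E[\Xt{2}_i\mid\Xt{1}_i]-\bigl(\E[\Xt{2}_i\mid\Xt{1}_i]+\Xt{1}_i\bigr)=0,
$$
since $2N/D=\E[\Xt{2}_i\mid\Xt{1}_i]+\Xt{1}_i$. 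Equivalently, $\Xt{1}_i+\Xt{2}_i=2X_i$, so the first term is $2\bigl(\E[X_i\mid\Xt{1}_i]-N/D\bigr)g(\Xt{1}_i)$. Here $N$ and $D$ are evaluated at the true $\theta^*$, which is the quantity appearing in the influence function. Under $H_0'(\xt{1},g)$ this is the null-implied quantity, so the term vanishes.

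\textbf{Second term.} $A(g)$ is a deterministic constant, being an integral against the marginal law of $\Xt{1}$, so it factors out. This leaves $-2A(g)\E[\varphi_\theta(X_i;\theta^*)\mid\Xt{1}_i]$, which is the first equality of the statement.

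\textbf{Second equality.} I will compute $\E[q(X_i)\mid\Xt{1}_i=s]$ for a general function $q$, reusing the derivation behind Proposition~\ref{prop:ratio}. The joint density of $(X_i,\Xt{1}_i)$ is $f_X(x)f_R(s-x;0,\Sigma)$. By symmetry and the shift-family property,
$$
f_R(s-x;0,\Sigma)=f_R(x;s,\Sigma).
$$
Hence
$$
\E[q(X_i)\mid\Xt{1}_i=s]=\frac{\int q(x)f_X(x)f_R(x;s,\Sigma)\,dx}{\int f_X(x)f_R(x;s,\Sigma)\,dx}.
$$
Reading the integrals against $f_R(\cdot;s,\Sigma)$ as expectations over $U_i\sim R(s,\Sigma)$ gives
$$
\frac{\E_{U_i}[q(U_i)f_X(U_i)]}{\E_{U_i}[f_X(U_i)]},
$$
with sums replacing integrals in the count case. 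Setting $q=\varphi_\theta(\cdot;\theta^*)$ yields the claim.

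\textbf{Main obstacle.} The delicate step is justifying that the first term has conditional mean zero. This requires that the $N$ and $D$ in $\varphi_C$ coincide with the true conditional-mean ratio, which is exactly what $H_0'(\xt{1},g)$ together with the definition of $h$ under the selected null supplies. I would state this identification explicitly rather than leave it implicit. The remaining care points are the independence across $i$, which licenses reducing conditioning on $\Xt{1}$ to conditioning on $\Xt{1}_i$, and the integrability of $\varphi_\theta$, which follows from the finite-variance assumptions already imposed.
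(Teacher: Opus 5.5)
Your proposal is correct and follows essentially the same route as the paper: you expand $\varphi_C$ via Corollary~\ref{cor:delta}, kill the first term with Proposition~\ref{prop:ratio}, factor out $A(g)$, and evaluate the remaining conditional expectation with the same shifted-noise density computation that the paper packages as Lemma~\ref{lem:cexp_generic} (with $s=\varphi_\theta(\cdot;\theta^*)$). One small correction to your ``main obstacle'': $N$ and $D$ in $\varphi_C$ are the true-$\theta^*$ quantities because $\hat\theta_n^{(-k)}$ is assumed asymptotically linear for $\theta^*$ (which is what the selected null buys you), not because of $H_0'(\xt{1},g)$, a single moment condition that cannot pin down $N/D$ pointwise; once that is in place, Proposition~\ref{prop:ratio} makes the first term conditionally mean-zero with no further hypothesis.
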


At a first glance, Proposition \ref{prop:cIF} seems to imply that one should estimate the debiasing terms in \eqref{eq:Cn'} with a plug-in estimator; that is, that we should consider
\begin{equation}
\label{eq:Cn1}
C_n^{(1)}(\Xt{1},\Xt{2};\Theta_0(\Xt{1})) = C_n(\Xt{1},\Xt{2};\Theta_0(\Xt{1})) + 2A(g) \frac{1}{n}\sum_{i=1}^n \frac{\E_{U_i}[\varphi_\theta(U_i;\hat\theta_n^{(-k)})f_{X}(U_i;\eta_i(\hat\theta_n^{(-k)}))]}{\E_{U_i}[f_{X}(U_i;\eta_i(\hat\theta_n^{(-k)}))]}
\end{equation}
where $\hat\theta_n^{(-k)}$ are cross-fit estimates of $\theta^*$. As before, we can compute all expectations taken with respect to $U_i$ using the Monte Carlo strategy outlined in Section \ref{subsubsec:ale}.

Perhaps surprisingly, $C_n^{(1)}(\Xt{1},\Xt{2};\Theta_0(\Xt{1}))$ in \eqref{eq:Cn1} is a poor choice. While debiasing using a sample average of plug-in influence function evaluations is a common tactic in semiparametric statistics \citep{kennedy2024semiparametric}, in the present setting, the plug-in bias correction is itself conditionally biased (i.e., the conditional mean of $C_n^{(1)}(\Xt{1},\Xt{2};\Theta_0(\Xt{1}))$ given $\Xt{1}$ is non-zero). 
Consider instead the following, which adjusts the plug-in debiasing term in \eqref{eq:Cn1} by a constant factor:
\begin{equation}
\label{eq:Dn}
D_n(\xt{1},\xt{2};\Theta_0(\xt{1})) = C_n(\Xt{1},\Xt{2};\Theta_0(\Xt{1})) + 2A(g)(I+B)^{-1}\frac{1}{n}\sum_{i=1}^n \frac{\E_{U_i}[\varphi_\theta(U_i;\hat\theta_n^{(-k)})f_{X}(U_i;\eta_i(\hat\theta_n^{(-k)}))]}{\E_{U_i}[f_{X}(U_i;\eta_i(\hat\theta_n^{(-k)}))]},
\end{equation}
where 
\begin{equation}
\label{eq:B}
B=\int\left(\nabla_\theta\frac{\E_{U_s}[\varphi_\theta(U_s;\theta)f_{X}(U_s;\eta_s(\theta))]}{\E_{U_s}[f_{X}(U_s;\eta_s(\theta))]}\Big{\vert}_{\theta=\theta^*}\right)^\top dP_{\xt{1}}(s)
\end{equation} and we require that $I+B$ is invertible. The next sequence of results confirms that $D_n(\xt{1},\xt{2};\Theta_0(\xt{1}))$ is conditionally mean-zero and asymptotically Gaussian. We start with Lemma \ref{lem:DnIF}, which provides a linearized representation of $D_n(\xt{1},\xt{2};\Theta_0(\xt{1}))$ in \eqref{eq:Dn}. As in previous sections, we assume our data are generated as a triangular array but suppress indexing by $n$ to simplify notation.

\begin{lemma}
\label{lem:DnIF}
In the setting of Proposition \ref{prop:cIF}, if $I+B$ is invertible, then $D_n(\xt{1},\xt{2};\Theta_0(\xt{1}))$ in \eqref{eq:Dn} can be written as
$$
D_n(\xt{1},\xt{2};\Theta_0(\xt{1}))= \frac{1}{n}\sum_{i=1}^n \varphi_D(\xt{1}_i,\xt{2}_i) + o_p(n^{-1/2})
$$
where
\small
\begin{align*}
\varphi_D(\Xt{1}_i,\Xt{2}_i)
&= \left(\Xt{1}_i+\Xt{2}_i-2\frac{N(\Xt{1}_i)}{D(\Xt{1}_i)}\right)g(\Xt{1}_i) -2A(g)(I+B)^{-1}\left(\varphi_\theta(X_i;\theta^*)-\E_{\eta_i(\theta^*)}\left[\varphi_\theta(X_i;\theta^*)|\Xt{1}_i\right]\right).
\end{align*}
\normalsize
\end{lemma}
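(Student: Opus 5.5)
The plan is to linearize the two pieces of $D_n$ separately and then combine them. For the first piece, $C_n$, Lemma~\ref{lem:ASL} together with Corollary~\ref{cor:delta} gives
$$
C_n=\frac1n\sum_{i=1}^n\left\{\left(\xt{1}_i+\xt{2}_i-2\tfrac{N(\xt{1}_i)}{D(\xt{1}_i)}\right)g(\xt{1}_i)-2A(g)\varphi_\theta(x_i;\theta^*)\right\}+o_p(n^{-1/2}).
$$
The work therefore lies in the second piece, the plug-in correction. Define
$$
m_i(\theta)=\frac{\E_{U_i}[\varphi_\theta(U_i;\theta)f_X(U_i;\eta_i(\theta))]}{\E_{U_i}[f_X(U_i;\eta_i(\theta))]}.
$$
By Proposition~\ref{prop:ratio} applied to the function $\varphi_\theta$ in place of the identity (this is the identity used in Proposition~\ref{prop:cIF}), $m_i(\theta^*)=\E_{\eta_i(\theta^*)}[\varphi_\theta(X_i;\theta^*)\mid\Xt{1}_i]$.

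The first step is a first-order expansion of the correction term around $\theta^*$:
$$
\frac1n\sum_i m_i(\hat\theta_n^{(-k(i))})=\frac1n\sum_i m_i(\theta^*)+\frac1n\sum_i \nabla_\theta m_i(\theta^*)^\top(\hat\theta_n^{(-k(i))}-\theta^*)+o_p(n^{-1/2}).
$$
This requires differentiability of $m_i$, which I take from the differentiability assumption of Proposition~\ref{prop:delta} together with smoothness of $\theta\mapsto\varphi_\theta(\cdot;\theta)$, plus $\hat\theta-\theta^*=O_p(n^{-1/2})$.

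In the gradient term I would replace the fold averages by their limit. Since $\hat\theta_n^{(-k)}$ is computed on the complement of fold $k$, the average of $\nabla_\theta m_i(\theta^*)$ over $i\in M_k$ converges to $B^\top$ by a law of large numbers over $\Xt{1}_i$, with $B$ as in \eqref{eq:B}. Each cross-fit estimator is asymptotically linear, and the folds are of equal size, so
$$
\sum_k \frac{|M_k|}{n}(\hat\theta_n^{(-k)}-\theta^*)=\frac1n\sum_i\varphi_\theta(x_i;\theta^*)+o_p(n^{-1/2}).
$$
Each observation appears in $K-1$ complements, each of size $n(K-1)/K$, and the weights $|M_k|/n=1/K$ combine to give exactly one full-sample average. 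The gradient term is therefore $B^\top\frac1n\sum_i\varphi_\theta(x_i)+o_p(n^{-1/2})$. I will keep the paper's convention that $B$ already carries the transpose, so the term reads $B\,\bar\varphi_\theta$.

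Writing $\bar\varphi=\frac1n\sum_i\varphi_\theta(x_i;\theta^*)$ and $\bar m=\frac1n\sum_i m_i(\theta^*)$, the total is
$$
D_n=\frac1n\sum_i(\cdot)g-2A(g)\bar\varphi+2A(g)(I+B)^{-1}\left(\bar m+B\bar\varphi\right)+o_p(n^{-1/2}).
$$
The key algebraic step is the identity $-I+(I+B)^{-1}B=-(I+B)^{-1}$. Applying it, the $\bar\varphi$ terms combine to $-2A(g)(I+B)^{-1}\bar\varphi$, and hence
$$
D_n=\frac1n\sum_i\left\{(\cdot)g-2A(g)(I+B)^{-1}\bigl(\varphi_\theta(x_i)-m_i(\theta^*)\bigr)\right\}+o_p(n^{-1/2}),
$$
which is the claimed $\varphi_D$. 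This step also explains the factor $(I+B)^{-1}$: the naive correction in \eqref{eq:Cn1} would leave a residual $2A(g)B\bar\varphi$, which is not conditionally mean zero.

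I expect the main obstacle to be controlling the expansion remainder uniformly in $i$ given $\Xt{1}$. This requires a second-order bound on $m_i$, for example a locally Lipschitz gradient with integrable envelope, so that $\frac1n\sum_i O(\|\hat\theta-\theta^*\|^2)=o_p(n^{-1/2})$. It also requires justifying the replacement of the fold average of $\nabla m_i$ by $B$ at rate $o_p(1)$, which, multiplied by $O_p(n^{-1/2})$, suffices. I would state these as mild regularity conditions analogous to those of Proposition~\ref{prop:delta} and otherwise treat them as routine.
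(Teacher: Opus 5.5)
Your proposal is correct and follows essentially the same route as the paper. Both arguments linearize $C_n$ via Lemma~\ref{lem:ASL} and Corollary~\ref{cor:delta}, expand the plug-in correction as $\frac{1}{n}\sum_i m_i(\theta^*)+B\bar\varphi+o_p(n^{-1/2})$ using the cross-fit asymptotic linearity of $\hat\theta_n^{(-k)}$, and conclude with $I-(I+B)^{-1}B=(I+B)^{-1}$.

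The only difference is in how the expansion is controlled. You Taylor-expand each $m_i$ pointwise, which needs a second-order envelope and an LLN for the gradient. The paper instead expands only the population average of $m_i(\theta)$ (its $b(\cdot;\theta)$), uses that this average vanishes at $\theta^*$, and discards the remaining empirical-process term by cross-fitting. Also, your intermediate expression $B^\top\bar\varphi$ should read $B\bar\varphi$ by your own computation.
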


\begin{remark}
\label{rem:geo}
The adjustment factor $(I+B)^{-1}$ in \eqref{eq:Dn} can be derived from a recursive sequence of debiasing steps that target $C_n'(\xt{1},\xt{2};\Theta_0(\xt{1}))$ in \eqref{eq:Cn'}. At each step, using $\hat\theta_n^{(-k)}$ in place of $\theta^*$ results in a term involving $\E_{\eta_i(\theta^*)}\left[\varphi_\theta(X_i;\theta^*)|\Xt{1}_i\right]$ (which is embedded in $\E_{\eta_i(\theta^*)}\left[\varphi_C(\Xt{1}_i,\Xt{2}_i)|\Xt{1}_i\right]$; see Proposition \ref{prop:cIF}). This, in turn, triggers a new bias term that must subsequently be removed. Remarkably, this sequence forms a geometric series with common ratio $-B$. Recall that such a series converges to $(I+B)^{-1}$ provided that $\max(|\lambda_{\max}(B)|, |\lambda_{\min}(B)|) \le 1 $. This is a stronger requirement than that required by Lemma \ref{lem:DnIF}, though it does hold in important special cases. We discuss the iterative debiasing perspective in detail in Supplement \ref{app:geometric}, then, in Supplement \ref{app:eigen}, we show that the corresponding eigenvalue condition holds when $\hat\theta_n^{(-k)}$ is an efficient estimator of $\theta^*$.
\end{remark}

The next theorem establishes that $D_n(\Xt{1},\Xt{2};\Theta_0(\Xt{1}))$ in \eqref{eq:Dn} is conditionally mean-zero and asymptotically Gaussian. The notion of convergence is complicated by conditioning. Here we define conditional convergence in distribution as convergence in probability of the conditional cumulative distribution function to the target cumulative distribution function; see \cite{niu2024reconciling} for a review of conditional asymptotics. 

\begin{theorem}[Conditional asymptotic Gaussianity of $\sqrt{n}\vect{D_n(\Xt{1},\Xt{2};\Theta_0(\Xt{1}))}$]
\label{thm:Dn}
Consider a sequence of selected hypotheses $H_{0,n}(\xt{1}):\theta^*\in\Theta_{0,n}(\xt{1})$ and their corresponding reformulated hypotheses $H_{0,n}'(\xt{1},g)$ (see \eqref{eq:H0prime2}). In the setting of Lemma \ref{lem:DnIF}, suppose that $g$ is a scalar-valued function chosen such that $\varphi_D(\cdot)$ is not identically zero. Define the filtration $\mathcal{F}_n=\sigma(\{\Xt{1}_i\}_{i=1}^n)$. Suppose that for all $i$,
$\Var\left[\varphi_D(\Xt{1}_i,\Xt{2}_i)|\mathcal{F}_n\right]$ is finite, and let
$$
\Omega_n=\frac{1}{n}\sum_{i=1}^n \Var\left[\varphi_D(\Xt{1}_i,\Xt{2}_i)|\mathcal{F}_n\right].
$$    
Suppose further that for every $\epsilon>0$, 
$$
\sum_{i=1}^n \E\left[\lVert \Omega_n^{-1/2} \varphi_{D}(\Xt{1}_i,\Xt{2}_i)\rVert^2I\left\{\lVert \Omega_n^{-1/2}\varphi_{D}(\Xt{1}_i,\Xt{2}_i)\rVert > \epsilon\right\}|\mathcal{F}_n\right]\to 0.
$$ 
It follows that for any $x\in\mathbb{R}^p$, under the sequence of null hypotheses given by $H_{0,n}'(\xt{1},g)$,
$$
P\left(\sqrt{n}\Omega_n^{-1/2}\vect{D_n(\Xt{1},\Xt{2};\Theta_0(\Xt{1}))} \le x \Big\vert\mathcal{F}_n \right)\overset{p}\to \Phi_p(x),
$$
where $\Phi_p$ is the cumulative distribution function of the $p$-dimensional standard multivariate Gaussian distribution and $p$ is the dimension of $X$. 
\end{theorem}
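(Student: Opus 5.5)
Conditional on $\mathcal{F}_n=\sigma(\{\Xt{1}_i\}_{i=1}^n)$, the pairs $(\Xt{1}_i,\Xt{2}_i)$ remain independent across $i$, because the $(X_i,W_i)$ are independent across $i$. So $\frac{1}{\sqrt n}\sum_i\varphi_D(\Xt{1}_i,\Xt{2}_i)$ is, given $\mathcal{F}_n$, a sum of independent (non-identically distributed) terms. The plan is to invoke Lemma \ref{lem:DnIF} to reduce $D_n$ to this linear term. We then check that each summand is conditionally mean zero under $H'_{0,n}(\xt{1},g)$, apply a conditional Lindeberg--Feller central limit theorem (in the sense of \cite{bulinski2017conditional}, \cite{niu2024reconciling}), and finally absorb the $o_p(n^{-1/2})$ remainder with a conditional Slutsky argument.

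\textbf{Step 1 (conditional centring).} We compute $\E[\varphi_D(\Xt{1}_i,\Xt{2}_i)\mid\mathcal{F}_n]=\E[\varphi_D(\Xt{1}_i,\Xt{2}_i)\mid \Xt{1}_i]$, using independence across $i$. For the first term, under the null the quantity $2N(\Xt{1}_i)/D(\Xt{1}_i)-\Xt{1}_i$ equals $\E[\Xt{2}_i\mid\Xt{1}_i]$ by Proposition \ref{prop:ratio}. Hence $\E[(\Xt{1}_i+\Xt{2}_i-2N/D)g(\Xt{1}_i)\mid\Xt{1}_i]=g(\Xt{1}_i)\,(\E[\Xt{2}_i\mid\Xt{1}_i]-h(\Xt{1}_i,\Theta_0(\xt{1})))=0$. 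The second term is $-2A(g)(I+B)^{-1}$ times $\varphi_\theta(X_i;\theta^*)-\E[\varphi_\theta(X_i;\theta^*)\mid\Xt{1}_i]$, and this difference has zero conditional mean by construction. Here $A(g)$ and $(I+B)^{-1}$ are deterministic population constants, so they pass through the conditional expectation. The summands are therefore conditionally mean zero, and $\Omega_n$ is exactly the conditional variance of $n^{-1/2}\sum_i\varphi_D$.

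\textbf{Step 2 (conditional CLT).} For the standardized summands $\xi_{n,i}=n^{-1/2}\Omega_n^{-1/2}\varphi_D(\Xt{1}_i,\Xt{2}_i)$, the row-wise conditional variances sum to $I$. The conditional Lindeberg condition in the theorem is precisely what a conditional Lindeberg--Feller theorem requires, interpreted with the $\sqrt n$ normalization absorbed into $\varphi_D$ as in Theorem \ref{thm:Cn}. To prove this step I would argue via characteristic functions. I would show that $\E[\exp(it^\top\sum_i\xi_{n,i})\mid\mathcal{F}_n]-e^{-\|t\|^2/2}\overset{p}{\to}0$ for each $t$, using the standard telescoping and Taylor bound $|e^{ix}-1-ix+x^2/2|\le\min(|x|^3,x^2)$ applied conditionally, with the bound controlled by the conditional Lindeberg sum. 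Conditional Lévy continuity, applied along subsequences that converge almost surely, then gives convergence in probability of the conditional CDF to $\Phi_p$ at every $x$; $\Phi_p$ is continuous, so no continuity-point caveat arises. The requirement that $\varphi_D$ is not identically zero ensures that $\Omega_n$ is invertible.

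\textbf{Step 3 (remainder).} Lemma \ref{lem:DnIF} gives $\sqrt n\,\Omega_n^{-1/2}(D_n-\frac1n\sum_i\varphi_D)=\Omega_n^{-1/2}o_p(1)$. This requires $\Omega_n^{-1/2}=O_p(1)$, which I would take from nondegeneracy of $\Omega_n$. An unconditional $o_p(1)$ term $r_n$ is also conditionally negligible, since $P(|r_n|>\delta\mid\mathcal{F}_n)\overset{p}{\to}0$ by Markov's inequality applied to its expectation. A conditional Slutsky sandwich, $P(Z_n+r_n\le x\mid\mathcal{F}_n)$ bracketed between $P(Z_n\le x\pm\delta\mid\mathcal{F}_n)\pm P(|r_n|>\delta\mid\mathcal{F}_n)$, then finishes the proof. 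I expect the main obstacle to be the remainder. The $o_p$ in Lemma \ref{lem:DnIF} comes from cross-fitting and from the plug-in $\hat\theta_n^{(-k)}$, which depends on $\Xt{1}$ through the data, so I need the remainder to be negligible conditionally rather than only marginally. The Markov-based argument above handles this, provided that $\Omega_n$ is bounded away from singularity in probability.
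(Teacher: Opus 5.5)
Your proposal follows the paper's proof essentially step for step: reduce to $\frac{1}{n}\sum_i\varphi_D$ via Lemma~\ref{lem:DnIF}, check conditional mean zero (Proposition~\ref{prop:ratio} for the first term, the tower rule for the second) and conditional independence given $\mathcal{F}_n$, apply a conditional Lindeberg--Feller CLT, and finish with conditional Slutsky. Your extra details are sound and fill in points the paper leaves implicit: the characteristic-function sketch, the Markov argument that a marginal $o_p(1)$ remainder is conditionally negligible, and reading the Lindeberg condition with the $n^{-1/2}$ normalization. One caveat is that ``$\varphi_D$ not identically zero'' does not by itself give invertibility of $\Omega_n$, let alone $\Omega_n^{-1/2}=O_p(1)$, so that nondegeneracy is a genuine implicit assumption, and the paper's appeal to conditional Slutsky needs it just as much as your argument does.
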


\begin{remark}
\label{rem:degenerate}
The restriction placed on $g$ in Theorem \ref{thm:Dn} guards against degeneracy in $\varphi_D$. From a practical perspective, most natural choices for $g$ do not lead to degeneracy in $\varphi_D$. The one exception is that in many settings, $g$ cannot be chosen to be constant in $\Xt{1}$ (though this choice tends to have lower power when it is available, and therefore is not advised regardless).
\end{remark}

\begin{remark}
Theorem \ref{thm:Dn} assumes knowledge of the constant $A(g)(I+B)^{-1}$ in the definition of $D_n(\xt{1},\xt{2};\Theta_0(\xt{1}))$ in \eqref{eq:Dn}, where $A(g)$ was defined in Corollary \ref{cor:delta} and $B$ was defined in \eqref{eq:B}. This is a function of $\theta^*$. In Supplement \ref{app:AB}, we show that using any consistent estimator of $A(g)(I+B)^{-1}$ has an asymptotically negligible impact on $D_n(\Xt{1},\Xt{2};\Theta_0(\Xt{1}))$. In practice, we estimate $A(g)(I+B)^{-1}$ with the cross-fit plug-in estimator with all integrals approximated with a Monte Carlo simulation.
\end{remark}

In light of Theorem \ref{thm:Dn}, we can perform post-selection inference with a $\chi^2$ test using  $D_n(\xt{1},\xt{2};\Theta_0(\xt{1}))$; this is analogous to the test for pre-specified hypotheses in Proposition \ref{prop:asym}. Consider 
\small
\begin{equation}
\label{eq:Tn2}
T_n(\Xt{1},\Xt{2};\Theta_0(\Xt{1}),\hat\Omega_n)=n\vect{D_n(\Xt{1},\Xt{2};\Theta_0(\Xt{1}))}^\top\hat\Omega_n^{-1}\vect{D_n(\Xt{1},\Xt{2};\Theta_0(\Xt{1}))},
\end{equation} 
\normalsize
where $\hat\Omega_n$ is a consistent estimator of the asymptotic covariance of $\sqrt{n}\vect{D_n(\Xt{1},\Xt{2};\Theta_0(\Xt{1}))}$. We define the test statistic
\begin{equation}
\label{eq:test2}
\vartheta_\alpha^ {\Theta_0(\xt{1})}\left(\Xt{1},\Xt{2}\right)=I\left(T_n\left(\Xt{1},\Xt{2};\Theta_0(\xt{1}),\hat\Omega_n\right) \ge q_{p,1-\alpha}\right),
\end{equation}
where $q_{p,1-\alpha}$ is the $(1-\alpha)$-quantile of the $\chi_{p}^2$ distribution. The notation $\vartheta_\alpha^ {\Theta_0(\xt{1})}\left(\Xt{1},\Xt{2}\right)$ is intended to convey the fact that the test involves both $\Xt{1}$ and $\Xt{2}$ (in contrast, for instance, to the test in \eqref{eq:sel-t1e-2}).

The next two corollaries to Theorem \ref{thm:Dn} describe the properties of $T_n(\Xt{1},\Xt{2};\Theta_0(\Xt{1}),\hat\Omega_n)$ and $\vartheta_\alpha^ {\Theta_0(\xt{1})}\left(\Xt{1},\Xt{2}\right)$. They follow from applications of the continuous mapping theorem and conditional Slutsky's theorem \citep{niu2024reconciling}.

\begin{corollary}
\label{cor:T2}
In the setting of Theorem \ref{thm:Dn}, let $\hat\Omega_n$ denote a consistent estimator of $\Omega_n$. Then,
for any $x\in\mathbb{R}^+$, under the sequence of null hypotheses given by $H_{0,n}'(\xt{1},g)$, it holds that
$$
P\left(T_n(\Xt{1},\Xt{2};\Theta_0(\Xt{1}),\hat\Omega_n)\le x\Big\vert\mathcal{F}_n \right)\overset{p}\to F_{p}(x)
$$
where $F_p$ is the cumulative distribution function of the $\chi^2_{p}$ distribution.
\end{corollary}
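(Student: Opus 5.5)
The plan is to deduce Corollary~\ref{cor:T2} from Theorem~\ref{thm:Dn} using a conditional Slutsky step followed by a conditional continuous mapping step, both in the sense of \cite{niu2024reconciling}. Write $Z_n=\sqrt{n}\,\Omega_n^{-1/2}\vect{D_n(\Xt{1},\Xt{2};\Theta_0(\Xt{1}))}$. Theorem~\ref{thm:Dn} states that, conditionally on $\mathcal{F}_n$, $Z_n\to N_p(0,I_p)$ in the sense that the conditional CDF converges in probability to $\Phi_p$ at every point. The test statistic can be rewritten as
$$
T_n(\Xt{1},\Xt{2};\Theta_0(\Xt{1}),\hat\Omega_n)=Z_n^\top\left(\Omega_n^{1/2}\hat\Omega_n^{-1}\Omega_n^{1/2}\right)Z_n ,
$$
so the argument reduces to controlling the random middle matrix and then applying the map $z\mapsto\lVert z\rVert^2$.

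First, I need $\Omega_n^{1/2}\hat\Omega_n^{-1}\Omega_n^{1/2}\overset{p}{\to}I_p$. I will read ``consistent estimator of $\Omega_n$'' in the natural relative sense appropriate to a triangular array, namely $\Omega_n^{-1/2}\hat\Omega_n\Omega_n^{-1/2}\overset{p}{\to}I_p$. If instead one has $\hat\Omega_n-\Omega_n\overset{p}{\to}0$, the same conclusion follows once the eigenvalues of $\Omega_n$ are bounded away from zero. That lower bound is where the nondegeneracy assumption on $\varphi_D$ (Remark~\ref{rem:degenerate}) is used. Since matrix inversion is continuous at $I_p$, $\Omega_n^{1/2}\hat\Omega_n^{-1}\Omega_n^{1/2}\overset{p}{\to}I_p$. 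Because unconditional convergence in probability to a constant implies conditional convergence in probability given $\mathcal{F}_n$ along the sequence (in the sense of \cite{niu2024reconciling}), we get $Z_n^\top(\Omega_n^{1/2}\hat\Omega_n^{-1}\Omega_n^{1/2}-I_p)Z_n=o_p(1)$. This uses the fact that $Z_n$ is conditionally tight, which follows from conditional convergence to a continuous limit.

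Next, the conditional Slutsky theorem gives that $T_n$ has the same conditional limit as $\lVert Z_n\rVert^2$. The map $z\mapsto\lVert z\rVert^2$ is continuous, so the conditional continuous mapping theorem gives
$$
P\left(\lVert Z_n\rVert^2\le x\mid\mathcal{F}_n\right)\overset{p}{\to}P(\lVert N_p(0,I_p)\rVert^2\le x)=F_p(x).
$$
Pointwise convergence is legitimate for every $x>0$ because $F_p$ is continuous.

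The hardest step is not the algebra but making sure the conditional versions of Slutsky and continuous mapping apply. Theorem~\ref{thm:Dn} only gives pointwise convergence of conditional CDFs in probability, whereas the mapping theorem is cleanest for conditional weak convergence. I would handle this with the subsequence characterization: convergence in probability of the conditional CDFs at a countable dense set of points is equivalent to the property that every subsequence has a further subsequence along which the conditional laws converge weakly almost surely. Along that sub-subsequence, the ordinary Slutsky and continuous mapping theorems apply $\omega$-wise. Continuity of $F_p$ then upgrades convergence at the countable set to every $x\in\mathbb{R}^+$, and returning from the sub-subsequence gives convergence in probability.
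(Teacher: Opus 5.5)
Your proposal is correct and follows the same route as the paper. The paper gives no separate proof; it simply attributes Corollary~\ref{cor:T2} to the conditional Slutsky and continuous mapping theorems of \cite{niu2024reconciling} applied to Theorem~\ref{thm:Dn}, and your rewriting $T_n=Z_n^\top(\Omega_n^{1/2}\hat\Omega_n^{-1}\Omega_n^{1/2})Z_n$ together with the subsequence argument supplies the details the paper omits.

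One small correction to your side remark. The condition that $\varphi_D$ is not identically zero does not by itself keep the eigenvalues of $\Omega_n$ away from zero uniformly in $n$. So under the absolute reading $\hat\Omega_n-\Omega_n\overset{p}{\to}0$ that lower bound would be an extra assumption, although your main argument, which uses the relative reading of consistency, does not need it.
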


\begin{corollary}
\label{cor:selt1e}
In the setting of Theorem \ref{thm:Dn}, let $\hat\Omega_n$ denote a consistent estimator of $\Omega_n$. Then, 
for any $\alpha\in(0,1)$, under the sequence of null hypotheses given by $H_{0,n}'(\xt{1},g)$, it holds that 
$$
P\left(\vartheta_\alpha^ {\Theta_0(\xt{1})}\left(\Xt{1},\Xt{2}\right)=1\Big\vert\mathcal{F}_n \right)\overset{p}\to \alpha.
$$
\end{corollary}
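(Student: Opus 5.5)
The plan is to derive Corollary~\ref{cor:selt1e} directly from Corollary~\ref{cor:T2}. Write $T_n$ for $T_n(\Xt{1},\Xt{2};\Theta_0(\Xt{1}),\hat\Omega_n)$. By the definition in \eqref{eq:test2},
\[
P\left(\vartheta_\alpha^{\Theta_0(\xt{1})}(\Xt{1},\Xt{2})=1\Big\vert\mathcal{F}_n\right)=1-P\left(T_n< q_{p,1-\alpha}\Big\vert\mathcal{F}_n\right).
\]
So it suffices to show that $P(T_n<q_{p,1-\alpha}\mid\mathcal{F}_n)\overset{p}\to F_p(q_{p,1-\alpha})=1-\alpha$.

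Corollary~\ref{cor:T2} gives $P(T_n\le x\mid\mathcal{F}_n)\overset{p}\to F_p(x)$ for every fixed $x>0$. Since $q_{p,1-\alpha}>0$, applying this at $x=q_{p,1-\alpha}$ handles the non-strict inequality immediately.

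The only point needing care is the gap between ``$<$'' and ``$\le$''. I will sandwich. For any $\delta\in(0,q_{p,1-\alpha})$,
\[
P(T_n\le q_{p,1-\alpha}-\delta\mid\mathcal{F}_n)\le P(T_n<q_{p,1-\alpha}\mid\mathcal{F}_n)\le P(T_n\le q_{p,1-\alpha}\mid\mathcal{F}_n).
\]
By Corollary~\ref{cor:T2}, the outer terms converge in probability to $F_p(q_{p,1-\alpha}-\delta)$ and $1-\alpha$, respectively.

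Because $F_p$ is continuous, $F_p(q_{p,1-\alpha}-\delta)\to 1-\alpha$ as $\delta\downarrow0$. Fix $\epsilon>0$ and choose $\delta$ with $F_p(q_{p,1-\alpha}-\delta)>1-\alpha-\epsilon/2$. With probability tending to one, the middle term then lies in $(1-\alpha-\epsilon,\,1-\alpha+\epsilon)$, which gives the claimed convergence in probability.

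I expect no substantive obstacle: all of the conditional asymptotics are already contained in Corollary~\ref{cor:T2}, which rests on Theorem~\ref{thm:Dn} together with the conditional Slutsky and continuous mapping theorems. The one thing to check explicitly is that the convergence in Corollary~\ref{cor:T2} is pointwise at the continuity point $q_{p,1-\alpha}$. This holds trivially here because the $\chi^2_p$ distribution function is continuous everywhere on $\mathbb{R}^+$.
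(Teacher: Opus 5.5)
Your proposal is correct and takes essentially the paper's approach. The paper only asserts that Corollary~\ref{cor:selt1e} follows from the conditional $\chi^2_p$ limit of $T_n$ (via the continuous mapping theorem and conditional Slutsky's theorem) evaluated at $q_{p,1-\alpha}$; your sandwich argument spells out the strict-versus-non-strict inequality step that the paper leaves implicit.
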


We conclude with Theorem \ref{thm:sel}, which confirms that asymptotic Type I error control of $H_0'(\xt{1},g)$ in Corollary \ref{cor:selt1e} propagates back to $H_0(\xt{1})$. It follows immediately from the fact that $H_0(\xt{1})$ implies $H_0'(\xt{1},g)$.

\begin{theorem}[Validity of Algorithm \ref{alg:sel}]
\label{thm:sel}
Suppose that we use Algorithm \ref{alg:sel} to generate the hypothesis $H_0(\xt{1})$ and that $\vartheta_\alpha^ {\Theta_0(\xt{1})}\left(\Xt{1},\Xt{2}\right)$ is an asymptotically valid test for $H_0'(\xt{1},g)$ in the sense of Corollary \ref{cor:selt1e}.
Then, $\vartheta_\alpha^{\Theta_0(\xt{1})}\left(\Xt{1},\Xt{2}\right)$ is also an asymptotically valid test of $H_0(\xt{1})$.
\end{theorem}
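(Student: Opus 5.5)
The argument is a contrapositive/inclusion argument that mirrors Theorem \ref{thm:t1e}, carried out conditionally on $\mathcal{F}_n=\sigma(\{\Xt{1}_i\}_{i=1}^n)$. Throughout, fix a realization $\xt{1}$ and the selected hypothesis $H_0(\xt{1}):\theta^*\in\Theta_0(\xt{1})$. Because the selection in Step 2 of Algorithm \ref{alg:sel} is a function of $\xt{1}$ only, conditioning on $\mathcal{F}_n$ fixes both $\Theta_0(\xt{1})$ and the function $h(\cdot,\Theta_0(\xt{1}))$ in \eqref{eq:hcases}. Conditionally, the selected hypothesis is therefore a deterministic hypothesis, exactly as in the pre-specified setting of Section \ref{sec:framing}.

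The first step is to show that $H_0(\xt{1})$ implies $H_0'(\xt{1},g)$. Suppose $\theta^*\in\Theta_0(\xt{1})$. Then by \eqref{eq:hcases}, $h(\Xt{1}_i,\Theta_0(\xt{1}))=\E_{\eta_i(\theta^*)}[\Xt{2}_i\mid\Xt{1}_i]$. Proposition \ref{prop:orthogonality}, applied with $A=\Xt{2}_i$, $B=\Xt{1}_i$ and test function $g$ (componentwise if needed), gives $\E_{\eta_i(\theta^*)}[(\Xt{2}_i-h(\Xt{1}_i,\Theta_0(\xt{1})))g(\Xt{1}_i)^\top]=0$ for every $i$. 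This is \eqref{eq:H0prime2}. Finite second moments of $X_i$ and $W_i$ guarantee that the conditional mean exists.

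The second step transfers the type I error guarantee. Since the null event for $H_0(\xt{1})$ is contained in the null event for $H_0'(\xt{1},g)$, any sequence of selected nulls $H_{0,n}(\xt{1})$ that holds yields a sequence of reformulated nulls $H_{0,n}'(\xt{1},g)$ that also holds. The hypothesis of the theorem, namely validity in the sense of Corollary \ref{cor:selt1e}, then gives $P(\vartheta_\alpha^{\Theta_0(\xt{1})}(\Xt{1},\Xt{2})=1\mid\mathcal{F}_n)\overset{p}{\to}\alpha$. Because the test statistic and its conditional law do not change when we relabel which hypothesis we call the null, this is precisely asymptotic selective type I error control for $H_0(\xt{1})$.

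If an unconditional statement is also wanted, or the selective form \eqref{eq:sel-t1e}, I would finish as follows. The event $\{\Theta_0(\Xt{1})=\Theta_0(\xt{1})\}$ is $\mathcal{F}_n$-measurable. The conditional rejection probabilities are bounded by $1$, so bounded convergence lets us integrate the in-probability convergence of conditional probabilities over that event. This yields a $\limsup\le\alpha$ statement.

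The main obstacle is only a subtle point, not a calculation: the implication in the first step must hold for the null that was \emph{actually selected}. That requires $h$ to be defined through $\Theta_0(\xt{1})$ with $\xt{1}$ treated as fixed, so that conditioning on $\mathcal{F}_n$ does not alter the conditional-mean identity. I would stress that Proposition \ref{prop:ratio} and the definition of $h$ involve no null hypothesis, so this conditioning is harmless.
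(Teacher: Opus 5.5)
Your proof is correct and follows the paper's argument: the paper's entire justification is that $H_0(\xt{1})$ implies $H_0'(\xt{1},g)$ by construction, which you verify explicitly via Proposition~\ref{prop:orthogonality} with $\Theta_0(\xt{1})$ held fixed, so the conditional guarantee of Corollary~\ref{cor:selt1e} transfers directly. Your optional final paragraph goes beyond the theorem and is not sound as stated for the selective form \eqref{eq:sel-t1e}: dividing by $P(\Theta_0(\Xt{1})=\Theta_0(\xt{1}))$ requires that probability to stay bounded away from zero, which typically fails (e.g., for a specific clustering of $n$ points), and this is why the paper claims only the weaker in-probability conditional guarantee.
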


\begin{remark}
The guarantee in Corollary \ref{cor:selt1e} (and Theorem \ref{thm:sel}) is slightly weaker than the guarantees outlined at the beginning of this section (see \eqref{eq:sel-t1e-2}), in the sense that our test offers convergence in probability to the nominal Type I error rate, rather than almost sure convergence.
\end{remark}

\section{Simulation studies}
\label{sec:cases}

In this section, we illustrate our proposal with two simulation studies. In Section \ref{subsec:twosample} we conduct a nonparametric test for a difference between two samples via orthogonalization; this is an example of a pre-specified hypothesis. In Section \ref{subsec:clustering} we conduct inference after clustering; this is an example of a post-selection inference scenario for which existing methods require strong distributional assumptions. Supplement \ref{app:cp} contains a third case study in which we conduct inference after changepoint detection via orthogonalization; a second post-selection inference example.

\subsection{Nonparametric two-sample testing}
\label{subsec:twosample}

Suppose that we observe two independent samples, $Y_i\overset{\text{iid}}{\sim}P$ for $i=1,\dots,n$ and $Z_j\overset{\text{iid}}{\sim}Q$ for $j=1,\dots,n$, where both $P$ and $Q$ are distributions defined on $\mathcal{X}$. Our goal is to test the pre-specified hypothesis $H_0:P=Q$. In this subsection we show how Algorithm \ref{alg:test} can be used to test $H_0$, \emph{ without imposing assumptions on $P$ and $Q$.}

Our approach follows from the observation that the combined dataset $X=(Y_1,\dots,Y_n,Z_1,\dots,Z_n)$ is independent and identically distribution if and only if $H_0$ holds. We can therefore test $H_0$ by testing whether $X$ forms an independent and identically distributed sample using the nonparametric strategy outlined in Section \ref{subsubsec:iid}.

We first consider a ``null" setting in which $P=Q=N_5(0_5,I_5)$. For $n\in\{250,500,1000,2500\}$, we draw 1,000 replicates of $Y$ and $Z$, then decompose $x$ into $\xt{1}$ and $\xt{2}$ using Algorithm \ref{alg:split} with $W_i\overset{\text{iid}}{\sim}N_5(0,cI_5)$ for $c\in\{2,5,10\}$. We then test $H_0$ by testing the reformulated $H_0'(g)$ in \eqref{eq:H0prime} with three choices of $g$: (i) $g(\xt{1}_i)=\xt{1}_{i1}$ (i.e., the first coordinate of $\xt{1}_i$), (ii) $g(\xt{1}_i)=\xt{1}_i$, and (iii) $g(\xt{1}_i)=\lVert \xt{1}_i \rVert_2$. Figure \ref{fig:ts-t1e} displays the p-values for this simulation study. Each panel corresponds to a value of $n$ and displays the corresponding empirical quantiles of the p-values against the quantiles of the uniform distribution. For $g(\xt{1}_i)=\xt{1}_{i1}$ and $g(\xt{1}_i)=\lVert \xt{1}_i \rVert_2$, the Type I error rate is controlled in all settings. For $g(\xt{1}_i)=\xt{1}_i$, larger sample sizes are required as the resulting $C_n(\Xt{1},\Xt{2};\Theta_0)$ has $5\times 5=25$ dimensions.

\begin{figure}[h]
\centering
\includegraphics[width=0.9\linewidth]{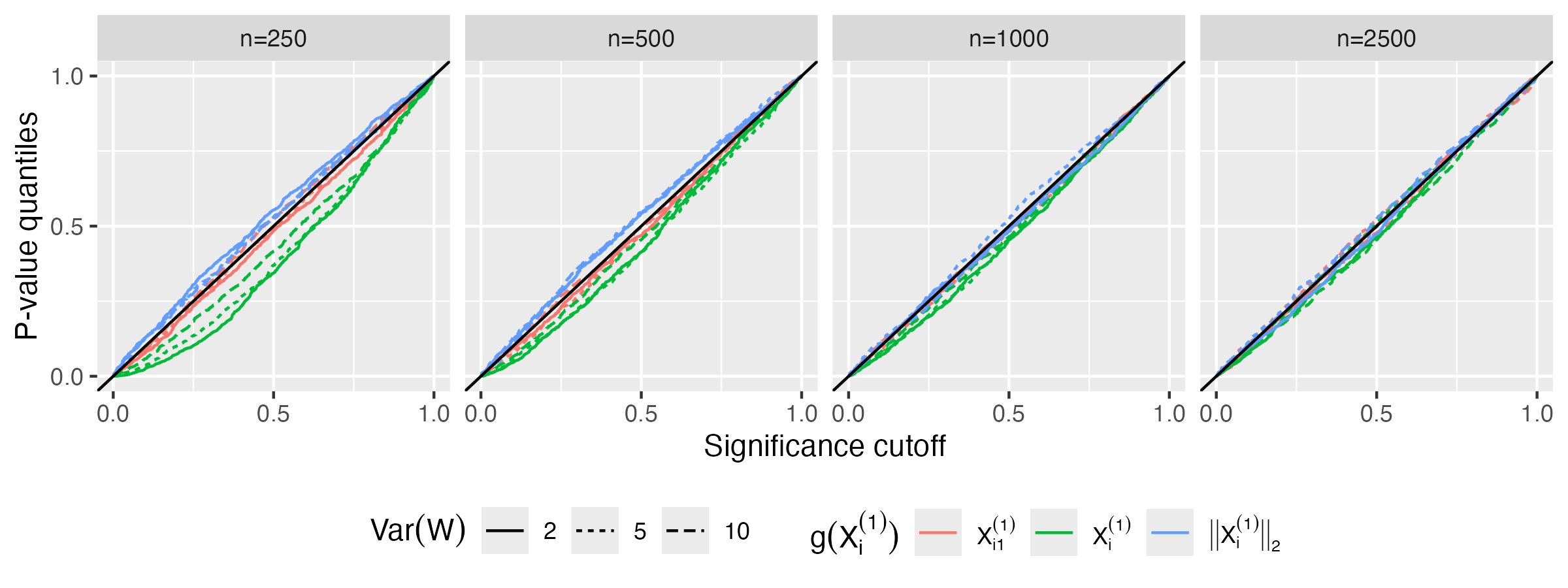}
\caption{Type I error results for the ``null" setting of the simulation described in Section \ref{subsec:twosample}. Each panel displays a QQ-plot of the empirical quantiles of the observed p-values against the quantiles of a $\text{Uniform}(0,1)$ distribution. For $g(\xt{1}_i)=\xt{1}_{i1}$ and $g(\xt{1}_i)=\lVert \xt{1}_i \rVert_2$, the Type I error rate is controlled for all settings of $n$ (indicated by panel) and values of $c$ (indicated by line type). For $g(\xt{1}_i)=\xt{1}_i$, the Type I error is controlled in larger sample sizes.} 
\label{fig:ts-t1e}
\end{figure}

Next, to assess power, we consider three ``alternative" settings in which $P=N_5(0_5,I_5)$ and $Q$ is set to: (a) $N_5(\delta\cdot 1_5,I_5)$ for $\delta\in\{1,2,3,4\}$; (b) $N_5(0_5,(1+\delta)\cdot I_5)$ for $\delta\in\{1,2,3,4\}$; or (c) $N_5(0_5,\delta 1_51_5^\top + (1-\delta)I_5)$ for $\delta\in\{0.2,0.4,0.6,0.8\}$. We refer to these alternatives as the ``mean", ``variance", and ``covariance" alternatives, respectively. We then test $H_0$ with the same choices of $n$, $c$, and $g$ used in the ``null" setting. Figure \ref{fig:ts-pow} displays the power of our approach as a function of $\delta$. Each panel corresponds to a setting of $n$ and choice of alternative. In the ``mean" alternative, power increases as a function of $n$ and $\delta$ and decreases with $c$; power is highest for the most expressive test function $g(\xt{1}_i)=\xt{1}_i$, and lowest for $g(\xt{1}_i)=\lVert \xt{1}_i \rVert_2$. For the ``variance" and ``covariance" alternatives, when $g(\xt{1}_i)=\xt{1}_{i1}$ or $g(\xt{1}_i)=\xt{1}_i$, power again increases as a function of $n$ and $\delta$ and decreases with $c$. By contrast, the coarser test function $g(\xt{1}_i)=\lVert \xt{1}_i \rVert_2$ does not have power against these alternatives, underscoring the importance of selecting an appropriate test function.

\begin{figure}[h]
\centering
\includegraphics[width=0.9\linewidth]{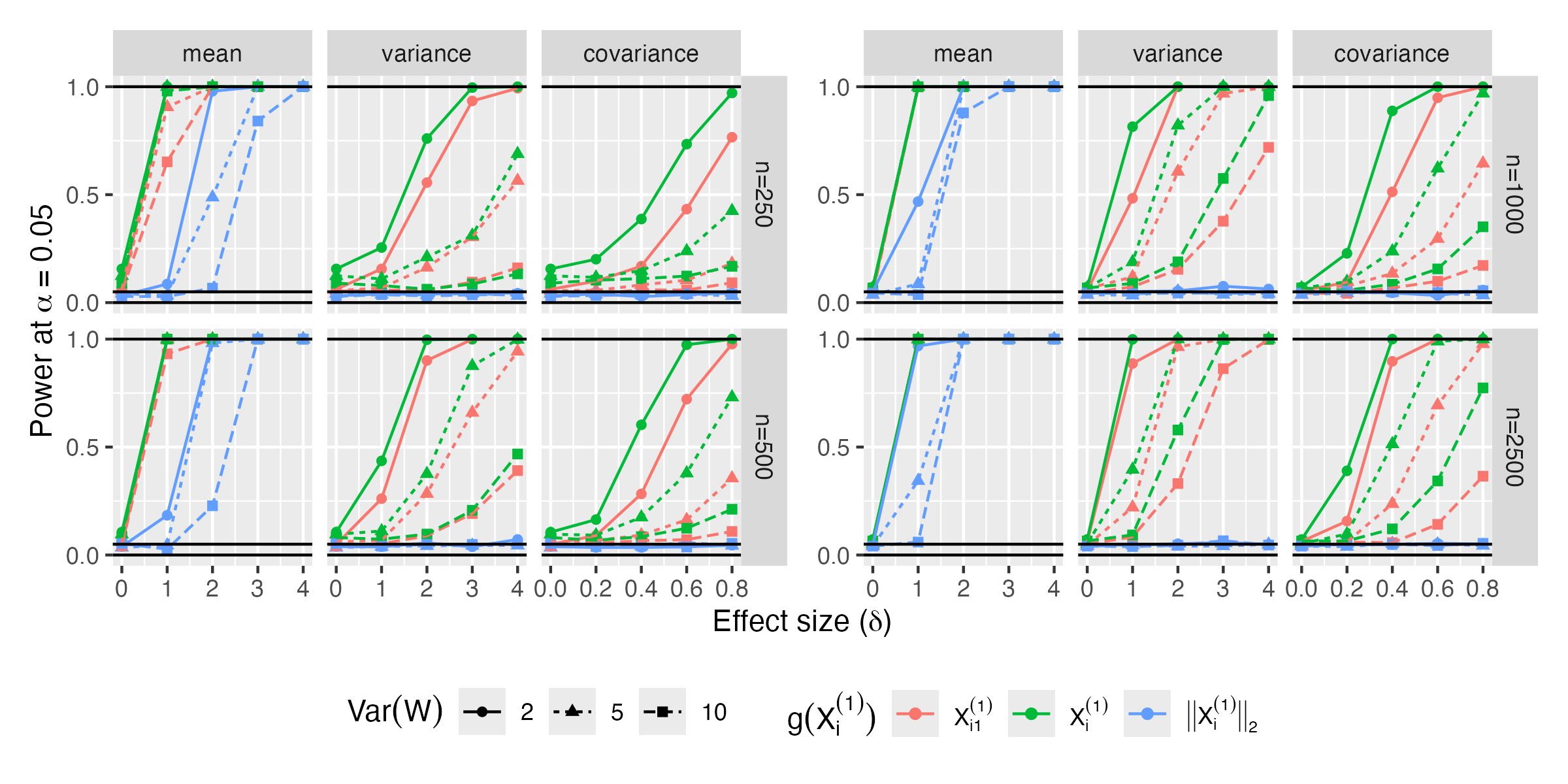}
\caption{Power results for the ``alternative" setting of the simulation described in Section \ref{subsec:twosample}. Each panel corresponds to a setting of $n$ and choice of alternative, and displays power curves as a function of the effect size $\delta$ for a choice of $g$ (indicated by colour) and value of $c$ (indicated by line type). Power grows with $n$ and $\delta$ and decreases with $c$; selecting $g(\xt{1}_i)=\lVert \xt{1}_i \rVert_2$ does not provide power against the ``variance" and ``covariance" alternatives.} 
\label{fig:ts-pow}
\end{figure}

\subsection{Inference after clustering}
\label{subsec:clustering}

As pointed out by \cite{gao2020selective}, testing for a difference between estimated clusters requires care: since the cluster assignments are a function of the data, classical tests are not valid; furthermore, sample splitting does not provide a valid solution. 
Here we consider the problem of conducting inference after clustering with zero-inflated Poisson data, often used to model single-cell RNA sequencing data (see Section \ref{sec:data}). This setting lies outside of the scope of existing approaches. Fortunately, Algorithm \ref{alg:sel}  provides a path forward.

More specifically, for $p\in\{2,10,50\}$ and $n\in\{100p,250p,500p\}$, we consider data generated as 
\begin{equation}
\label{eq:cluster}
X_{ij} \overset{\text{ind}}\sim \text{ZIP}(\lambda_{ij},\pi_{ij}),
\end{equation}
for $i=1,\dots,n$, where $\lambda_{ij}$ and $\pi_{ij}$ are the rate parameter and zero-inflation probability, respectively, for the $i$th observation in the $j$th coordinate.

We first consider a ``null" setting in which for all $i=1,\dots,n$, $\lambda_{ij}=3+j\% 3$ and $\pi_{ij}=0.2+0.05*(j \% 3)$, where the notation $a\% b$ indicates the remainder when $a$ is divided by $b$. As all observations share the same parameters, $X_1,\dots,X_n$ form an independent and identically distributed sample; in other words, there are no true subgroups to be discovered, and the null hypothesis holds for any set of estimated clusters. 

For each combination of $n$ and $p$, we draw $500$ replicates of $X$, and then apply a variant of Algorithm \ref{alg:sel} specialized to the present context. Briefly, for each replicate, we first apply Algorithm \ref{alg:split} with $W_{ij}\overset{\text{iid}}{\sim}\text{DiscreteUniform}(-3,3)$ to decompose $x$ into $\xt{1}$ and $\xt{2}$, then apply $k$-means clustering with $k=3$ to $\xt{1}$ to identify three clusters. Our goal is to test whether the data in the two largest estimated clusters are drawn from the same distribution; that is, to test the selected hypothesis $H_0(\xt{1}):(\lambda_{ij},\pi_{ij})=(\lambda_{i'j},\pi_{i'j}),\quad\forall i,i'\in \widehat{\mathcal{C}}_1\cup\widehat{\mathcal{C}}_2,j=1,\dots,p$, where $\widehat{\mathcal{C}}_1$ and $\widehat{\mathcal{C}}_2$ index the observations in the two largest estimated clusters. We test $H_0(\xt{1})$ by testing the reformulated $H_0'(\xt{1},g)$ with three choices of $g$: (i) $g(\xt{1}_i)=I(i \in \widehat{\mathcal{C}}_1)$, (ii) $g(\xt{1}_i)=\lVert \xt{1}_i \rVert_2I(i\in\widehat{\mathcal{C}}_1\cup\widehat{\mathcal{C}}_2)$, and (iii) $g(\xt{1}_i)=\lVert \xt{1}_i \rVert_\infty I(i\in\widehat{\mathcal{C}}_1\cup\widehat{\mathcal{C}}_2)$ (the indicators remove the dependence of $D_n(\Xt{1},\Xt{2};\Theta_0(\Xt{1}))$ on parameters corresponding to observations assigned to the third cluster). We do not consider the test function $g(\xt{1}_i)=I(i\in\widehat{\mathcal{C}}_1\cup\widehat{\mathcal{C}}_2)$ as it is analogous to a constant function, leading to degeneracy in $\varphi_D$; see Remark \ref{rem:degenerate}. Algorithm \ref{alg:cluster} in Supplement \ref{app:cluster} provides further details on this procedure.

Figure \ref{fig:clust-t1e} displays the p-values for this experiment. Each column corresponds to a value of $p$ and each row corresponds to a value of $n$. Each panel displays the corresponding empirical quantiles of the p-values against the quantiles of the uniform distribution. For all values of $n$, all values of $p$, and all choices of $g$, our procedure controls the Type I error rate. When $p=50$, our p-values are mildly conservative when $g$ is chosen as a norm, though this fades as the sample size increases.

\begin{figure}[h]
\centering
\begin{subfigure}{0.5\textwidth}
    \centering
    \includegraphics[width=0.9\linewidth]{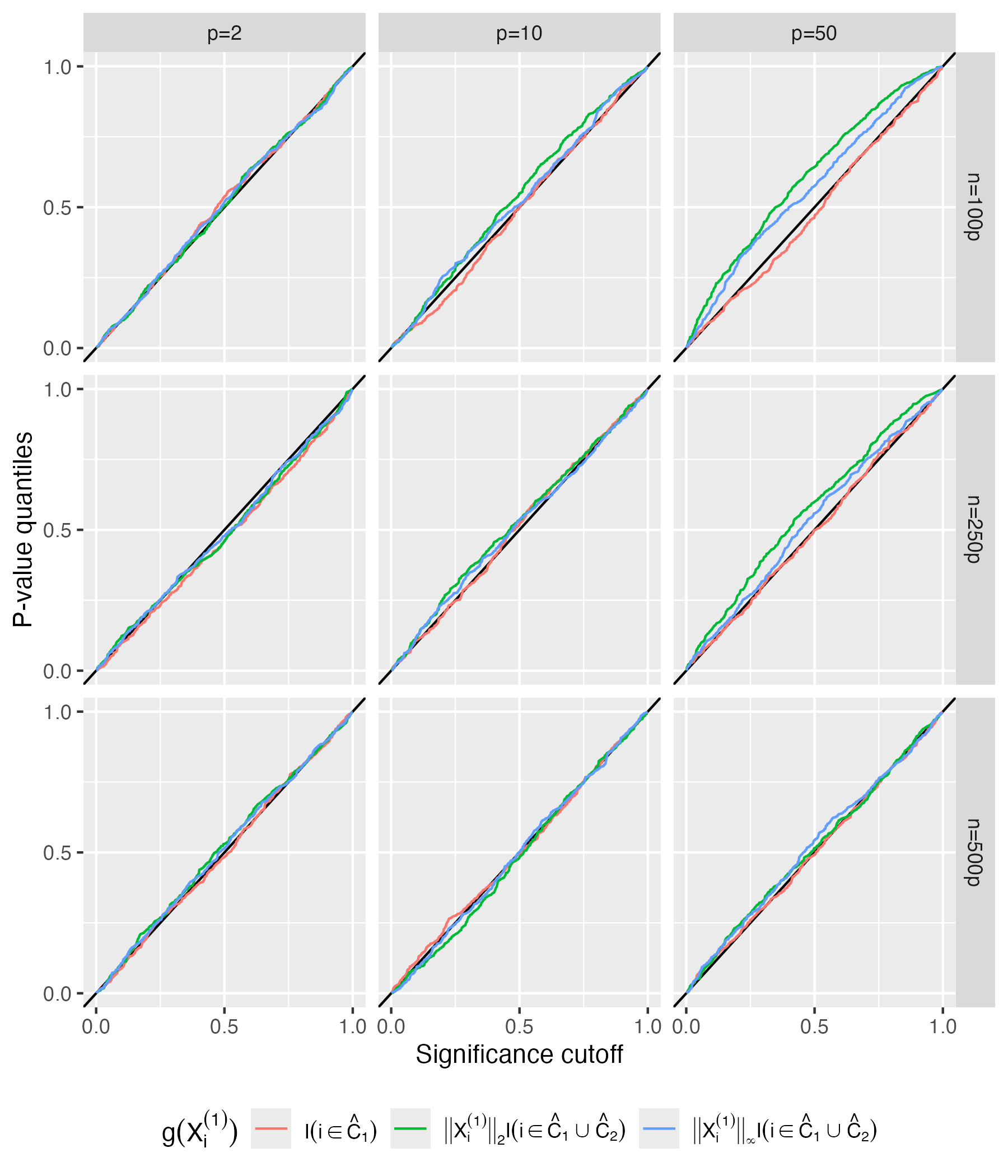}
    \caption{Type 1 error quantile-quantile plot \label{fig:clust-t1e}}
\end{subfigure}%
~
\begin{subfigure}{0.5\textwidth}
    \centering
    \includegraphics[width=0.9\linewidth]{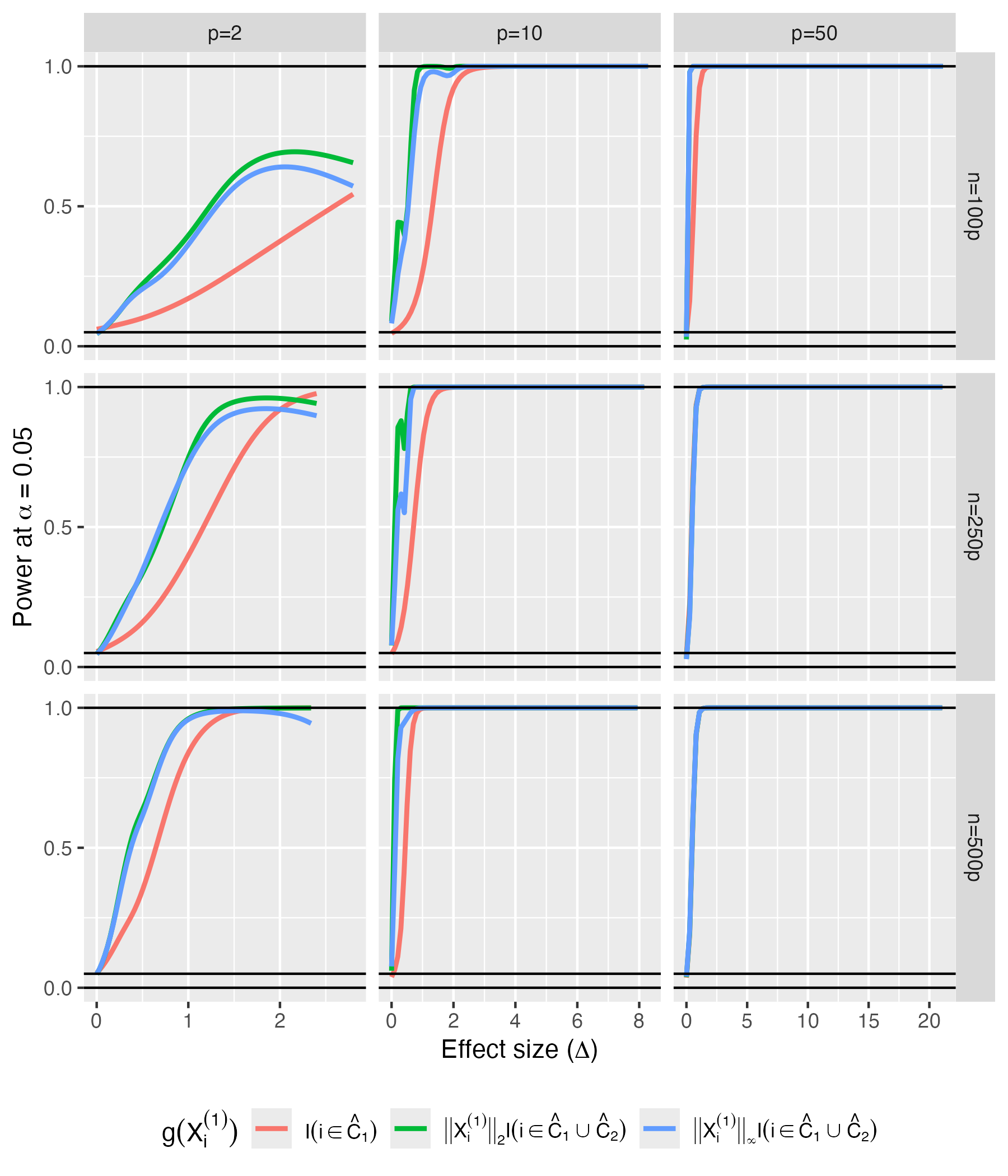}
    \caption{Smoothed power plot \label{fig:clust-pow}}
\end{subfigure}%
\caption{(a) Type I error results for the ``null" setting of the simulation described in Section \ref{subsec:clustering}. Each panel displays a QQ-plot of the empirical quantiles of the observed p-values against the quantiles of a $\text{Uniform}(0,1)$ distribution. The Type I error rate is controlled on average, across realizations of $\Xt{1}$, for all settings of $p$ (indicated by column), settings of $n$ (indicated by row), and choices of test function $g$ (indicated by colour). 
(b) Smoothed power results for the ``alternative" setting of the simulation described in Section \ref{subsec:clustering}. Each panel corresponds to a setting of $p$ (indicated by column) and $n$ (indicated by row), and displays three power curves as a function of the effect size $\Delta$ (defined in Section \ref{subsec:clustering}), one for each choice of $g$ (indicated by colour). In all cases, power increases as a function of $n$ and $\Delta$; power is also greater for  $g(\xt{1}_i)=\lVert \xt{1}_i \rVert_2I(i\in\widehat{\mathcal{C}}_1\cup\widehat{\mathcal{C}}_2)$ and $g(\xt{1}_i)=\lVert \xt{1}_i \rVert_\infty I(i\in\widehat{\mathcal{C}}_1\cup\widehat{\mathcal{C}}_2)$ as compared to $g(\xt{1}_i)=I(i \in \widehat{\mathcal{C}}_1)$.}
\end{figure}

The QQ-plots show that the Type I error for a test of $H_0(\xt{1})$ is controlled on average, across realizations of $\Xt{1}$. To verify Type I error control conditional on $\Xt{1}=\xt{1}$, as suggested by Corollary \ref{cor:selt1e} and Theorem \ref{thm:sel}, for $p=2$ we conduct a related experiment in which we first generate $500$ replicates of $\Xt{1}$, then for each replicate draw $500$ replicates of $\Xt{2}$ from the conditional distribution $\Xt{2}|\Xt{1}$. We then apply Steps 2--5 of Algorithm \ref{alg:cluster} to each of the $250,000$ pairs; Figure \ref{fig:ct1e-clust} displays the Type I error rate stratified by $\Xt{1}$ (i.e., the Type I error rate conditional on $\Xt{1}=\xt{1}$). The figure confirms that as the sample size grows, the conditional Type I error increasingly concentrates on the nominal level.

\begin{figure}[h]
\centering
\includegraphics[width=.8\linewidth]{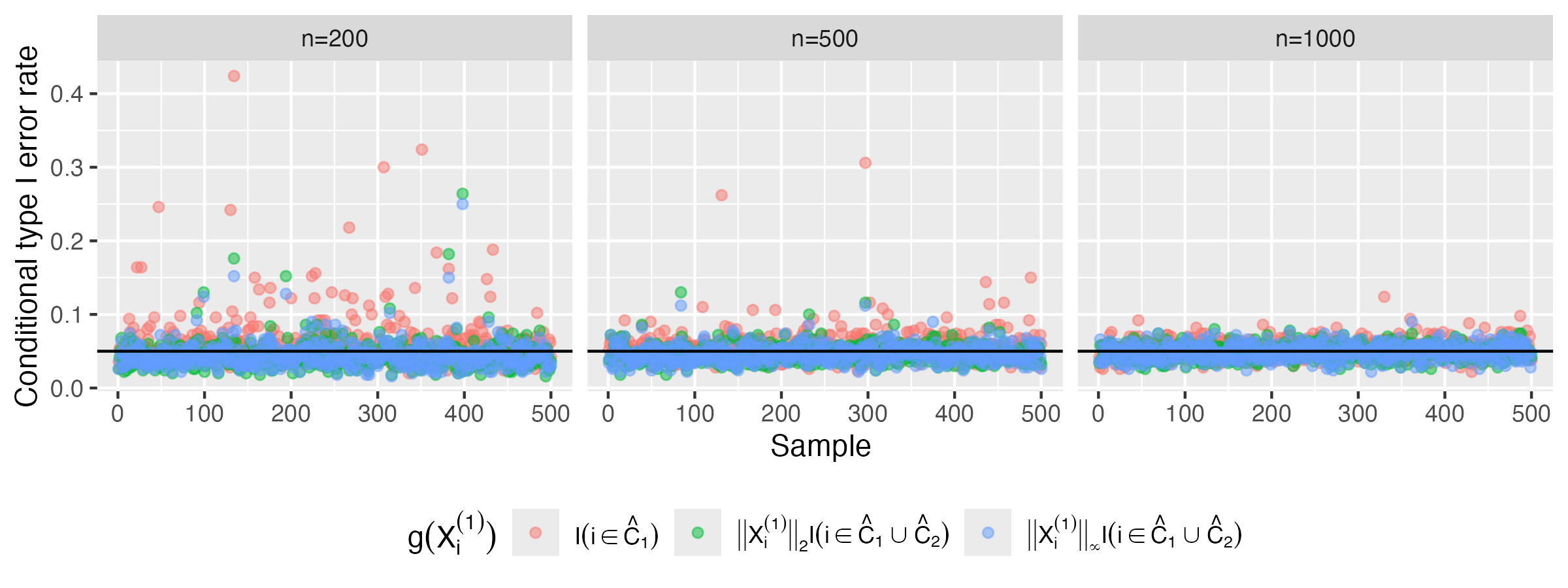}
\caption{Conditional Type I error results for the simulation described in Section \ref{subsec:clustering}. Each panel corresponds to a setting of $n$ and displays the empirical conditional Type I error rate for 500 draws of $\Xt{1}$. For all choices of $g$ (indicated by colour), the conditional Type I error rate increasingly concentrates on the nominal level $\alpha$ (indicated by the black line) as $n$ grows.}
\label{fig:ct1e-clust}
\end{figure}

To examine the power of our approach, we consider an ``alternative" setting in which for all  $i=1,\dots,n$, $\lambda_{ij}=3+d\cdot I(i \le n/2)+j\% 3$ and $\pi_{ij}=0.2+0.05*(j \% 3)$ and $d\in\{1,2,3,4\}$; i.e., there are two true subgroups of equal sizes with different rate parameters. For each combination of $n$, $p$, and $d$, we draw 500 replicates of $X$, then apply Algorithm \ref{alg:cluster}. Figure \ref{fig:clust-pow} displays the power of our approach as a function of the effect size, defined as $\Delta=\left\|\frac{1}{|\widehat{\mathcal{C}}_1|}\sum_{i\in\widehat{\mathcal{C}}_1}\E[X_i]-\frac{1}{|\widehat{\mathcal{C}}_2|}\sum_{i\in\widehat{\mathcal{C}}_2}\E[X_i]\right\|_2$ \citep{gao2020selective}. Each column corresponds to a value of $p$ and each row corresponds to a value of $n$. Each panel displays one (smoothed) power curve for each choice of $g$ (indicated by colour). The curves are constructed by fitting a regression spline using the \texttt{gam} function implemented in the \texttt{mgcv} R package \citep{mgcv}. As expected, power increases as a function of $n$ and $\Delta$. Interestingly, choosing $g$ to be either the $\ell_2$- or $\ell_\infty$-norm has higher power than the indicator for $\widehat{\mathcal{C}}_1$ membership; this is consistent with our intuition that more expressive choices of $g$ are preferable.
\section{Application to single-cell RNA-sequencing data}
\label{sec:data}

Single-cell RNA-sequencing (scRNA-seq) quantifies the gene expression profiles of individual cells, offering biologists fine-scale insight into cell characteristics and development. A common task in the analysis of scRNA-seq data is to cluster cells based on their gene expression profiles to identify candidate cell subtypes, and then to quantify uncertainty associated with those estimated clusters \citep{lahnemann2020eleven}; as discussed in Section \ref{subsec:clustering}, testing whether estimated subtypes are truly distinct requires accounting for the fact that they were estimated from the data.

The data from an scRNA-seq experiment presents as a count-valued matrix in which each row corresponds to a cell and each column corresponds to a gene. Due to limitations in the sensitivity of sequencing technology, scRNA-seq data often contain a higher than expected proportion of zeroes \citep{hicks2018missing}; this motivates the application of zero-inflated models \citep{jiang2022statistics, nguyen2023structure}. Here we illustrate how our proposal can be used to test for a difference in candidate cell subtypes identified from scRNA-seq data modelled with a zero-inflated Poisson distribution. 

In particular, we revisit the peripheral blood mononuclear cell (PBMC) data prepared by \cite{duo2020systematic}, which is a subset of the 68,000 PBMCs sequenced by \cite{zheng2017massively}. It contains gene expression profiles for 3,994 cells; each cell is furthermore annotated using a different technology as either a B-cell, CD14 monocyte, naive cytotoxic T-cell, or regulatory T-cell. Cell subtypes are present in roughly equal proportions. Treating these annotations as the truth, we assess the ability of our method to detect latent cell subtypes, then subsequently conduct inference.

We begin by subsetting the data to the 50 genes with the highest variance; let $X$ denote the resulting $3,994\times 50$ matrix. We assume the rows, $X_i$, follow the zero-inflated Poisson model in \eqref{eq:cluster}, then perform the following:
\begin{enumerate}
    \item Decompose $x$ into $\xt{1}$ and $\xt{2}$ using Algorithm \ref{alg:split} with $W_{ij}\overset{\text{iid}}{\sim}\text{DiscreteUniform}(-5,5)$; this choice of noise has approximately half the variance of the least variable gene in $X$.
    \item Apply $k$-means clustering to $\xt{1}_i/1_{50}^\top \xt{1}_i$ with $k=4$. Let $\hat{\mathcal{C}}_1,\dots,\hat{\mathcal{C}}_4$ index the cells in the four estimated clusters, respectively.
    \item For $l,l'$ satisfying $1\le l<l'\le 4$, test $H_0(\xt{1}):(\lambda_{ij},\pi_{ij})=(\lambda_{i'j},\pi_{i'j}),\quad\forall i,i'\in \hat{\mathcal{C}}_l\cup\hat{\mathcal{C}}_{l'},j=1,\dots,p$ by testing the resulting $H_0'(\xt{1},g)$ with $g(\xt{1}_i)=I(i\in\hat{\mathcal{C}_l})$; details on testing $H_0'(\xt{1},g)$ are given in Steps 3--5 of Algorithm \ref{alg:cluster} in Supplement \ref{app:cluster}. 
\end{enumerate}

Table \ref{tab:cells} presents a confusion matrix comparing the estimated clusters to the true cell subtypes; the clusters largely recover the true subtypes from $\xt{1}$, though Cluster 2 does partially blend the two types of T-cells. All pairwise selected null hypotheses are rejected, even after adjusting the $4(4-1)/2=6$ tests to control the family-wise error rate \citep{holm1979simple}. Together, these results imply that our procedure can detect latent cell subtypes and reject the null of no difference between identified groups.

\begin{table}
\centering
\begin{tabular}{l|rrrr}
\toprule
  & B-cells & Naive Cytotoxic T-cells & CD14 Monocytes & Regulatory T-cells\\
\midrule
Cluster 1 & 976 & 5 & 33 & 10\\
Cluster 2 & 22 & 982 & 11 & 518\\
Cluster 3 & 1 & 0 & 940 & 1\\
Cluster 4 & 0 & 11 & 16 & 468\\
\bottomrule
\end{tabular}
\caption{Confusion matrix comparing the estimated clusters resulting from the workflow described in Section \ref{sec:data} to the true cell subtypes provided by \cite{zheng2017massively} and \cite{duo2020systematic}.}
\label{tab:cells}
\end{table}

We conclude with Figure \ref{fig:scrna}, which provides a visual depiction of the analysis conducted in this section, providing intuition for the results. The left panel displays $\xt{1}_i/1_{50}^\top \xt{1}_i$ projected onto the first two principal components of $\xt{1}_i/1_{50}^\top \xt{1}_i$; estimated clusters are indicated by colour and true cell subtypes are indicated by shape. The panel visualises the agreement between estimated clusters and cell subtype annotations detailed in Table \ref{tab:cells}. The middle panel displays $\xt{2}_i/1_{50}^\top \xt{2}_i$ projected onto the first two principal components of $\xt{1}_i/1_{50}^\top \xt{1}_i$; as expected it is qualitatively similar to the left panel. The right panel displays $(\xt{2}_i-\hat\E[\Xt{2}_i|\Xt{1}_i=\xt{1}_i]/1_{50}^\top \xt{2}_i$ projected onto the first two principal components of $\xt{1}_i/1_{50}^\top \xt{1}_i$ and subset to Clusters 1 and 3. We see that after orthogonalizing, Clusters 1 and 3 remain separated, which leads to the rejection of the corresponding selected null hypothesis.

\begin{figure}[h]
\centering
\includegraphics[width=.9\linewidth]{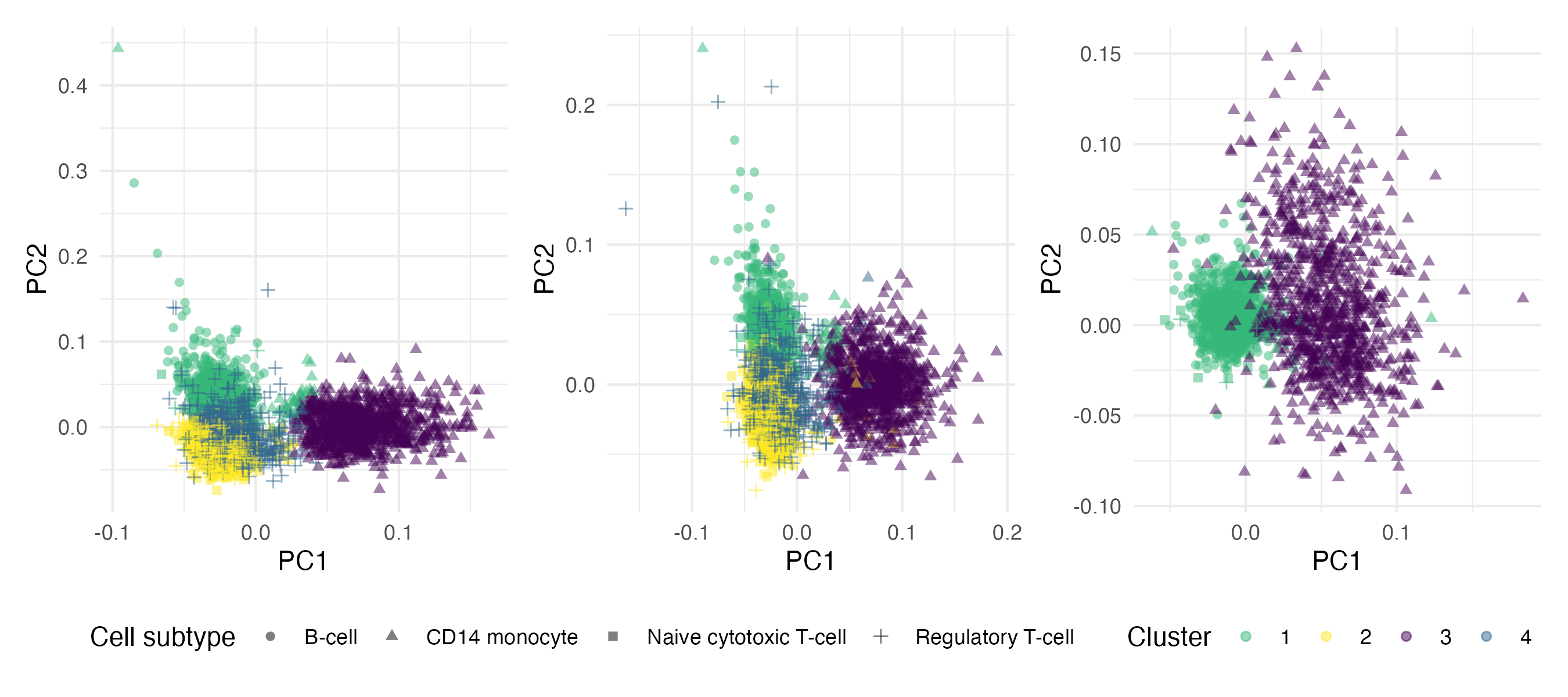}
\caption{Visualization of the orthogonalization procedure applied to the PBMC data in Section \ref{sec:data}. \textit{Left:} The normalized noisy gene expression profiles, $\xt{1}_i/1_{50}^\top \xt{1}_i$, projected onto the first two principal components of $\xt{1}_i/1_{50}^\top \xt{1}_i$. \textit{Middle:} The second set of normalized noisy gene expression profiles, $\xt{2}_i/1_{50}^\top \xt{2}_i$, projected onto the first two principal components of $\xt{1}_i/1_{50}^\top \xt{1}_i$. \textit{Right: } The second set of normalized noisy gene expression profiles after attempted orthogonalization, $(\xt{2}_i-\hat\E[\Xt{2}_i|\Xt{1}_i=\xt{1}_i])/1_{50}^\top \xt{2}_i$, subset to Clusters 1 and 3 and projected onto the first two principal components of $\xt{1}_i/1_{50}^\top \xt{1}_i$. 
In all panels, estimated clusters are indicated by colour and true cell subtypes are indicated by shape.}
\label{fig:scrna}
\end{figure}
\section{Discussion}
\label{sec:discussion}

In this paper, we propose a new framework for testing both pre-specified and data-driven hypotheses. Our approach is based on the observation that any null hypothesis can be recast as a particular orthogonality moment condition. For a pre-specified hypothesis, we propose the following procedure: 1. randomize $X$ with symmetric shift-family noise to construct $\Xt{1}$ and $\Xt{2}$; 2. orthogonalize $\Xt{2}$ against $\Xt{1}$ under the null; and 3. test whether orthogonalization was successful using a $\chi^2$ test. For the post-selection inference setting, we additionally select a hypothesis using $\Xt{1}$ after Step 1, then proceed analogously. The key to our proposal lies in our use of symmetric shift-family noise for the randomization step. This choice induces a universal form for the conditional mean of $\Xt{2}$ given $\Xt{1}$, which is easily computable in broad settings.

Our proposal is particularly powerful in the context of post-selection inference. Unlike much of the extant literature, our procedure is based on properties of conditional means, rather than on a complete characterization of a conditional distribution. By isolating exactly what feature of the conditional distribution is necessary for inference, we are able to drastically relax commonly-made assumptions. Unlike much of the randomization literature, we are not restricted to a small class of parametric families, and unlike conditional selective inference, we do not require an exact specification of the selection event. The flexibility of our proposal is illustrated in Sections \ref{sec:cases} and \ref{sec:data}, in which we conduct valid (post-selection) inference in diverse settings.

The orthogonalization procedure is modular by design; it can and should be adapted to the context at hand. For instance, in multivariate settings, knowledge about the dependence structure between the entries of $X_i$ can be used to simplify estimation of the conditional mean, $\Psi_n$ in Theorem \ref{thm:Cn}, and $\Omega_n$ in Theorem \ref{thm:Dn}. In a related setting, \cite{shah2020hardness} use the maximum studentized entry of a quantity resembling $C_n(\Xt{1},\Xt{2};\Theta_0)$ and $D_n(\xt{1},\xt{2};\Theta_0(\xt{1}))$ (defined in Sections \ref{sec:cmean} and \ref{sec:sel}, respectively) as their test statistic, arguing that the maximum behaves better than the $\ell_2$-norm. It would be interesting to clarify whether similar advantages can be realized in our case. 

The choice of test function $g$ drives the power of our method; it affects how much (if anything) is lost in the process of reformulating $H_0$ into $H_0'(g)$ (or $H_0(\xt{1})$ into $H_0'(\xt{1},g)$). We suspect that the optimal choice in practice will depend on the originating null hypothesis $H_0$. Our empirical studies suggest that more expressive test functions are to be preferred though this is not a universal truth: in Supplement \ref{app:cp} we find that an indicator for segment membership outperforms the data itself in the context of inference after changepoint detection. We leave to future work a systematic investigation of favourable properties of test functions.

Code to reproduce all results in this paper are available at \url{https://github.com/AmeerD/Orthogonalization}.

\if\blind0
\section*{Acknowledgement}
We thank Lucy L. Gao, Ellen Graham, Olivia McGough, Rui Wang, and Zichun Xu for helpful conversations that contributed to the direction of this project. We acknowledge funding from the following sources: 
Office of Naval Research,  National Science Foundation, and  National Institutes of Health of the United States to DW; and Natural Sciences and Engineering Research Council of Canada to AD.
\fi

\bibliographystyle{natbib}
\bibliography{dasi}

\newpage
\setcounter{figure}{0}
\setcounter{lemma}{0}
\setcounter{theorem}{0}
\setcounter{proposition}{0}
\setcounter{remark}{0}
\setcounter{corollary}{0}
\setcounter{definition}{0}
\setcounter{algorithm}{0}
\makeatletter
\renewcommand \thefigure{S\@arabic\c@figure}
\renewcommand \thelemma{S\@arabic\c@lemma}
\renewcommand{\theHlemma}{appendix.\@arabic\c@lemma}
\renewcommand \thetheorem{S\@arabic\c@theorem}
\renewcommand \theproposition{S\@arabic\c@proposition}
\renewcommand{\theHproposition}{appendix.\@arabic\c@proposition}
\renewcommand \theremark{S\@arabic\c@remark}
\renewcommand \thecorollary{S\@arabic\c@corollary}
\renewcommand \thedefinition{S\@arabic\c@definition}
\renewcommand \thealgorithm{S\@arabic\c@algorithm}
\renewcommand\theHalgorithm{S\@arabic\c@algorithm} 
\makeatother
\appendix
\section*{Supplementary Materials}
\section{Proofs of technical results}
\label{app:proofs}

\subsection{Proof of Theorem \ref{thm:t1e}}

\begin{proof}
By construction, $H_0\implies H_0'(g)$, which implies the result. 
\end{proof}

\subsection{Proof of Proposition \ref{prop:ratio}}

We begin with Lemma \ref{lem:cexp_generic}, which shows that the expectation of a function of $X_i$ conditional on $\Xt{1}_i$ admits a convenient form. 

\begin{lemma}
\label{lem:cexp_generic}
Consider $\Xt{1}_i$ and $\Xt{2}_i$ constructed from $X_i$ and $W_i$ using Algorithm \ref{alg:split}; thus, $W_i\overset{iid}{\sim} R(0,\Sigma)$ for a symmetric shift-family $R$. Then,
$$
\E_{\eta_i(\theta^*)}[s(X_i)|\Xt{1}_i=\xt{1}_i] = \frac{\int s(x_i)f_{X}(x_i;\eta_i(\theta))f_{R}(\xt{1}_i-x_i;0,\Sigma)\,dx_i}{\int f_{X}(x_i;\eta_i(\theta))f_{R}(\xt{1}_i-x_i;0,\Sigma)\,dx_i}=\frac{\E_{U_i}[s(U_i)f_{X}(U_i;\eta_i(\theta^*))]}{\E_{U_i}[f_{X}(U_i;\eta_i(\theta^*))]}.
$$
where $U_i\sim R(\xt{1}_i,\Sigma)$; i.e., it is a random variable that follows the distribution of the user-added noise $W_i$ shifted by $\xt{1}_i$.
\end{lemma}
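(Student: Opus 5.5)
The plan is to compute the conditional distribution of $X_i$ given $\Xt{1}_i=\xt{1}_i$ directly by Bayes' rule, and then rewrite the resulting integrals as expectations over the shifted noise variable $U_i$ using the symmetry of $R$.

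\emph{Step 1 (joint density).} Since $W_i\sim R(0,\Sigma)$ is independent of $X_i$, the pair $(X_i,\Xt{1}_i)=(X_i,X_i+W_i)$ has joint density (or mass function) $f_X(x_i;\eta_i(\theta^*))\,f_R(\xt{1}_i-x_i;0,\Sigma)$. This follows from the change of variables $(x,w)\mapsto(x,x+w)$, which has unit Jacobian in the continuous case and is a bijection of the lattice in the count case.

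\emph{Step 2 (conditional expectation).} By Bayes' rule, the conditional density of $X_i$ given $\Xt{1}_i=\xt{1}_i$ equals this joint density divided by its integral over $x_i$. Hence
\[
\E_{\eta_i(\theta^*)}[s(X_i)\mid\Xt{1}_i=\xt{1}_i]
=\frac{\int s(x_i)f_X(x_i;\eta_i(\theta^*))f_R(\xt{1}_i-x_i;0,\Sigma)\,dx_i}{\int f_X(x_i;\eta_i(\theta^*))f_R(\xt{1}_i-x_i;0,\Sigma)\,dx_i},
\]
with the integral read as a sum in the count case. The ratio is well defined whenever the denominator, which is the marginal density of $\Xt{1}_i$ at $\xt{1}_i$, is positive. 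That holds almost surely, so the identity is a version of the conditional expectation. Integrability of $s(X_i)$ is needed for the numerator to be finite; $s(x)=x$ is covered by the finite second moment assumption.

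\emph{Step 3 (shift-family rewriting).} Because $R$ is a symmetric shift family, $f_R(\xt{1}_i-x;0,\Sigma)=f_R(x-\xt{1}_i;0,\Sigma)$ by symmetry, and this equals $f_R(x;\xt{1}_i,\Sigma)$ by the shift property. That last expression is exactly the density of $U_i\sim R(\xt{1}_i,\Sigma)$. Substituting it into both integrals turns them into $\E_{U_i}[s(U_i)f_X(U_i;\eta_i(\theta^*))]$ and $\E_{U_i}[f_X(U_i;\eta_i(\theta^*))]$, which gives the second equality.

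The main care point is Step 3 in the discrete case. There $U_i$ lives on the lattice $\xt{1}_i+\mathrm{supp}(R)$, and $f_X$ must be evaluated there, taking the value zero off the support of $X_i$. The only other requirement is justifying the ratio on a null set where the marginal density vanishes. Everything else is routine.

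For Proposition~\ref{prop:ratio} itself, apply the lemma with $s(x)=x$. Since $\Xt{2}_i=2X_i-\Xt{1}_i$, we get $\E[\Xt{2}_i\mid\Xt{1}_i]=2\E[X_i\mid\Xt{1}_i]-\xt{1}_i$, which is \eqref{eq:Uratio}. Reading the same integrals as expectations over $X_i$ gives \eqref{eq:Xratio}.
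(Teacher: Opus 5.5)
Your proposal is correct and follows the paper's proof essentially step for step. You write the conditional density as the joint density $f_X(x_i;\eta_i(\theta^*))f_R(\xt{1}_i-x_i;0,\Sigma)$ divided by the marginal of $\Xt{1}_i$, then use the symmetry and shift property $f_R(\xt{1}_i-x_i;0,\Sigma)=f_R(x_i;\xt{1}_i,\Sigma)$ to read both integrals as expectations over $U_i$. Your remarks on positivity of the denominator and on the count-valued case add care the paper skips, since it treats only the continuous case and says the discrete case is similar.
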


\begin{proof}
For simplicity, assume that $X_i$ and $W_i$ are both continuous random variables. Similar arguments apply in the discrete case. Observe that
\begin{align*}
E_{\eta_i(\theta^*)}[s(X_i)|\Xt{1}_i=\xt{1}_i] &= E_{\eta_i(\theta^*)}[s(X_i)|X_i+W_i=\xt{1}_i] \\
&= \int s(x_i)f_{X|X+W}(x_i|x_i+w_i=\xt{1}_i;\eta_i(\theta^*))\,dx_i \\
&= \int s(x_i)\frac{f_{X,X+W}(x_i,x_i+w_i=\xt{1}_i;\eta_i(\theta^*))}{f_{X+W}(\xt{1}_i;\eta_i(\theta^*))}\,dx_i \\
&= \frac{\int s(x_i)f_{X}(x_i;\eta_i(\theta^*))f_{R}(\xt{1}_i-x_i;0,\Sigma)\,dx_i}{\int f_{X}(x_i;\eta_i(\theta^*))f_{R}(\xt{1}_i-x_i;0,\Sigma)\,dx_i} \\
&= \frac{\int s(x_i)f_{X}(x_i;\eta_i(\theta^*))f_{R}(x_i;\xt{1}_i,\Sigma)\,dx_i}{\int f_{X}(x_i;\eta_i(\theta^*))f_{R}(x_i;\xt{1}_i,\Sigma)\,dx_i} \\
&=\frac{\E_{U_i}[s(U_i)f_{X}(U_i;\eta_i(\theta^*))]}{\E_{U_i}[f_{X}(U_i;\eta_i(\theta^*))]}.
\end{align*}
\end{proof}

We now prove Proposition \ref{prop:ratio}.

\begin{proof}
Substituting the definitions of $\Xt{1}_i$ and $\Xt{2}_i$ into $E_{\eta_i(\theta)}[\Xt{2}_i|\Xt{1}_i=\xt{1}_i]$, then applying Lemma \ref{lem:cexp_generic} yields
\begin{align*}
E_{\eta_i(\theta)}[\Xt{2}_i|\Xt{1}_i=\xt{1}_i] &= E_{\eta_i(\theta)}[X_i-W_i|X_i+W_i=\xt{1}_i] \\
&= 2E_{\eta_i(\theta)}[X_i|X_i+W_i=\xt{1}_i]-\xt{1}_i \\
&= 2\frac{\int x_if_{X}(x_i;\eta_i(\theta))f_{R}(x_i;\xt{1}_i,\Sigma)\,dx_i}{\int f_{X}(x_i;\eta_i(\theta))f_{R}(x_i;\xt{1}_i,\Sigma)\,dx_i}-\xt{1}_i \\
&= 2\frac{N(\xt{1}_i)}{D(\xt{1}_i)}-\xt{1}_i
\end{align*}
where in the last step we define 
$$
N(\xt{1}_i) = \int x_if_{X}(x_i;\eta_i(\theta))f_{R}(x_i;\xt{1}_i,\Sigma)\,dx_i \quad\text{and}\quad D(\xt{1}_i) = \int f_{X}(x_i;\eta_i(\theta))f_{R}(x_i;\xt{1}_i,\Sigma)\,dx_i.
$$

Finally, \eqref{eq:Uratio} can be recovered by rewriting the above as expectations taken with respect to a random variable $U_i\sim R(\xt{1}_i,\Sigma)$, and \eqref{eq:Xratio} can be recovered by rewriting the above as expectations taken with respect to $X_i$.
\end{proof}

\subsection{Proof of Lemma \ref{lem:ASL}}

\begin{proof}
To ease notation, let $P_n(f)=\frac{1}{n}\sum_{i=1}^n f(\xt{1}_i,\xt{2}_i)$ denote the sample mean of some function $f$ and let $P(f)=\int f(\xt{1},\xt{2})\,dP(\xt{1},\xt{2})$ denote the corresponding population mean. Assume without loss of generality that $|M_k|=n/K$ for $k=1,\dots,K$ and let $P_{n,k}(f)=\frac{K}{n}\sum_{i\in M_k}f(\xt{1}_i,\xt{2}_i)$ denote the sample mean within fold $k$. Then, $C_n(\xt{1},\xt{2};\Theta_0)$ can be expanded as follows:
\begin{align*}
&C_n(\xt{1},\xt{2};\Theta_0) \\
&= P_n\left[\left(\Xt{1}+\Xt{2}-2\frac{\hat N_n^{(-k)}}{\hat D_n^{(-k)}}\right)g^\top\right] \\ 
&= P_n\left[\left(\Xt{1}+\Xt{2}-2\frac{N}{D}\right)g^\top\right] - 2P_n\left[\left(\frac{\hat N_n^{(-k)}}{\hat D_n^{(-k)}}-\frac{N}{D}\right)g^\top\right] \\ 
&= P_n\left[\left(\Xt{1}+\Xt{2}-2\frac{N}{D}\right)g^\top\right] - \frac{2}{K}\sum_{k=1}^KP_{n,k}\left[\left(\frac{\hat N_n^{(-k)}}{\hat D_n^{(-k)}}-\frac{N}{D}\right)g^\top\right] \\ 
&= P_n\left[\left(\Xt{1}+\Xt{2}-2\frac{N}{D}\right)g^\top\right] - \frac{2}{K}\sum_{k=1}^KP\left[\left(\frac{\hat N_n^{(-k)}}{\hat D_n^{(-k)}}-\frac{N}{D}\right)g^\top\right]+ \frac{2}{K}\sum_{k=1}^K(P-P_{n,k})\left[\left(\frac{\hat N_n^{(-k)}}{\hat D_n^{(-k)}}-\frac{N}{D}\right)g^\top\right].
\end{align*}

It remains to show that each of the three terms in the above expansion are either asymptotically linear or negligible. The first term is linear by definition. The second and third term require further study. First, observe that since $\hat N_n^{(-k)}(\cdot)$ and $\hat D_n^{(-k)}(\cdot)$ are asymptotically linear estimators of $N(\cdot)$ and $D(\cdot)$, respectively, and that $D(\cdot)>0$ by construction (the realized value of $x_i$ used to form $\xt{1}_i$ will by symmetry always be in the support of $R(\xt{1}_i,\Sigma)$), an application of the delta method yields that 
\begin{align*}
\frac{\hat N_n^{(-k)}(\cdot)}{\hat D_n^{(-k)}(\cdot)} - \frac{N(\cdot)}{D(\cdot)} &= \frac{1}{n-|M_k|}\sum_{i\not\in M_k} \left\{\varphi_N(\xt{1}_i,\xt{2}_i;\cdot)-\frac{N(\cdot)}{D(\cdot)}\varphi_D(\xt{1}_i,\xt{2}_i;\cdot)\right\}\frac{1}{D(\cdot)} + o_P(n^{-1/2}) \\
&= \frac{1}{n-|M_k|}\sum_{i\not\in M_k} \varphi_{ND}(\xt{1}_i,\xt{2}_i;\cdot) + o_P(n^{-1/2}).
\end{align*}
That is, $\frac{\hat N_n^{(-k)}(\cdot)}{\hat D_n^{(-k)}(\cdot)}$ is an asymptotically linear estimator of $\frac{N(\cdot)}{D(\cdot)}$ with influence function $\varphi_{ND}(\xt{1}_i,\xt{2}_i;\cdot)$. 

The second term simplifies as
\begin{align*}
\frac{2}{K}\sum_{k=1}^KP\left[\left(\frac{\hat N_n^{(-k)}}{\hat D_n^{(-k)}}-\frac{N}{D}\right)g^\top\right] &= \frac{2}{K}\sum_{k=1}^K P\left[\frac{1}{n-n/K}\sum_{i\not\in M_k} \varphi_{ND}(\xt{1}_i,\xt{2}_i;\xt{1})g(\xt{1})^\top + o_P(n^{-1/2})\right] \\
&= \frac{2}{n(K-1)}\sum_{k=1}^K \sum_{i\not\in M_k}P\left[\varphi_{ND}(\xt{1}_i,\xt{2}_i;\xt{1})g(\xt{1})^\top + o_P(n^{-1/2})\right] \\
&= \frac{2}{n(K-1)}\sum_{i=1}^n (K-1)P\left[\varphi_{ND}(\xt{1}_i,\xt{2}_i;\xt{1})g(\xt{1})^\top + o_P(n^{-1/2})\right] \\
&= \frac{1}{n}\sum_{i=1}^n 2\int\varphi_{ND}(\xt{1}_i,\xt{2}_i;\xt{1})g(\xt{1})^\top\,dP(\xt{1}) + o_P(n^{-1/2}).
\end{align*}
This shows that the second term is asymptotically linear as well.

Finally, the third term is a standard empirical process term and is negligible as a result of our use of cross-fitting in the construction of $\hat N_n^{(-k)}(\cdot)$ and $\hat D_n^{(-k)}(\cdot)$ and the restriction that $\hat N_n^{(-k)}(\cdot)$ and $\hat D_n^{(-k)}(\cdot)$ are asymptotically linear \citep{chernozhukov2022locally, kennedy2024semiparametric}.

Combining the above, it follows that 
\small
$$
C_n(\xt{1},\xt{2};\Theta_0) = \frac{1}{n}\sum_{i=1}^n\left[\left(\xt{1}_i+\xt{2}_i-2\frac{N(\xt{1}_i)}{D(\xt{1}_i)}\right)g(\xt{1}_i)^\top -2\int\varphi_{ND}(\xt{1}_i,\xt{2}_i;\xt{1})g(\xt{1})^\top dP(\xt{1})\right] + o_P(n^{-1/2}),
$$
\normalsize
thus proving the result.
\end{proof}

\subsection{Proof of Theorem \ref{thm:Cn}}

\begin{proof}
The result follows immediately by combining Lemma \ref{lem:ASL} with an application of the Lindeberg-Feller central limit theorem.
\end{proof}

\subsection{Proof of Corollary \ref{cor:Tn}}

\begin{proof}
The result follows from Slutsky's theorem and the continuous mapping theorem.
\end{proof}

\subsection{Proof of Proposition \ref{prop:asym}}

\begin{proof}
The result is a standard fact about Wald tests \citep{lehmann2005testing}. It can be seen as a consequence of the Portmanteau lemma.
\end{proof}

\subsection{Proof of Proposition \ref{prop:delta}}

\begin{proof}
For $\hat N_n^{(-k)}(\xt{1}_{i'})$, observe that 
\begin{align*}
&\hat N_n^{(-k)}(\xt{1}_{i'}) - N(\xt{1}_{i'}) \\
&= \E_{U_{i'}}[U_{i'}f_{X}(U_{i'};\eta_{i'}(\hat\theta_n^{(-k)}))] - \E_{U_{i'}}[U_{i'}f_{X}(U_{i'};\eta_{i'}(\theta^*))] \\
&= \nabla_\theta \left(\E_{U_{i'}}[U_{i'}f_{X}(U_{i'};\eta_{i'}(\theta))]\right)\Big\vert_{\theta=\theta^*}\left(\hat\theta_n^{(-k)} - \theta^*\right) + o_P(n^{-1/2}) \\
&= \E_{U_{i'}}[U_{i'}\nabla_\theta f_X(x;\eta_{i'}(\theta))^\top \mid_{\theta=\theta^*}]\left(\frac{1}{n-|M_k|}\sum_{i\not\in M_k}\varphi_\theta(x_i) + o_P(n^{-1/2})\right) + o_P(n^{-1/2}) \\
&=\frac{1}{n-|M_k|}\sum_{i\not\in M_k}\E_{U_{i'}}[U_{i'}\nabla_\theta f_X(x;\eta_{i'}(\theta))^\top \mid_{\theta=\theta^*}] \varphi_\theta(x_i) + o_P(n^{-1/2}),
\end{align*}
where the second equality follows from a Taylor expansion of $\E_{U_{i'}}[U_{i'}f_{X}(U_{i'};\eta_{i'}(\hat\theta_n^{(-k)}))]$ around $\theta^*$.

The result for $\hat D_n^{(-k)}(\xt{1}_{i'})$ follows similarly.
\end{proof}

\subsection{Proof of Corollary \ref{cor:delta}}

\begin{proof}
Starting from \eqref{eq:varphi-C}, substituting in the definition of $\varphi_{ND}(\cdot)$, followed by the definitions of $\varphi_N(\cdot)$ and $\varphi_D(\cdot)$ from Proposition \ref{prop:delta} yields
\small
\begin{align*}
&\varphi_C(\xt{1}_i,\xt{2}_i) \\
&= \left(\xt{1}_i+\xt{2}_i-2\frac{N(\xt{1}_i)}{D(\xt{1}_i)}\right)g(\xt{1}_i) -2\int\varphi_{ND}(\xt{1}_i,\xt{2}_i;s)g(s) dP_{\xt{1}}(s) \\
&= \left(\xt{1}_i+\xt{2}_i-2\frac{N(\xt{1}_i)}{D(\xt{1}_i)}\right)g(\xt{1}_i) -2\int\left\{\varphi_N(\xt{1}_i,\xt{2}_i;s)-\frac{N(s)}{D(s)}\varphi_D(\xt{1}_i,\xt{2}_i;s)\right\}\frac{g(s)}{D(s)} dP_{\xt{1}}(s) \\
&= \left(\xt{1}_i+\xt{2}_i-2\frac{N(\xt{1}_i)}{D(\xt{1}_i)}\right)g(\xt{1}_i) \\
&\quad-2\int\left\{\E_{U_{s}}[U_{s}\nabla_\theta f_X(x;\eta_{s}(\theta))^\top \mid_{\theta=\theta^*}] \varphi_\theta(x_i)-\frac{N(s)}{D(s)}\E_{U_{s}}[\nabla_\theta f_X(x;\eta_{s}(\theta))^\top \mid_{\theta=\theta^*}] \varphi_\theta(x_i)\right\}\frac{g(s)}{D(s)} dP_{\xt{1}}(s) \\
&= \left(\xt{1}_i+\xt{2}_i-2\frac{N(\xt{1}_i)}{D(\xt{1}_i)}\right)g(\xt{1}_i) \\
&\quad-2\int\left\{\E_{U_{s}}[U_{s}\nabla_\theta f_X(x;\eta_{s}(\theta))^\top \mid_{\theta=\theta^*}]-\frac{N(s)}{D(s)}\E_{U_{s}}[\nabla_\theta f_X(x;\eta_{s}(\theta))^\top \mid_{\theta=\theta^*}]\right\}\frac{g(s)}{D(s)} dP_{\xt{1}}(s)\varphi_\theta(x_i) \\
&= \left(\xt{1}_i+\xt{2}_i-2\frac{N(\xt{1}_i)}{D(\xt{1}_i)}\right)g(\xt{1}_i) - 2A(g)\varphi_\theta(x_i).
\end{align*}
\normalsize
\end{proof}

\subsection{Proof of Proposition \ref{prop:power}}

\begin{proof}
The asymptotic normality of $\vect{C_n(\Xt{1},\Xt{2};\Theta_0)}$ follows from an application of the Lindeberg-Feller central limit theorem. It remains to clarify the form of $\E\left[C_n(\Xt{1},\Xt{2};\Theta_0)\right]$. Observe that
\small
\begin{align*}
&\E\left[C_n(\Xt{1},\Xt{2};\theta_0)\right]\\
&=\frac{1}{n}\sum_{i=1}^n\E_{\eta_i(\theta_1)}\left[\left(\Xt{2}_i-h(\Xt{1}_i,\theta_0)\right)g(\Xt{1}_i)^\top\right] \\
&=\frac{1}{n}\sum_{i=1}^n\E_{\eta_i(\theta_1)}\left[\left(\Xt{2}_i-\E_{\eta_i(\theta_0)}[\Xt{2}_i|\Xt{1}_i=\xt{1}_i]\right)g(\Xt{1}_i)^\top\right] \\
&=\frac{1}{n}\sum_{i=1}^n\E_{\eta_i(\theta_1)}\left[\left(\Xt{2}_i-\E_{\eta_i(\theta_1)}[\Xt{2}_i|\Xt{1}_i=\xt{1}_i]\right)g(\Xt{1}_i)^\top\right. \\
&\quad\left.+\left(\E_{\eta_i(\theta_1)}[\Xt{2}_i|\Xt{1}_i=\xt{1}_i]-\E_{\eta_i(\theta_0)}[\Xt{2}_i|\Xt{1}_i=\xt{1}_i]\right)g(\Xt{1}_i)^\top\right] \\
&=\frac{1}{n}\sum_{i=1}^n\E_{\eta_i(\theta_1)}\left[\left(\E_{\eta_i(\theta_1)}[\Xt{2}_i|\Xt{1}_i=\xt{1}_i]-\E_{\eta_i(\theta_0)}[\Xt{2}_i|\Xt{1}_i=\xt{1}_i]\right)g(\Xt{1}_i)^\top\right] \\
&=\frac{2}{n}\sum_{i=1}^n\E_{\eta_i(\theta_1)}\left[\left(\frac{\E_{U_i}[U_if_{X}(U_i;\eta_i(\theta_1))]}{\E_{U_i}[f_{X}(U_i;\eta_i(\theta_1))]}-\frac{\E_{U_i}[U_if_{X}(U_i;\eta_i(\theta_0))]}{\E_{U_i}[f_{X}(U_i;\eta_i(\theta_0))]}\right)g(\Xt{1}_i)^\top\right],
\end{align*}
\normalsize
where the last equality follows from Proposition \ref{prop:ratio}.
\end{proof}

\subsection{Proof of Proposition \ref{prop:cIF}}

\begin{proof}
Taking the conditional expectation of the simplification of $\varphi_C(\cdot)$ given in Corollary \ref{cor:delta} yields
\begin{align*}
&\E_{\eta_i(\theta^*)}\left[\varphi_C(\Xt{1}_i,\Xt{2}_i)|\Xt{1}\right] \\
&= \E_{\eta_i(\theta^*)}\left[\left(\Xt{1}_i+\Xt{2}_i-2\frac{N(\Xt{1}_i)}{D(\Xt{1}_i)}\right)g(\Xt{1}_i) -2A(g)\varphi_\theta(X_i;\theta^*)\Big\vert\Xt{1}\right] \\
&= \E_{\eta_i(\theta^*)}\left[\left(\Xt{2}_i-\E\left[\Xt{2}_i|\Xt{1}_i\right]\right)g(\Xt{1}_i) -2A(g)\varphi_\theta(X_i;\theta^*)\Big\vert\Xt{1}\right] \\
&= -2A(g)\E_{\eta_i(\theta^*)}\left[\varphi_\theta(X_i;\theta^*)\Big\vert\Xt{1}_i\right] \\
&= -2A(g)\frac{\E_{U_i}[\varphi_\theta(U_i;\theta^*)f_{X}(U_i;\eta_i(\theta^*))]}{\E_{U_i}[f_{X}(U_i;\eta_i(\theta^*))]},
\end{align*}
where the first equality follows from Corollary \ref{cor:delta}, the second follows from Proposition \ref{prop:ratio}, the third from properties of the expectation, and the fourth from applying Lemma \ref{lem:cexp_generic} with $s(X_i)=\varphi_\theta(X_i;\theta^*)$.
\end{proof}

\subsection{Proof of Lemma \ref{lem:DnIF}}
\label{app:lem2}

\begin{proof}
To ease notation, let $b(\xt{1}_i;\theta)=\frac{\E_{U_i}[\varphi_\theta(U_i;\theta)f_{X}(U_i;\eta_i(\theta))]}{\E_{U_i}[f_{X}(U_i;\eta_i(\theta))]}$ where $U_i\sim R(\xt{1}_i,\Sigma)$. It follows that we can write $D_n(\xt{1},\xt{2};\Theta_0(\xt{1}))$ in \eqref{eq:Dn} as
$$
D_n(\xt{1},\xt{2};\Theta_0(\xt{1})) = C_n(\Xt{1},\Xt{2};\Theta_0(\Xt{1})) + 2A(g)(I+B)^{-1}\frac{1}{n}\sum_{i=1}^n b(\Xt{1}_i;\hat\theta_n^{(-k)}).
$$

Recall from Corollary \ref{cor:delta} that $C_n(\xt{1},\xt{2};\Theta_0(\xt{1}))=P_n\left[\varphi_C(\xt{1},\xt{2})\right]+o_P(n^{-1/2})$. We begin by expanding $D_n(\xt{1},\xt{2};\Theta_0(\xt{1}))$ as follows:
\begin{align*}
&D_n(\xt{1},\xt{2};\Theta_0(\xt{1})) \\
&= P_n\left[\varphi_C(\Xt{1},\Xt{2})\right] + 2A(g)(I+B)^{-1}P_n\left[b(\Xt{1};\hat\theta_n^{(-k)})\right]  + o_P(n^{-1/2}) \\
&= P_n\left[\varphi_C(\Xt{1},\Xt{2})\right] + 2A(g)(I+B)^{-1}\frac{1}{K}\sum_{k=1}^KP_{n,k}\left[b(\Xt{1};\hat\theta_n^{(-k)})\right]  + o_P(n^{-1/2}) \\
&= P_n\left[\varphi_C(\Xt{1},\Xt{2})\right] + 2A(g)(I+B)^{-1}\frac{1}{K}\sum_{k=1}^K\left(P\left[b(\Xt{1};\hat\theta_n^{(-k)})\right] + (P_{n,k}-P)\left[b(\Xt{1};\hat\theta_n^{(-k)})\right]\right)  + o_P(n^{-1/2}) \\
&= P_n\left[\varphi_C(\Xt{1},\Xt{2})\right] + 2A(g)(I+B)^{-1}\frac{1}{K}\sum_{k=1}^K\left(P\left[b(\Xt{1};\hat\theta_n^{(-k)})\right] + (P_{n,k}-P)\left[b(\Xt{1};\theta^*)\right]\right. \\
&\quad\left.+ (P_{n,k}-P)\left[b(\Xt{1};\hat\theta_n^{(-k)})-b(\Xt{1};\theta^*)\right]\right)  + o_P(n^{-1/2}) \\
&= P_n\left[\varphi_C(\Xt{1},\Xt{2})\right] + 2A(g)(I+B)^{-1}\frac{1}{K}\sum_{k=1}^K\left(P\left[b(\Xt{1};\hat\theta_n^{(-k)})\right] + P_{n,k}\left[b(\Xt{1};\theta^*)\right]\right)  + o_P(n^{-1/2}),
\end{align*}
where in the last step we recall from Proposition \ref{prop:cIF} that $b(\xt{1}_i;\theta)=\frac{\E_{U_i}[\varphi_\theta(U_i;\theta)f_{X}(U_i;\eta_i(\theta))]}{\E_{U_i}[f_{X}(U_i;\eta_i(\theta))]}=\E_{\eta_i(\theta)}\left[\varphi_\theta(X_i;\theta)|\Xt{1}_i\right]$, which implies that $P\left[b(\Xt{1};\theta^*)\right]=\E_{\eta_i(\theta^*)}\left[\E_{\eta_i(\theta^*)}\left[\varphi_\theta(X_i;\theta^*)|\Xt{1}_i\right]\right]=0$, and then we note that the empirical process term is negligible due to cross-fitting \citep{chernozhukov2022locally, kennedy2024semiparametric}.

We focus our attention on $P\left[b(\Xt{1};\hat\theta_n^{(-k)})\right]$ as the remaining terms are already linear. Without loss of generality, assume that the cross-fitting folds $M_1,\dots,M_K$ satisfy $|M_k|=n/K$ for $k=1,\dots,K$, and observe that a Taylor expansion around $\theta^*$ yields
\begin{align*}
P\left[b(\Xt{1};\hat\theta_n^{(-k)})\right]&=P\left[b(\Xt{1};\theta^*) + \left(\nabla_\theta b(\Xt{1};\theta)\Big{\vert}_{\theta=\theta^*}\right)^\top\left(\hat\theta_n^{(-k)}-\theta^*\right)+o_P(n^{-1/2})\right] \\
&=P\left[\left(\nabla_\theta b(\Xt{1};\theta)\Big{\vert}_{\theta=\theta^*}\right)^\top\frac{1}{n-n/K}\sum_{i\not\in M_k}\varphi_\theta(x_i;\theta^*)\right] +o_P(n^{-1/2}) \\
&=\frac{K}{n(K-1)}\sum_{i\not\in M_k}P\left[\left(\nabla_\theta b(\Xt{1};\theta)\Big{\vert}_{\theta=\theta^*}\right)^\top\right]\varphi_\theta(x_i;\theta^*) +o_P(n^{-1/2}) \\
&=\frac{K}{n(K-1)}\sum_{i\not\in M_k}B\varphi_\theta(x_i;\theta^*) +o_P(n^{-1/2}).
\end{align*}
The first equality follows from the Taylor expansion, the second from the fact that $\hat\theta_n^{(-k)}$ is an asymptotically linear estimator of $\theta^*$ with influence function $\varphi_\theta(\cdot)$ as well as the fact that $P[b(\Xt{1};\theta^*)]=0$, the third is a rearrangement of terms, the fourth from the definition of $B$ below \eqref{eq:Dn}. Combining all $k$ folds yields
\begin{equation}
\label{eq:Pb}
\frac{1}{K}\sum_{k=1}^KP\left[b(\Xt{1};\hat\theta_n^{(-k)})\right]=\frac{1}{n}\sum_{i=1}^nB\varphi_\theta(x_i;\theta^*) +o_P(n^{-1/2}) = BP_n\left[\varphi_\theta(X;\theta^*)\right] +o_P(n^{-1/2})
\end{equation}

Returning to $D_n(\xt{1},\xt{2};\Theta_0(\xt{1}))$, we have that
\small
\begin{align*}
&D_n(\xt{1},\xt{2};\Theta_0(\xt{1})) \\
&= P_n\left[\varphi_C(\Xt{1},\Xt{2})\right] + 2A(g)(I+B)^{-1}\frac{1}{K}\sum_{k=1}^K\left(P\left[b(\Xt{1};\hat\theta_n^{(-k)})\right] + P_{n,k}\left[b(\Xt{1};\theta^*)\right]\right)  + o_P(n^{-1/2}) \\
&= P_n\left[\varphi_C(\Xt{1},\Xt{2})\right] + 2A(g)(I+B)^{-1}\left(BP_n\left[\varphi_\theta(X;\theta^*)\right] + P_n\left[b(\Xt{1};\theta^*)\right]\right)  + o_P(n^{-1/2}) \\
&= P_n\left[\varphi_C(\Xt{1},\Xt{2}) + 2A(g)(I+B)^{-1}\left(B\varphi_\theta(X;\theta^*) + b(\Xt{1};\theta^*)\right)\right]  + o_P(n^{-1/2}) \\
&= P_n\left[\left(\Xt{1}+\Xt{2}-2\frac{N}{D}\right)g -2A(g)\varphi_\theta(X;\theta^*) + 2A(g)(I+B)^{-1}\left(B\varphi_\theta(X;\theta^*) + b(\Xt{1};\theta^*)\right)\right]  + o_P(n^{-1/2}) \\
&= P_n\left[\left(\Xt{1}+\Xt{2}-2\frac{N}{D}\right)g -2A(g)\left(\varphi_\theta(X;\theta^*) -(I+B)^{-1}\left(B\varphi_\theta(X;\theta^*) + b(\Xt{1};\theta^*)\right)\right)\right]  + o_P(n^{-1/2}) \\
&= P_n\left[\left(\Xt{1}+\Xt{2}-2\frac{N}{D}\right)g -2A(g)\left(\left(I -(I+B)^{-1}B\right)\varphi_\theta(X;\theta^*) - (I+B)^{-1}b(\Xt{1};\theta^*)\right)\right]  + o_P(n^{-1/2}) \\
&= P_n\left[\left(\Xt{1}+\Xt{2}-2\frac{N}{D}\right)g -2A(g)\left((I+B)^{-1}\varphi_\theta(X;\theta^*) - (I+B)^{-1}b(\Xt{1};\theta^*)\right)\right]  + o_P(n^{-1/2}) \\
&= P_n\left[\left(\Xt{1}+\Xt{2}-2\frac{N}{D}\right)g -2A(g)(I+B)^{-1}\left(\varphi_\theta(X;\theta^*) - b(\Xt{1};\theta^*)\right)\right]  + o_P(n^{-1/2}),
\end{align*}
\normalsize
where the first equality restates the expression for $D_n(\xt{1},\xt{2};\Theta_0(\xt{1}))$, the second substitutes in \eqref{eq:Pb}, the third gathers all terms inside the $P_n$, the fourth substitutes the definition of $\varphi_C(\cdot)$ from \eqref{eq:varphi-C}, the fifth factors out $2A(g)$, the sixth gathers like terms, the seventh uses that $I-(I+B)^{-1}B=(I+B)^{-1}$, and the eighth factors out $(I+B)^{-1}$.

The result follows after we recall from Proposition \ref{prop:cIF} that $b(\Xt{1};\theta^*)=\E_{\eta_i(\theta^*)}\left[\varphi_\theta(X_i;\theta^*)|\Xt{1}_i\right]$.
\end{proof}

\subsection{Proof of Theorem \ref{thm:Dn}}

\begin{proof}
To prove the result, it suffices to show that
\begin{equation}
\label{eq:intermediate}
P\left(\sqrt{n}\Omega_n^{-1/2}\vect{\frac{1}{n}\sum_{i=1}^n \varphi_D(\xt{1}_i,\xt{2}_i)} \le x \Big\vert\mathcal{F}_n \right)\overset{p}\to \Phi_p(x),
\end{equation}
where $\varphi_D(\cdot)$ is defined in Lemma \ref{lem:DnIF}. Since $D_n(\Xt{1},\Xt{2};\Theta_0(\Xt{1}))-\frac{1}{n}\sum_{i=1}^n \varphi_D(\xt{1}_i,\xt{2}_i)=o_P(n^{-1/2})$, the result then follows from an application of the conditional Slutsky's theorem \citep{niu2024reconciling}.

To prove \eqref{eq:intermediate}, we apply a conditional Lindeberg-Feller central limit theorem \citep{bulinski2017conditional, zhao2025imputation}. In addition to the conditional Lindeberg-Feller condition which we assume in the statement of the theorem, this requires that $\E\left[\varphi_D(\Xt{1}_i,\Xt{2}_i)|\mathcal{F}_n\right]=0$ and that $\varphi_D(\Xt{1}_i,\Xt{2}_i)$ are independent conditional on $\mathcal{F}_n$. To establish the former, observe that
\begin{align*}
&\E\left[\varphi_D(\Xt{1}_i,\Xt{2}_i)|\mathcal{F}_n\right] \\
&= \E\left[\left(\Xt{1}_i+\Xt{2}_i-2\frac{N(\Xt{1}_i)}{D(\Xt{1}_i)}\right)g(\Xt{1}_i) -2A(g)(I+B)^{-1}\left(\varphi_\theta(X_i;\theta^*)-\E_{\eta_i(\theta^*)}\left[\varphi_\theta(X_i;\theta^*)|\Xt{1}_i\right]\right)\Bigg\vert\mathcal{F}_n\right] \\
&= \E\left[\Xt{2}_i-\E\left[\Xt{2}_i|\Xt{1}_i\right]\Bigg\vert\mathcal{F}_n\right]g(\Xt{1}_i) -2A(g)(I+B)^{-1}\E\left[\varphi_\theta(X_i;\theta^*)-\E_{\eta_i(\theta^*)}\left[\varphi_\theta(X_i;\theta^*)|\Xt{1}_i\right]\Bigg\vert\mathcal{F}_n\right] \\
&=0.
\end{align*}
Here, the first equality results from the definition of $\varphi_D(\cdot)$ in Lemma \ref{lem:DnIF}, the second follows from Proposition \ref{prop:ratio}, and the third equality follows from the tower rule.

Finally, note that $\varphi_D(\Xt{1}_i,\Xt{2}_i)$ are independent conditional on $\mathcal{F}_n$, as conditioning on $\mathcal{F}_n$ eliminates any dependence induced by selecting $H_0(\xt{1})$.
\end{proof}
\section{Additional technical details}

\subsection{Recovering $(I+B)^{-1}$ through iterative debiasing}
\label{app:geometric}

In Section \ref{sec:sel},  $D_n(\xt{1},\xt{2};\Theta_0(\xt{1}))$ in \eqref{eq:Dn} is a conditionally debiased variant of $C_n(\xt{1},\xt{2};\Theta_0(\xt{1}))$ that involves a plug-in debiasing term rescaled by the constant $(I+B)^{-1}$. To provide insight into $(I+B)^{-1}$, here we show that $D_n(\xt{1},\xt{2};\Theta_0(\xt{1}))$ can be recovered through an iterative sequence of plug-in debiasing steps that begins with $C_n^{(1)}(\Xt{1},\Xt{2};\Theta_0(\Xt{1}))$ in \eqref{eq:Cn1}, the naive plug-in estimator of $C_n'(\Xt{1},\Xt{2};\Theta_0(\Xt{1}))$ in \eqref{eq:Cn'} with the form of the plug-in term given by Proposition \ref{prop:cIF}. As in the proof of Lemma \ref{lem:DnIF}, to ease notation we define $b(\xt{1}_i;\theta)=\frac{\E_{U_i}[\varphi_\theta(U_i;\theta)f_{X}(U_i;\eta_i(\theta))]}{\E_{U_i}[f_{X}(U_i;\eta_i(\theta))]}$ where $U_i\sim R(\xt{1}_i,\Sigma)$.

Observe that $C_n^{(1)}(\Xt{1},\Xt{2};\Theta_0(\Xt{1}))$ in \eqref{eq:Cn1} can be written as
\begin{align*}
&C_n^{(1)}(\Xt{1},\Xt{2};\Theta_0(\Xt{1})) \\
&= C_n(\Xt{1},\Xt{2};\Theta_0(\Xt{1})) + 2A(g)\frac{1}{n}\sum_{i=1}^n b(\Xt{1}_i;\hat\theta_n^{(-k)}) \\
&= P_n\left[\varphi_C(\Xt{1},\Xt{2})\right]+2A(g)P_n\left[b(\Xt{1};\hat\theta_n^{(-k)})\right] + o_P(n^{-1/2}),
\end{align*}
where we recall the definition of $\varphi_C$ in \eqref{eq:varphi-C}.

Focusing on $P_n\left[b(\Xt{1};\hat\theta_n^{(-k)})\right]$, observe that 
\begin{align}
\begin{split}
\label{eq:Pb2}
&P_n\left[b(\Xt{1};\hat\theta_n^{(-k)})\right] \\
&=P_n\left[b(\Xt{1};\theta^*)\right] + P_n\left[b(\Xt{1};\hat\theta_n^{(-k)})-b(\Xt{1};\theta^*)\right] \\
&=P_n\left[b(\Xt{1};\theta^*)\right] +  \frac{1}{K}\sum_{k=1}^KP_{n,k}\left[b(\Xt{1};\hat\theta_n^{(-k)})-b(\Xt{1};\theta^*)\right] \\
&=P_n\left[b(\Xt{1};\theta^*)\right] +  \frac{1}{K}\sum_{k=1}^K\left(P\left[b(\Xt{1};\hat\theta_n^{(-k)})-b(\Xt{1};\theta^*)\right] + (P_{n,k}-P)\left[b(\Xt{1};\hat\theta_n^{(-k)})-b(\Xt{1};\theta^*)\right]\right) \\
&=P_n\left[b(\Xt{1};\theta^*)\right] +  \frac{1}{K}\sum_{k=1}^KP\left[b(\Xt{1};\hat\theta_n^{(-k)})\right] + o_P(n^{-1/2}) \\
&=P_n\left[b(\Xt{1};\theta^*)\right] +  BP_n\left[\varphi_\theta(X;\theta^*)\right] +o_P(n^{-1/2}),
\end{split}
\end{align}
where the second-to-last step uses the fact that $P\left[b(\Xt{1};\theta^*)\right]=0$ (see the Proof of Lemma \ref{lem:DnIF} in Supplement \ref{app:lem2}) and that the empirical process term is negligible by cross-fitting, and the final step uses \eqref{eq:Pb}. Putting the pieces together, we have that 
$$
C_n^{(1)}(\Xt{1},\Xt{2};\Theta_0(\Xt{1})) = P_n\left[\varphi_C(\Xt{1},\Xt{2})+2A(g)b(\Xt{1};\theta^*)\right] + 2A(g)BP_n\left[\varphi_\theta(X;\theta^*)\right]  + o_P(n^{-1/2}).
$$

The above clarifies the issue with $C_n^{(1)}(\Xt{1},\Xt{2};\Theta_0(\Xt{1}))$: while the first term on the right-hand side is the conditionally mean-zero linear component of $C_n'(\Xt{1},\Xt{2};\Theta_0(\Xt{1}))$ in \eqref{eq:Cn'}, there is an extra term $2A(g)BP_n\left[\varphi_\theta(X;\theta^*)\right]$. This new term is not conditionally mean-zero; in fact, it follows from Proposition \ref{prop:cIF} that its conditional mean is $2A(g)BP_n\left[b(\Xt{1};\theta^*)\right]$. Unfortunately, we cannot debias $C_n^{(1)}(\Xt{1},\Xt{2};\Theta_0(\Xt{1}))$ by subtracting off $2A(g)BP_n\left[b(\Xt{1};\theta^*)\right]$, as it depends on $\theta^*$. Instead, consider the following, which subtracts off a plug-in correction:
$$
C_n^{(2)}(\Xt{1},\Xt{2};\Theta_0(\Xt{1})) = C_n^{(1)}(\Xt{1},\Xt{2};\Theta_0(\Xt{1})) - 2A(g)BP_n\left[b(\Xt{1};\hat\theta_n^{(-k)})\right].
$$

As with $C_n^{(1)}(\Xt{1},\Xt{2};\Theta_0(\Xt{1}))$ above, we can decompose $C_n^{(2)}(\Xt{1},\Xt{2};\Theta_0(\Xt{1}))$ as follows:
\begin{align*}
&C_n^{(2)}(\Xt{1},\Xt{2};\Theta_0(\Xt{1})) \\
&= P_n\left[\varphi_C(\Xt{1},\Xt{2})+2A(g)b(\Xt{1};\theta^*)\right] + 2A(g)BP_n\left[\varphi_\theta(X;\theta^*)\right] - 2A(g)BP_n\left[b(\Xt{1};\hat\theta_n^{(-k)})\right] + o_P(n^{-1/2}) \\
&= P_n\left[\varphi_C(\Xt{1},\Xt{2})+2A(g)b(\Xt{1};\theta^*)\right] + 2A(g)BP_n\left[\varphi_\theta(X;\theta^*)-b(\Xt{1};\theta^*)\right] \\
&\quad - 2A(g)B^2P_n\left[\varphi_\theta(X;\theta^*)\right] + o_P(n^{-1/2}),
\end{align*}
where in the second step we use \eqref{eq:Pb2}.

Similar to $C_n^{(1)}(\Xt{1},\Xt{2};\Theta_0(\Xt{1}))$, we see that $C_n^{(2)}(\Xt{1},\Xt{2};\Theta_0(\Xt{1}))$ has the additional term, $-2A(g)B^2P_n\left[\varphi_\theta(X;\theta^*)\right]$, which is equal to the extra bias term in $C_n^{(1)}(\Xt{1},\Xt{2};\Theta_0(\Xt{1}))$ scaled by $-B$. Consider next the following, which attempts to debias $C_n^{(2)}(\Xt{1},\Xt{2};\Theta_0(\Xt{1}))$ by adding $2A(g)B^2P_n\left[b(\Xt{1};\hat\theta_n^{(-k)})\right]$:
\begin{align*}
&C_n^{(3)}(\Xt{1},\Xt{2};\Theta_0(\Xt{1})) \\
&= C_n^{(2)}(\Xt{1},\Xt{2};\Theta_0(\Xt{1})) + 2A(g)B^2P_n\left[b(\Xt{1};\hat\theta_n^{(-k)})\right] \\ 
&= P_n\left[\varphi_C(\Xt{1},\Xt{2})+2A(g)b(\Xt{1};\theta^*)\right] + 2A(g)BP_n\left[\varphi_\theta(X;\theta^*)-b(\Xt{1};\theta^*)\right] - 2A(g)B^2P_n\left[\varphi_\theta(X;\theta^*)\right] \\
&\quad+ 2A(g)B^2P_n\left[b(\Xt{1};\hat\theta_n^{(-k)})\right] + o_P(n^{-1/2}) \\
&= P_n\left[\varphi_C(\Xt{1},\Xt{2})+2A(g)b(\Xt{1};\theta^*)\right] + 2A(g)BP_n\left[\varphi_\theta(X;\theta^*)-b(\Xt{1};\theta^*)\right] - 2A(g)B^2P_n\left[\varphi_\theta(X;\theta^*)\right] \\
&\quad+ 2A(g)B^2\left(P_n\left[b(\Xt{1};\theta^*)\right] +  BP_n\left[\varphi_\theta(X;\theta^*)\right]\right) + o_P(n^{-1/2}) \\
&= P_n\left[\varphi_C(\Xt{1},\Xt{2})+2A(g)b(\Xt{1};\theta^*)\right] + 2A(g)BP_n\left[\varphi_\theta(X;\theta^*)-b(\Xt{1};\theta^*)\right]  \\
&\quad- 2A(g)B^2P_n\left[\varphi_\theta(X;\theta^*)-b(\Xt{1};\theta^*)\right]+ 2A(g)B^3P_n\left[\varphi_\theta(X;\theta^*)\right] + o_P(n^{-1/2}),
\end{align*}
where in the third equality we again rely on \eqref{eq:Pb2}.

In the above a clear pattern emerges: at each step, debiasing $P_n\left[\varphi_\theta(X;\theta^*)\right]$ with $P_n\left[b(\Xt{1};\hat\theta_n^{(-k)})\right]$ triggers the follow-on bias term $-BP_n\left[\varphi_\theta(X;\theta^*)\right]$. Repeating this process $M$ times yields
$$
C_n^{(M)}(\Xt{1},\Xt{2};\Theta_0(\Xt{1})) = C_n(\Xt{1},\Xt{2};\Theta_0(\Xt{1})) + 2A(g)\left(\sum_{m=0}^{M-1}(-B)^m\right)P_n\left[b(\Xt{1}_i;\hat\theta_n^{(-k)})\right],
$$
which expands as
\footnotesize
\begin{align*}
&C_n^{(M)}(\Xt{1},\Xt{2};\Theta_0(\Xt{1})) \\
&= C_n(\Xt{1},\Xt{2};\Theta_0(\Xt{1})) + 2A(g)\left(\sum_{m=0}^{M-1}(-B)^m\right)P_n\left[b(\Xt{1}_i;\hat\theta_n^{(-k)})\right] \\
&= P_n\left[\varphi_C(\Xt{1},\Xt{2})\right] + 2A(g)\left(\sum_{m=0}^{M-1} (-B)^m\right)\left(P_n\left[b(\Xt{1};\theta^*)\right] +  BP_n\left[\varphi_\theta(X;\theta^*)\right]\right) + o_P(n^{-1/2}) \\
&= P_n\left[\left(\Xt{1}+\Xt{2}-2\frac{N(\Xt{1})}{D(\Xt{1})}\right)g(\Xt{1})-2A(g)\varphi_\theta(X;\theta^*)+ 2A(g)\left(\sum_{m=0}^{M-1} (-B)^m\right)\left(b(\Xt{1};\theta^*) +  B\varphi_\theta(X;\theta^*)\right)\right] + o_P(n^{-1/2}) \\
&= P_n\left[\left(\Xt{1}+\Xt{2}-2\frac{N(\Xt{1})}{D(\Xt{1})}\right)g(\Xt{1})-2A(g)\left(
\left(I-\left(\sum_{m=0}^{M-1} (-B)^m  \right)B\right)\varphi_\theta(X;\theta^*)-\left(\sum_{m=0}^{M-1} (-B)^m\right)b(\Xt{1};\theta^*)\right)\right] + o_P(n^{-1/2}) \\
&= P_n\left[\left(\Xt{1}+\Xt{2}-2\frac{N(\Xt{1})}{D(\Xt{1})}\right)g(\Xt{1})-2A(g)\left(
\left(I+\sum_{m=1}^{M} (-B)^m  \right)\varphi_\theta(X;\theta^*)-\left(\sum_{m=0}^{M-1} (-B)^m\right)b(\Xt{1};\theta^*)\right)\right] + o_P(n^{-1/2}) \\
&= P_n\left[\left(\Xt{1}+\Xt{2}-2\frac{N(\Xt{1})}{D(\Xt{1})}\right)g(\Xt{1})-2A(g)\left(
\left(\sum_{m=0}^{M} (-B)^m  \right)\varphi_\theta(X;\theta^*)-\left(\sum_{m=0}^{M-1} (-B)^m\right)b(\Xt{1};\theta^*)\right)\right] + o_P(n^{-1/2}),
\end{align*}
\normalsize
where the second equality follows from the definition of $\varphi_C$ in \eqref{eq:varphi-C} as well as \eqref{eq:Pb2}, and the third equality from Corollary \ref{cor:delta}.

When $M\to\infty$, if $\max(|\lambda_{\max}(B)|, |\lambda_{\min}(B)|) \le 1 $, all geometric series in the above converge to $(I+B)^{-1}$, yielding $D_n(\xt{1},\xt{2};\Theta_0(\xt{1}))$.

\subsection{Bounding the eigenvalues of $B$}
\label{app:eigen}

Lemma \ref{lem:DnIF} requires that $I+B$ is invertible; equivalently, it requires that none of the eigenvalues of $B$, defined in \eqref{eq:B}, are equal to $-1$. Proposition \ref{prop:efficient} shows that this is guaranteed to be true when the estimator $\hat\theta_n$ with influence function $\varphi_\theta$ is efficient. We begin with Lemma \ref{lem:cov}, which provides a generic simplification of $B$. We state and prove both results for the independent and identically distributed case; similar arguments hold for the triangular array setting.

\begin{lemma}
\label{lem:cov}
In the setting of Lemma \ref{lem:DnIF}, $B=-\E\left[\varphi_\theta(X;\theta^*)s_{\Xt{1}}(\Xt{1};\theta^*)\right]$ where $s_{\Xt{1}}(\Xt{1};\theta^*)$ is the score function for $\Xt{1}$.
\end{lemma}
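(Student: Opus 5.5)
We need to show $B=-\E[\varphi_\theta(X;\theta^*)s_{\Xt{1}}(\Xt{1};\theta^*)^\top]$, up to the transposition convention in \eqref{eq:B}. Recall from the proof of Lemma \ref{lem:DnIF} that $b(s;\theta)=\E_\theta[\varphi_\theta(X;\theta)\mid\Xt{1}=s]$, by Lemma \ref{lem:cexp_generic}. Thus $B$ is the $P_{\Xt{1}}$-average of $\nabla_\theta b(s;\theta)|_{\theta^*}$. The plan is to differentiate the identity
\[
\E_\theta\big[b(\Xt{1};\theta)\big]=\E_\theta[\varphi_\theta(X;\theta)]=0 \qquad\text{for all }\theta,
\]
which holds because an influence function is mean zero under its own parameter. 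This zero-mean identity holds in a neighbourhood of $\theta^*$, not only at $\theta^*$.

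\textbf{Step 1.} Write the left side as $\int b(s;\theta)\,f_{\Xt{1}}(s;\theta)\,ds$. Here $f_{\Xt{1}}(s;\theta)=\int f_X(x;\theta)f_R(s-x;0,\Sigma)\,dx$ is the density of $\Xt{1}$, namely the denominator $D(s)$ of Proposition \ref{prop:ratio}, which is positive.

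\textbf{Step 2.} Differentiate under the integral at $\theta^*$. Assume regularity sufficient to interchange derivative and integral, which is in the spirit of Proposition \ref{prop:delta}. The product rule gives
\[
0=\int \nabla_\theta b(s;\theta)\big|_{\theta^*}\,dP_{\Xt{1}}(s)+\int b(s;\theta^*)\,\nabla_\theta f_{\Xt{1}}(s;\theta^*)\,ds .
\]
The first term is $B$ (transposed appropriately). The second term equals $\E[b(\Xt{1};\theta^*)\,s_{\Xt{1}}(\Xt{1};\theta^*)^\top]$, where $s_{\Xt{1}}=\nabla_\theta\log f_{\Xt{1}}$.

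\textbf{Step 3.} By the tower property, $b(\Xt{1};\theta^*)=\E[\varphi_\theta(X;\theta^*)\mid\Xt{1}]$ and $s_{\Xt{1}}$ is $\Xt{1}$-measurable. Hence
\[
\E[b\,s_{\Xt{1}}^\top]=\E[\varphi_\theta(X;\theta^*)\,s_{\Xt{1}}(\Xt{1};\theta^*)^\top],
\]
and therefore $B=-\E[\varphi_\theta s_{\Xt{1}}^\top]$.

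\textbf{Main obstacle.} The delicate point is Step 2. The quantity $b$ depends on $\theta$ in two ways: through $\varphi_\theta(\cdot;\theta)$ and through the conditional law $f_X(\cdot;\theta)$. The definition of $B$ in \eqref{eq:B} differentiates through both, and so does the identity $\E_\theta[b(\Xt{1};\theta)]=0$, so both dependencies are handled consistently. This step still requires the identity to hold for $\theta$ in a neighbourhood of $\theta^*$, together with dominated-convergence conditions justifying the interchange of differentiation and integration. I would state these as regularity assumptions, and note that they parallel the differentiability assumed in Proposition \ref{prop:delta}.
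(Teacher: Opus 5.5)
Your argument is correct, but it is organized differently from the paper's proof.

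\textbf{The paper's route.} The paper works at the level of $X$. It differentiates $\E_\theta[\varphi_\theta(X;\theta)]=0$ under the law of $X$ to get $\E[\nabla_\theta\varphi_\theta]=-\E[\varphi_\theta s_X]$. It then expands $B$ separately by differentiating inside the conditional expectation, which gives $B=\E[\nabla_\theta\varphi_\theta]+\E[\varphi_\theta s_{X\mid\Xt{1}}]$. It finishes with the score decomposition $s_{X\mid\Xt{1}}=s_X-s_{\Xt{1}}$, which is valid because $f_R$ does not depend on $\theta$.

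\textbf{Your route.} You treat $b(s;\theta)$ as a black box and differentiate the single identity $\E_\theta[b(\Xt{1};\theta)]=0$ under the marginal law of $\Xt{1}$. The product rule then gives $B+\E[b\,s_{\Xt{1}}^\top]=0$ directly, and the tower property converts $b$ back to $\varphi_\theta$.

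\textbf{Comparison.} Your argument is shorter and needs less regularity. You never need $\theta\mapsto\varphi_\theta(x;\theta)$ to be differentiable pointwise, nor $\nabla_\theta\varphi_\theta$ to be integrable. You only need differentiability of $b(s;\cdot)$, which is already required for $B$ in \eqref{eq:B} to be defined. You also need differentiability of $f_{\Xt{1}}(s;\cdot)=D(s)$, which is a convolution with the noise density, plus one interchange of derivative and integral. You also handle the transpose explicitly, which the paper leaves implicit.

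What the paper's route gains is that it exposes the decomposition $s_X=s_{\Xt{1}}+s_{X\mid\Xt{1}}$. This is exactly the structure reused in the proof of Proposition \ref{prop:efficient}. There, orthogonality of $s_{X\mid\Xt{1}}$ and $s_{\Xt{1}}$ gives $B=-\mathcal{I}_X^{-1}(\theta^*)\mathcal{I}_{\Xt{1}}(\theta^*)$.
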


\begin{proof}
Recall that for any $\theta^*\in\Theta$, it holds that $\E_{\theta^*}[\varphi_\theta(X;\theta^*)]=0$. It follows that 
\begin{align*}
&\nabla_\theta\left(\E_{\theta}[\varphi_\theta(X;\theta)]\right)\vert_{\theta=\theta^*}=0 \\
\implies&\nabla_\theta\left(\int\varphi_\theta(x;\theta)f_X(x;\theta)dx\right)\vert_{\theta=\theta^*}=0 \\
\implies&\int\left(\nabla_\theta\left(\varphi_\theta(x;\theta)\right)\vert_{\theta=\theta^*}f_X(x;\theta^*)+\varphi_\theta(x;\theta^*)s_X(x;\theta^*)f_X(x;\theta^*)\right)dx=0 \\
\implies&\E_{\theta^*}[\nabla_\theta\left(\varphi_\theta(x;\theta)\right)\vert_{\theta=\theta^*}]+\E_{\theta^*}[\varphi_\theta(x;\theta^*)s_X(x;\theta^*)]=0.
\end{align*}

Next, observe that $B$ simplifies similarly as 
\begin{align*}
B&=\int \nabla_\theta\left(E_\theta[\varphi_\theta(X;\theta)|\Xt{1}=\xt{1}]\right)\Big\vert_{\theta=\theta^*}f_{\Xt{1}}(\xt{1};\theta^*)d\xt{1} \\
&=\int \nabla_\theta\left(\int\varphi_\theta(x;\theta)f_{X|\Xt{1}}(x;\theta)dx\right)\Big\vert_{\theta=\theta^*}f_{\Xt{1}}(\xt{1};\theta^*)d\xt{1} \\
&=\int \int\left(\nabla_\theta\left(\varphi_\theta(x;\theta)\right)\Big\vert_{\theta=\theta^*}f_{X|\Xt{1}}(x;\theta^*) + \varphi_\theta(x;\theta^*)s_{X|\Xt{1}}(x;\theta^*)f_{X|\Xt{1}}(x;\theta^*) \right)dxf_{\Xt{1}}(\xt{1};\theta^*)d\xt{1} \\
&=\int \int\nabla_\theta\left(\varphi_\theta(x;\theta)\right)\Big\vert_{\theta=\theta^*}f_{X,\Xt{1}}(x,\xt{1};\theta^*)dxd\xt{1} +\int\int \varphi_\theta(x;\theta^*)s_{X|\Xt{1}}(x;\theta^*)f_{X,\Xt{1}}(x,\xt{1};\theta^*)dxd\xt{1} \\
&=\E_{\theta^*}[\nabla_\theta\left(\varphi_\theta(X;\theta)\right)\vert_{\theta=\theta^*}] + \E_{\theta^*}[\varphi_\theta(X;\theta^*)s_{X|\Xt{1}}(X;\theta^*)] \\
&=\E_{\theta^*}[\nabla_\theta\left(\varphi_\theta(X;\theta)\right)\vert_{\theta=\theta^*}] + \E_{\theta^*}[\varphi_\theta(X;\theta^*)\{s_X(X;\theta^*)-s_{\Xt{1}}(\Xt{1};\theta^*)\}] \\
&=\E_{\theta^*}[\nabla_\theta\left(\varphi_\theta(X;\theta)\right)\vert_{\theta=\theta^*}] + \E_{\theta^*}[\varphi_\theta(X;\theta^*)s_X(X;\theta^*)] - \E_{\theta^*}[\varphi_\theta(X;\theta^*)s_{\Xt{1}}(\Xt{1};\theta^*)] \\
&= - \E_{\theta^*}[\varphi_\theta(X;\theta^*)s_{\Xt{1}}(\Xt{1};\theta^*)],
\end{align*}
where in the last line we used the identity from the previous string of statements.
\end{proof}

\begin{proposition}
\label{prop:efficient}
In the setting of Lemma \ref{lem:cov}, suppose further that $\hat\theta_n$ is an efficient estimator of $\theta^*$. Then the eigenvalues of B are all contained in $(-1,0)$.
\end{proposition}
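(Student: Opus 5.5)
The plan is to compute $B$ explicitly by combining Lemma \ref{lem:cov} with the form of an efficient influence function, and then to compare Fisher informations. If $\hat\theta_n$ is efficient, then its influence function is $\varphi_\theta(x;\theta^*)=I_X(\theta^*)^{-1}s_X(x;\theta^*)$, where $s_X$ is the score of $X$ and $I_X=\E[s_Xs_X^\top]$. Substituting this into Lemma \ref{lem:cov} gives
$$
B=-I_X^{-1}\E\left[s_X(X;\theta^*)s_{\Xt{1}}(\Xt{1};\theta^*)^\top\right].
$$
(The transpose convention in Lemma \ref{lem:cov} is read so that the dimensions match.)

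The second step is to identify the cross moment. Since $\Xt{1}=X+W$ and the law of $W$ does not depend on $\theta$, the score of $\Xt{1}$ is the conditional expectation of the score of $X$:
$$
s_{\Xt{1}}(\Xt{1};\theta)=\E_\theta[s_X(X;\theta)\mid \Xt{1}].
$$
To see this, differentiate $f_{\Xt{1}}(t;\theta)=\int f_X(x;\theta)f_R(t-x;0,\Sigma)\,dx$ under the integral sign and divide by $f_{\Xt{1}}$. This is the same interchange of derivative and integral already used in the proof of Lemma \ref{lem:cov}. By the tower property,
$$
\E[s_Xs_{\Xt{1}}^\top]=\E\big[\E[s_X\mid\Xt{1}]\,s_{\Xt{1}}^\top\big]=I_{\Xt{1}}.
$$
Hence $B=-I_X^{-1}I_{\Xt{1}}$.

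The third step is the eigenvalue argument. The matrix $I_X^{-1}I_{\Xt{1}}$ is similar to the symmetric matrix $M=I_X^{-1/2}I_{\Xt{1}}I_X^{-1/2}$, so its eigenvalues are real and equal those of $M$. The information decomposition reads
$$
I_X=I_{\Xt{1}}+\E\left[s_{X\mid\Xt{1}}\,s_{X\mid\Xt{1}}^\top\right],
$$
because $s_X=s_{\Xt{1}}+s_{X\mid\Xt{1}}$ and the two terms are uncorrelated, which is the same decomposition used in Lemma \ref{lem:cov}. It gives $0\preceq I_{\Xt{1}}\preceq I_X$, hence $0\preceq M\preceq I$. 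The eigenvalues of $B$ are therefore in $[-1,0]$.

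The main obstacle is upgrading this to the open interval $(-1,0)$, which needs two strict inequalities.
\begin{itemize}
\item For $I_{\Xt{1}}\succ 0$, I would argue by contradiction. If $v^\top s_{\Xt{1}}=0$ a.s., then the density of $\Xt{1}$, which is the convolution of $f_X$ with the fixed noise density, has zero derivative in direction $v$. For noise whose characteristic function vanishes only on a null set (e.g.\ Gaussian or Skellam), deconvolution is injective, so $v^\top\nabla_\theta f_X=0$. This contradicts the nonvanishing-gradient assumption carried over from Proposition \ref{prop:delta}, together with nonsingularity of $I_X$, which efficiency presupposes.
\item For $I_X-I_{\Xt{1}}\succ 0$, I need $v^\top s_{X\mid\Xt{1}}\neq 0$ with positive probability for every $v\neq 0$. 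Equivalently, $v^\top s_X$ must not be a.s.\ a function of $\Xt{1}$. Since $\Sigma\neq0$, $\Xt{1}$ does not determine $X$, and a nonconstant function of $X$ cannot be $\sigma(\Xt{1})$-measurable when $X$ has nondegenerate conditional spread given $\Xt{1}$.
\end{itemize}
I expect to need care, and possibly a mild regularity assumption on $R$, to make both strictness claims rigorous. The rest of the argument is algebraic.
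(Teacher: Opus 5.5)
Your proposal is correct and follows the paper's proof. Efficiency plus Lemma~\ref{lem:cov} gives $B=-\mathcal{I}_X^{-1}\E[s_X s_{\Xt{1}}^\top]$, and your identity $s_{\Xt{1}}=\E[s_X\mid \Xt{1}]$ is the paper's decomposition $s_X=s_{\Xt{1}}+s_{X|\Xt{1}}$ with orthogonal summands, which yields $B=-\mathcal{I}_X^{-1}\mathcal{I}_{\Xt{1}}$.

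The paper then only asserts that this ``ratio'' lies in $(0,1)$ for non-trivial noise. Your similarity argument with $\mathcal{I}_X^{-1/2}\mathcal{I}_{\Xt{1}}\mathcal{I}_X^{-1/2}$ is more careful than the source, and your caution on strictness is warranted. However, your claim that $\Sigma\neq 0$ prevents $\Xt{1}$ from determining $X$ fails for compactly supported noise: with discrete uniform noise on $\{-3,\dots,3\}$ and $X\in\{0,100\}$, $P(X=0)=\theta$, you get $\mathcal{I}_{\Xt{1}}=\mathcal{I}_X$ and an eigenvalue exactly $-1$. Full-support noise such as Gaussian or Skellam does work, because $v^\top s_X(X)=\psi(\Xt{1})$ a.s.\ then forces $v^\top s_X$ to be a.s.\ constant, hence zero.
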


\begin{proof}
Since $\hat\theta_n$ is efficient, $\varphi_\theta(x;\theta^*)=\mathcal{I}_X^{-1}(\theta^*)s_X(x;\theta^*)$ where $\mathcal{I}_X(\theta^*)$ is the information matrix. Then, using the expression in Lemma \ref{lem:cov},
\begin{align*}
B&=-\E_{\theta^*}[\varphi_\theta(X;\theta^*)s_{\Xt{1}}(\Xt{1};\theta^*)] \\
&=-\mathcal{I}_X^{-1}(\theta^*)\E_{\theta^*}[s_X(X;\theta^*)s_{\Xt{1}}(\Xt{1};\theta^*)] \\
&=-\mathcal{I}_X^{-1}(\theta^*)\E_{\theta^*}[\{s_{\Xt{1}}(\Xt{1};\theta^*)+s_{X|\Xt{1}}(X;\theta^*)\}s_{\Xt{1}}(\Xt{1};\theta^*)] \\
&=-\mathcal{I}_X^{-1}(\theta^*)\E_{\theta^*}[s_{\Xt{1}}(\Xt{1};\theta^*)s_{\Xt{1}}(\Xt{1};\theta^*)] \\
&=-\mathcal{I}_X^{-1}(\theta^*)\mathcal{I}_{\Xt{1}}(\theta^*),
\end{align*}
where in the second-to-last equality, $\E_{\theta^*}[s_{X|\Xt{1}}(X;\theta^*)s_{\Xt{1}}(\Xt{1};\theta^*)]=0$ by the orthogonality of $\Xt{1}$ and $X|\Xt{1}$, and the last equality follows from the definition of Fisher information.

The above shows that $B$ is the negative of the ratio of  $\mathcal{I}_X^{-1}(\theta^*)$ and $\mathcal{I}_{\Xt{1}}(\theta^*)$. This ratio is bounded between $0$ and $1$ under the assumption that the noise introduced to construct $\Xt{1}$ from $X$ is non-trivial (which can always be achieved since the distribution of $W$ is user-selected).
\end{proof}

\subsection{The effect of estimating $A(g)(I+B)^{-1}$}
\label{app:AB}

Recall that the characterization of $D_n(\xt{1},\xt{2};\Theta_0(\xt{1}))$, defined in \eqref{eq:Dn}, in Lemma \ref{lem:DnIF} depends on the constant $A(g)(I+B)^{-1}$ where $A(g)$ is defined in Corollary \ref{cor:delta} and $B$ is defined in \eqref{eq:B}. In practice, $A(g)(I+B)^{-1}$ is unknown and therefore must be estimated. Consider estimators $\hat A(g)$ and $\hat B$ that satisfy $\hat A(g) \overset{P}\to A(g)$ and $\hat B \overset{P}\to B$ so that $\hat A(g)(I+\hat B)^{-1}\overset{P}\to A(g)(I+B)^{-1}$. Here we show that in the setting of Theorem \ref{thm:Dn}, replacing $A(g)(I+B)^{-1}$ with $\hat A(g)(I+\hat B)^{-1}$ does not change the asymptotic distribution of $D_n(\xt{1},\xt{2};\Theta_0(\xt{1}))$. 

As in the previous section, let $b(\xt{1}_i;\theta)=\frac{\E_{U_i}[\varphi_\theta(U_i;\theta)f_{X}(U_i;\eta_i(\theta))]}{\E_{U_i}[f_{X}(U_i;\eta_i(\theta))]}$ where $U_i\sim R(\xt{1}_i,\Sigma)$, then consider 
$$
\tilde D_n(\xt{1},\xt{2};\Theta_0(\xt{1})) = C_n(\Xt{1},\Xt{2};\Theta_0(\Xt{1})) + 2\hat A(g)(I+\hat B)^{-1}P_n\left[b(\Xt{1};\hat\theta_n^{(-k)})\right].
$$
Observe that 
\begin{align*}
&\tilde D_n(\xt{1},\xt{2};\Theta_0(\xt{1})) \\
&=D_n(\Xt{1},\Xt{2};\Theta_0(\Xt{1})) + 2\left(\hat A(g)(I+\hat B)^{-1}-A(g)(I+B)^{-1}\right)P_n\left[b(\Xt{1};\hat\theta_n^{(-k)})\right] \\
&=D_n(\Xt{1},\Xt{2};\Theta_0(\Xt{1})) + 2\left(\hat A(g)(I+\hat B)^{-1}-A(g)(I+B)^{-1}\right)P_n\left[b(\Xt{1};\theta^*) +  B\varphi_\theta(X;\theta^*)\right] +o_P(n^{-1/2}),
\end{align*}
where the second equality follows from \eqref{eq:Pb2}.

Since $\hat A(g)(I+\hat B)^{-1}-A(g)(I+B)^{-1}=o_P(1)$ and $P_n\left[b(\Xt{1};\theta^*) +  B\varphi_\theta(X;\theta^*)\right]=O_p(n^{-1/2})$, it follows that 
$$
\tilde D_n(\xt{1},\xt{2};\Theta_0(\xt{1})) = D_n(\xt{1},\xt{2};\Theta_0(\xt{1})) + o_P(n^{-1/2}).
$$

In practice, we prefer to use the cross-fit plug-in estimators. That is, we define $\hat A_n^{(-k)}(g)$ and $\hat B_n^{(-k)}$ as Monte Carlo approximations of the integrals
$$
\int\left\{\E_{U_s}[U_s\nabla_\theta f_X(x;\eta_{i'}(\theta))^\top \mid_{\theta=\hat\theta_n^{(-k)}}] 
-\frac{N(s)}{D(s)}\E_{U_{s}}[\nabla_\theta f_X(x;\eta_{i'}(\theta))^\top \mid_{\theta=\hat\theta_n^{(-k)}}] \right\}\frac{g(s)}{D(s)} dP_{\xt{1};\hat\theta_n^{(-k)}}(s)
$$
and 
$$
\int\left(\nabla_\theta\frac{\E_{U_s}[\varphi_\theta(U_s;\theta)f_{X}(U_s;\eta_s(\theta))]}{\E_{U_s}[f_{X}(U_s;\eta_s(\theta))]}\Big{\vert}_{\theta=\hat\theta_n^{(-k)}}\right)^\top dP_{\xt{1};\hat\theta_n^{(-k)}}(s),
$$
respectively, where the notation $dP_{\xt{1};\hat\theta_n^{(-k)}}$ indicates that the integral is taken with respect to the measure implied by $\hat\theta_n^{(-k)}$.

\section{Additional details on the simulation studies}
\label{app:sims}

\subsection{Details for Section \ref{subsec:clustering}}
\label{app:cluster}

Algorithm \ref{alg:cluster} presents the details of conducting inference after clustering via orthogonalization.

\begin{algorithm}[Inference after clustering via orthogonalization]
\label{alg:cluster}
\textcolor{white}{.}

Input: Observed data $x_i$ drawn according to \eqref{eq:cluster}; a positive integer $c$; a clustering algorithm $\mathcal{M}$; a test function $g:\mathcal{X}^{(1)}\rightarrow\mathbb{R}$; and the number of cross-fitting folds $K$ (default: 5).  
\begin{enumerate}
    \item Construct $\xt{1}_i$ and $\xt{2}_i$ using Algorithm \ref{alg:split} with inputs $x_1,\dots,x_n$ and $R(\phi,\Sigma)\overset{D}{=} \text{DiscreteUniform}(\phi-c,\phi+c)$. 
    \item Apply the clustering algorithm $\mathcal{M}$ to $\xt{1}$ and extract two estimated clusters, $\widehat{\mathcal{C}}_1$ and $\widehat{\mathcal{C}}_2$. Define the selected hypothesis $H_0(\xt{1}):(\lambda_{ij},\pi_{ij})=(\lambda_{i'j},\pi_{i'j}),\quad\forall i,i'\in \widehat{\mathcal{C}}_1\cup\widehat{\mathcal{C}}_2,j=1,\dots,p$.
    \item Partition $\widehat{\mathcal{C}}_1\cup\widehat{\mathcal{C}}_2$ into $K$ folds, and let $\widehat{\mathcal{C}}^{(k)}$ denote the observations assigned to the $k$th fold.
    \item For $k=1,\dots,K$, let $\left(\hat\lambda_1^{(-k)},\dots,\hat\lambda_p^{(-k)},\hat\pi_1^{(-k)},\dots,\hat\pi_p^{(-k)}\right)$ denote the maximum likelihood estimate computed using the observations $X_i$ for $i\in \widehat{\mathcal{C}}_1\cup\widehat{\mathcal{C}}_2\setminus\widehat{\mathcal{C}}^{(k)}$; note that the likelihood implied by \eqref{eq:cluster} must be renormalized to account for the fact that the observations $X_i$ for $i\in \widehat{\mathcal{C}}_1\cup\widehat{\mathcal{C}}_2\setminus\widehat{\mathcal{C}}^{(k)}$ are restricted to a subset of $\mathcal{X}^{(1)}$. Then compute the conditional mean using \eqref{eq:hstar}; since the coordinates are independent and the data are discrete, all expectations can be computed exactly.
    \item Test the hypothesis $H_0'(\xt{1},g)$ defined in \eqref{eq:H0prime2} using $\vartheta_\alpha^ {\Theta_0(\xt{1})}\left(\Xt{1},\Xt{2}\right)$ defined in \eqref{eq:test2}. 
\end{enumerate}
\end{algorithm}

\subsection{Inference after changepoint detection simulation study}
\label{app:cp}

Given an ordered sequence $X_i$ for $i=1,\dots,n$, the goal of changepoint detection is to identify the indices at which some characteristic of the data-generating distribution changes. Many algorithms exist for detecting candidate changepoints; see \cite{truong2020selective} and \cite{fearnhead2020relating} for reviews. However, testing whether an identified changepoint represents a true change is challenging since estimated changepoints are data dependent; that is, the same data are used both to detect and test changepoints.

When the data are Gaussian, \cite{hyun2021post}, \cite{jewell2022testing}, and \cite{carrington2025improving} offer conditional selective inference procedures. Outside of the Gaussian setting, \cite{dharamshi2023generalized} apply data thinning to conduct valid inference after changepoint detection when the data belong to a suitable parametric family, and \cite{opss} propose order-preserved sample splitting in which odd-indexed observations are used to identify changepoints and even-indexed observations are used for inference. The latter approach has the disadvantage of rigidly assigning exactly half of the data for each step. Overall, the limited set of options outside of the Gaussian setting is problematic, as \cite{fearnhead2020relating} note that many algorithms overestimate the number of changepoints for non-Gaussian data.

In this section we use Algorithm \ref{alg:sel} to conduct inference after changepoint detection when $X_1,\dots,X_n$ are generated from a Student's $t$-distribution with $5$ degrees of freedom. Specifically, for $n\in\{200,500,1000\}$, we consider data generated according to 
\begin{equation}
\label{eq:cp}
X_i = \mu_i + \epsilon_i; \quad \epsilon_i \sim t_5,
\end{equation}
for $i=1,\dots,n$, where $\mu_i$ is the mean of observation $i$.

We begin with a ``null" setting in which $\mu_i=2$ for all $i=1,\dots,n$. Here $X_1,\dots,X_n$ are independent and identically distributed; there are thus no true changepoints in the sequence. For each setting of $n$, we draw $1,000$ replicates from \eqref{eq:cp}, then apply Algorithm \ref{alg:cp}, which is a variant of Algorithm \ref{alg:sel} specialized to the context of inference after changepoint detection. In particular, for each replicate, we apply Algorithm \ref{alg:cp} so that in Step 1, we decompose $x$ into $\xt{1}$ and $\xt{2}$ with Gaussian noise with variance $c\in\{0.5,1,2\}$; in Step 2, we use binary segmentation \citep{scott1974cluster} with a minimum segment length of 30 to estimate at most 4 changepoints; and in Step 5, we test $H_0(\xt{1})$ by testing the reformulated $H_0'(\xt{1},g)$ with two choices of $g$: (i) $g(\xt{1}_i)=I(i \in \widehat{\mathcal{S}}_{post})$ and (ii) $g(\xt{1}_i)=\xt{1}_iI(i\in\widehat{\mathcal{S}}_{pre}\cup\widehat{\mathcal{S}}_{post})$. Analogously to Section \ref{subsec:clustering}, we do not consider $g(\xt{1}_i)=I(i\in\widehat{\mathcal{S}}_{pre}\cup\widehat{\mathcal{S}}_{post})$ as it leads to degeneracy in $\varphi_D$ (see Remark \ref{rem:degenerate}).

\begin{algorithm}[Inference after changepoint detection via orthogonalization]
\label{alg:cp}
\textcolor{white}{.}

Input: Observed data $x_i$ drawn according to \eqref{eq:cp}; a positive real number $c$; a changepoint detection algorithm $\mathcal{M}$; a test function $g:\mathcal{X}^{(1)}\rightarrow\mathbb{R}$; the number of cross-fitting folds $K$ (default: 10); and the number of Monte Carlo replicates $B$ (default: 5,000).  
\begin{enumerate}
    \item Construct $\xt{1}_i$ and $\xt{2}_i$ using Algorithm \ref{alg:split} with inputs $x_1,\dots,x_n$ and $R(\phi,\Sigma)\overset{D}{=} N(0,c)$. 
    \item Apply the changepoint detection algorithm $\mathcal{M}$ to $\xt{1}$, and let $\hat\tau$ denote an estimated changepoint. Define the selected hypothesis $H_0(\xt{1}):\mu_i=\mu_{i'},\quad\forall i,i'\in \widehat{\mathcal{S}}_{pre}\cup\widehat{\mathcal{S}}_{post}$, where $\widehat{\mathcal{S}}_{pre}$ denotes the segment immediately preceding $\hat\tau$ and $\widehat{\mathcal{S}}_{post}$ denotes the segment immediately following $\hat\tau$.
    \item Partition $\widehat{\mathcal{S}}_{pre}\cup\widehat{\mathcal{S}}_{post}$ into $K$ folds, and let $\widehat{\mathcal{S}}^{(k)}$ denote the observations assigned to the $k$th fold.
    \item For $k=1,\dots,K$, let $\hat\mu^{(-k)}=\frac{1}{|\widehat{\mathcal{S}}_{pre}\cup\widehat{\mathcal{S}}_{post}\setminus \widehat{\mathcal{S}}^{(k)}|}\sum_{i\in\widehat{\mathcal{S}}_{pre}\cup\widehat{\mathcal{S}}_{post}\setminus \widehat{\mathcal{S}}^{(k)}}X_i$, then compute the conditional mean using \eqref{eq:hstar} with the Monte Carlo strategy outlined in Section \ref{subsubsec:ale} with $B$ replicates.
    \item Test the hypothesis $H_0'(\xt{1},g)$ defined in \eqref{eq:H0prime2} using $\vartheta_\alpha^ {\Theta_0(\xt{1})}\left(\Xt{1},\Xt{2}\right)$ defined in \eqref{eq:test2}. 
\end{enumerate}
\end{algorithm}

Figure \ref{fig:cp-t1e} displays the p-values for this experiment. Each panel corresponds to a value of $n$ and displays the empirical quantiles of the p-values against the quantiles of the uniform distribution; the choice of $g$ is indicated by colour and the choice of $c$ is indicated by line type. The Type I error rate is controlled in all settings.

\begin{figure}[h]
\centering
\includegraphics[width=0.9\linewidth]{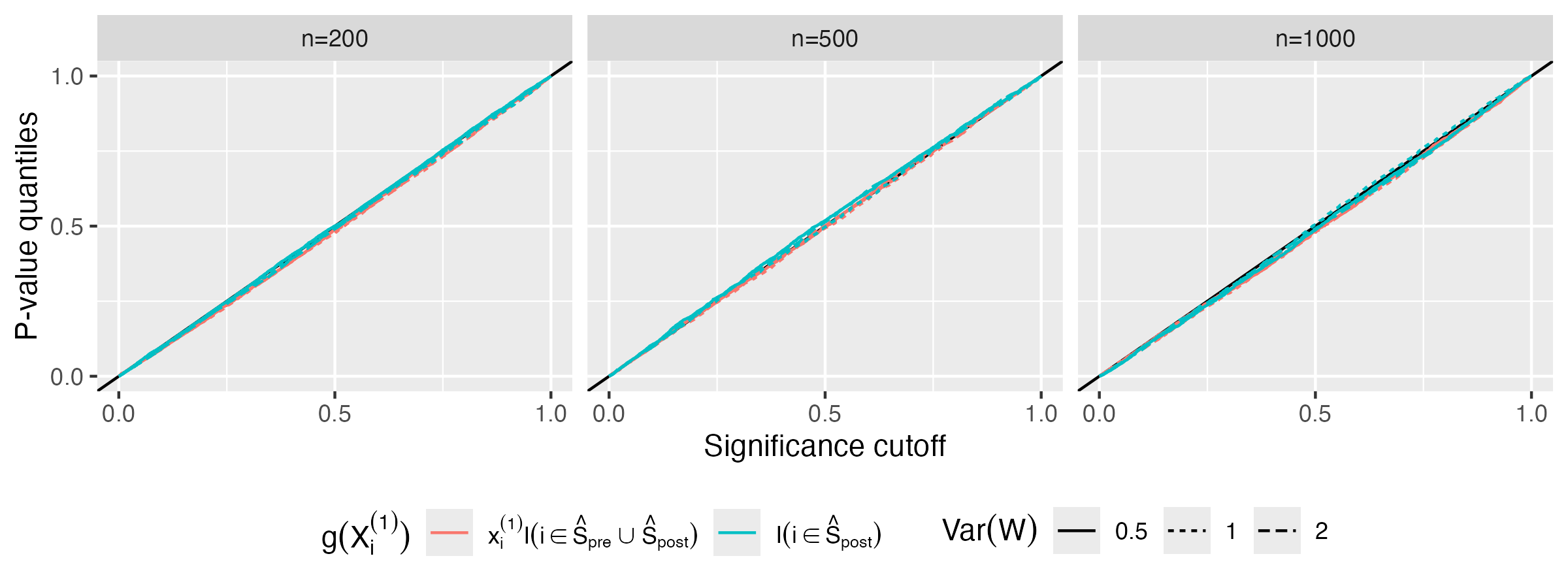}
\caption{Type I error results for the ``null" setting of the simulation described in Supplement \ref{app:cp}. Each panel displays a QQ-plot of the empirical quantiles of the observed p-values against the quantiles of a $\text{Uniform}(0,1)$ distribution. The Type I error rate is controlled on average, across realizations of $\Xt{1}$, for all settings of $n$ (indicated by panel), choices of test function $g$ (indicated by colour), and choices of noise variance $c$ (indicated by line type).} 
\label{fig:cp-t1e}
\end{figure}

Next, to assess the power of our approach, we consider an ``alternative" setting in which for all $i=1,\dots,n$, $\mu_i=2+d\cdot (2I(i \le n/2)-1)$ for $d\in\{0.25,0.5,\dots,2.5\}$. For each combination of $n$ and $d$, we draw 500 replicates of $X$ in \eqref{eq:cp}, then apply Algorithm \ref{alg:cp} as in the ``null" setting. Figure \ref{fig:cp-pow} displays the power of Algorithm \ref{alg:cp} as a function of the effect size, which is defined as $\Delta=\left|\frac{1}{|\widehat{\mathcal{S}}_{pre}|}\sum_{i\in\widehat{\mathcal{S}}_{pre}}\mu_i-\frac{1}{|\widehat{\mathcal{S}}_{post}|}\sum_{i\in\widehat{\mathcal{S}}_{post}}\mu_i\right|$. Each panel corresponds to a value of $n$ and displays a (smoothed) power curve for each choice of $g$ (indicated by colour) and choice of $c$ (indicated by line type). As in the clustering simulation in Section \ref{subsec:clustering}, power increases as a function of $n$ and $\Delta$. Power also increases with $c$; this is expected: as the amount of noise added to the selection fold $\Xt{1}$ increases, more information is reserved for inference \citep{neufeld2023data}. In contrast with Section \ref{subsec:clustering}, choosing $g$ to be an indicator for $\widehat{\mathcal{S}}_{post}$ has higher power than the value of $\xt{1}_i$ itself, suggesting that the optimal choice of $g$ is context-specific.

\begin{figure}[h]
\centering
\includegraphics[width=0.9\linewidth]{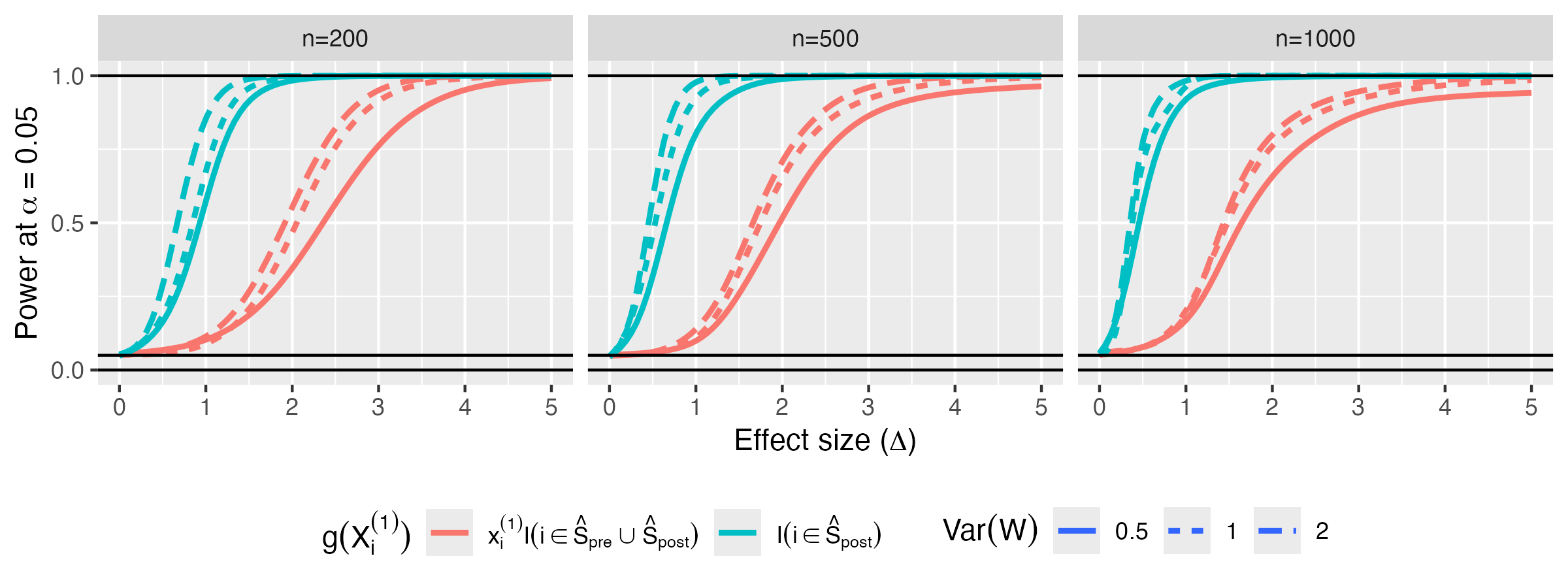}
\caption{Smoothed power results for the ``alternative" setting of the simulation described in Supplement \ref{app:cp}. Each panel displays six power curves as a function of the effect size $\Delta$ (defined in Supplement \ref{app:cp}), one for each combination of $g$ (indicated by colour) and $c$ (indicated by line type). Each panel corresponds to a setting of $n$. In all cases, power increases as a function of $c$, $n$, and $\Delta$; power is also greater for $g(\xt{1}_i)=I(i \in \widehat{\mathcal{S}}_{post})$ as compared to $g(\xt{1}_i)=\xt{1}_iI(i\in\widehat{\mathcal{S}}_{pre}\cup\widehat{\mathcal{S}}_{post})$.}
\label{fig:cp-pow}
\end{figure}

\end{document}